\newcommand{\argmin}{\operatornamewithlimits{arg\ min}}
\newcommand{\reals}{\mathbb{R}}
\newcommand{\posint}{\mathbb{Z}_{+}}
\renewcommand{\vec}[1]{\underline{#1}}
\newcommand{\bvec}[1]{\bm{#1}}
\newcommand{\mat}[1]{\bm{#1}}
\newcommand{\var}{{\normalfont \mathbb{V}\textrm{ar}}}
\newcommand{\varF}[1]{\var\left[ #1 \right] }
\newcommand{\VarF}[1]{\varF{#1}} 
\newcommand{\StDevF}[1]{{\normalfont \mathbb{V}\textrm{ar}^{1/2}\left[ #1 \right] }}
\newcommand{\cov}{{\normalfont \mathbb{C}\textrm{ov}}}
\newcommand{\covF}[1]{\cov\left[#1\right]}
\newcommand{\covm}{\bvec{C}}
\newcommand{\covv}{\bvec{c}}
\newcommand{\covvnorm}{\bvec{\bar{c}}}
\def\varspace{\Omega}
\newcommand{\nmodels}{M}
\newcommand{\nhf}{N}
\newcommand{\om}{z}
\newcommand{\sset}[1]{\bvec{\om}_{#1}}
\newcommand{\qoif}{Q} 
\newcommand{\qoi}{Q}   
\newcommand{\qspace}{\mathcal{Q}} 
\newcommand{\meanbase}{\mu}
\newcommand{\mean}[1]{\meanbase_{#1}}
\newcommand{\cv}[1]{\qoi_{#1}}
\newcommand{\est}[1]{\hat{\qoi}_{#1}}
\newcommand{\cvl}{{\normalfont \textrm{OCV}}}
\newcommand{\cvest}{\hat{\qoi}^{\cvl}}
\newcommand{\mll}{{\normalfont \textrm{W-RDiff}}}  
\newcommand{\mfmc}{{\normalfont \textrm{MFMC}}} 
\newcommand{\mlest}{\hat{\qoi}^{\mll}}
\newcommand{\mfmcest}{\hat{\qoi}^{\mfmc}}
\newcommand{\acvl}{{\normalfont \textrm{ACV}}}
\newcommand{\acvest}{\tilde{\qoi}}
\newcommand{\estm}[1]{\hat{\meanbase}_{#1}}
\newcommand{\rat}{r}
\newcommand{\cost}{w}
\newcommand{\cvw}{\alpha}
\newcommand{\optcvw}{\vec{\cvw}^{*}}
\newcommand{\mfmccvw}{\vec{\cvw}^{\mfmc}}
\newcommand{\optacvw}{\vec{\cvw}^{\acvl}}
\newcommand{\acvone}{{\normalfont \textrm{ACV-}IS}} 
\newcommand{\optacvone}{\vec{\cvw}^{\acvone}}
\newcommand{\acvoneest}{\est{}^{\acvone}}
\newcommand{\acvtwo}{{\normalfont \textrm{ACV-}MF}} 
\newcommand{\optacvtwo}{\vec{\cvw}^{\acvtwo}}
\newcommand{\acvtwoest}{\est{}^{\acvtwo}}
\newcommand{\acvtwoestK}{\est{K}^{\acvtwo}}
\newcommand{\Kacv}{{\normalfont \textrm{ACV-KL}}}
\newcommand{\optKacv}{\vec{\cvw}^{\Kacv}}
\newcommand{\Kacvest}{\est{}^{\Kacv}}
\newcommand{\cvdiff}[1]{\Delta_{#1}} 
\newcommand{\absr}[1]{r_{#1}}         
\newcommand{\rs}[1]{\tau_{#1}}        
\newcommand{\rcard}[1]{\eta_{#1}}        
\newcommand{\ccoeff}[1]{\rho_{#1}}  
\newcommand{\card}[1]{\normalfont \textrm{card}\!\left(#1\right)} 
\newcommand{\Diag}[1]{\normalfont \mathrm{diag}\left(#1\right)}
\newcommand{\covdiff}{\covF{\vec{\cvdiff{}},\vec{\cvdiff{}}}} 
\newcommand{\covdiffij}[2]{\covF{\cvdiff{#1},\cvdiff{#2}}} 
\newcommand{\covhc}{\covF{\vec{\cvdiff{}},\est{}}}              
\newcommand{\fmat}{{\bm{F}}}
\newcommand{\fmatone}{{\bm{F}^{(IS)}}}
\newcommand{\fmattwo}{{\bm{F}^{(MF)}}}
\newcommand{\eg}{\textit{e.g.,} }
\newcommand{\ie}{\textit{i.e.,} }
\title{A Generalized Approximate Control Variate Framework for Multifidelity Uncertainty Quantification}
\date{}
\author{Alex A. Gorodetsky\footnote{T\lowercase{he first two authors contributed equally to the concepts in this paper}}~\thanks{Department of Aerospace Engineering, University of Michigan, Ann Arbor, MI (\email{goroda@umich.edu})}
  \and Gianluca Geraci\footnotemark[1]~\thanks{Optimization and Uncertainty Quantification Department, Sandia National Laboratories, Albuquerque, NM, 87185-1318 (\email{{ggeraci,mseldre,jdjakem}@sandia.gov})}
  \and Mike Eldred\footnotemark[3]
  \and John D. Jakeman\footnotemark[3]}
\begin{document}
\maketitle
\begin{abstract}
  We describe and analyze a variance reduction approach for Monte Carlo (MC) sampling that accelerates the estimation of statistics of computationally expensive simulation models using an ensemble of models with lower cost. These lower cost models --- which are typically lower fidelity with unknown statistics --- are used to reduce the variance in statistical estimators relative to a MC estimator with equivalent cost. We derive the conditions under which our proposed approximate control variate framework recovers existing multi-model variance reduction schemes as special cases. We demonstrate that these existing strategies use recursive sampling strategies, and as a result, their maximum possible variance reduction is limited to that of a control variate algorithm that uses only a single low-fidelity model with known mean. This theoretical result holds regardless of the number of low-fidelity models and/or samples used to build the estimator. We then derive new sampling strategies within our framework  that circumvent this limitation to make efficient use of all available information sources. In particular, we demonstrate that a significant gap can exist, of orders of magnitude in some cases, between the variance reduction achievable by using a single low-fidelity model and our non-recursive approach. We also present initial sample allocation approaches for exploiting this gap. They yield the greatest benefit when augmenting the high-fidelity model evaluations is impractical because, for instance, they arise from a legacy database. Several analytic examples and an example with a hyperbolic PDE describing elastic wave propagation in heterogeneous media are used to illustrate the main features of the methodology.  

\end{abstract}

\section{Introduction}

Numerical evaluation of integrals is a foundational aspect of mathematics that has impact on diverse areas such as finance, uncertainty quantification, stochastic programming, and many others. Monte Carlo (MC) sampling is arguably the most robust means of estimating such integrals and can be easily applied to arbitrary integration domains and measures. The MC estimate of an integral is unbiased, and its rate of convergence is independent of the number of variables and the smoothness of the integrand.

Nevertheless, obtaining a moderately accurate estimate of an integral with MC is computationally intractable for integrands that are expensive to evaluate, \eg those arising from a high-fidelity simulation.
This intractability arises because the variance of a MC estimator is proportional to the ratio of the variance of the integrand and inversely proportional to the number of samples used. 
As such, techniques that retain the benefits of MC estimation while reducing its variance are important for extending the applicability of these sampling-based approaches.

Control variates (CV) are a class of such techniques that have a long history of reducing MC variance by introducing additional estimators that are correlated with the MC estimator~\cite{Lavenberg1978,Lavenberg1981,Lavenberg1982,Hesterberg1996}. The use of CV methods has recently seen a resurgence for uncertainty quantification (UQ) problems where the integrands are computationally expensive to evaluate. In these cases, CV approaches can leverage multiple 
simulation models to accelerate the convergence of statistics for both forward~\cite{Ng2013,Goh2013,Peherstorfer2016,Doostan2016,Geraci2017} and inverse~\cite{Baker2018} UQ. 
These additional simulation models arise from either 
different sets of equations (\ie the {\em multifidelity} case of differing model forms)
and/or from varying temporal and spatial discretizations (\ie the {\em multilevel} case of differing numerical resolutions for the same set of equations).  The model ensemble could include reduced-order models~\cite{Roderick2014}, dimension-reduction or surrogate models~\cite{Padron2014} (\eg active subspace approximations), and even data from physical experiments~\cite{Kuya2011}. 
Multiple conceptual dimensions can exist within a modeling hierarchy, leading to multi-index constructions~\cite{Haji2016} in the case of independent resolution controls. Finally, both multi-physics and multi-scale simulations can contribute additional combinatorial richness to the associated modeling ensemble.

Traditional CV methods~\cite{Lavenberg1978} require explicit knowledge of the statistics (for instance the expected value) of their approximate information sources. However, these estimates are frequently unavailable \textit{a priori} in the UQ simulation-based context. Consequently, CV methods must be modified to balance the computational cost of evaluating lower fidelity models and the reduction in error that they each provide.
There exist several strategies that explicitely pursue the goal of estimating the unknown expected values~\cite{Schmeiser2001,Emserman2002,Pasupathy2012,Ng2014} within a control variate framework; however the analysis of these approaches is limited to the case of a single control variate only. As a result, they do not consider how ensembles of low-fidelity information sources could be effectively used to improve variance reduction. Once ensembles are considered two additional questions arise: (1) how are the relationships among the low-fidelity and the high-fidelities simulations represented and (2) how are resources distributed among each model in the ensemble? A large number of algorithms have arisen to address these issues, each with their own assumptions. For instance, when the low-fidelity models arise from a hierarchy of discretization levels, multilevel Monte Carlo (MLMC) and multi-index Monte Carlo (MIMC) approaches have been developed~\cite{Giles2008,Haji2016}. These algorithms use a recursive difference estimator~\cite{Owen2013} within a discretization adaptation scheme to minimize the mean squared error of a targeted statistic. When the variance of the recursive differences decays across levels, significant variance and MSE reduction can be achieved. More general relationships between low-fidelity simulation models have led to the development of multifidelity Monte Carlo~\cite{Peherstorfer2016b} and hybrid MLMC-control variate schemes~\cite{Nobile2015,Fairbanks2017,Geraci2015,Geraci2017}. However, all of these algorithms are limited in how they model the relationships among low-fidelity information sources and are reliant on recursive strategies for sample/resource allocation.

In this paper, we present a framework that employs CV techniques to generate a minimum variance estimator for a given computational budget; this framework is generalized in that no simplified model dependency structure (e.g., hierarchy) is assumed. 

The effectiveness of CV approaches for reducing estimator variance is related to the correlation between the high and low-fidelity models and the relative computational expense of simulating the lower-fidelity models. Two specific cases of interest include: {\em (1)} the number of high fidelity simulations is fixed 
and we can only allocate samples among the low-fidelity models, and {\em (2)} we have the freedom to allocate samples among all models.  Note that the first case can arise in realistic applications 
when the high-fidelity evaluations are obtained from legacy data, correspond to a particular experimental or reference simulation campaign, or for which additional runs are impractical or no longer possible for various reasons.
This work extends theory for multiple control variates without an assumed dependency structure to the case of unknown control-variate means.  The primary contributions include:
\begin{enumerate}
\item New theoretical results demonstrating that recursive sampling allocation strategies limit the maximum attainable variance reduction;
\item A new framework for approximate control variates algorithms --- those that use control variates with unknown statistics --- that guarantees convergence to that of the traditional control variate; and  
\item A sample allocation strategy for optimal variance reduction for a fixed computational budget.
\end{enumerate}

The recursive algorithms alluded to in the first contribution correspond to algorithms such as the multifidelity Monte Carlo (MFMC)~\cite{Peherstorfer2016b} and the recursive difference estimator~\cite{Owen2013}. Both of these recursive algorithms are discussed in detail in the following sections. We will show that our new framework is able to achieve improved variance reduction compared to these existing approaches. We prove this benefit theoretically, and we also demonstrate it numerically using a preliminary optimization-based sample allocation strategy. In other words, the primary focus in this paper is on estimator %
design, and an initial sketch of possible optimization strategies for sample allocation is provided.

The theoretical contributions are justified by three main theorems, whose implications are summarized below
\begin{itemize}
\item Theorem~\ref{th:MFMC_varbound}: the variance reduction of MFMC is limited to that of a CV estimator that only uses a single low-fidelity model.
\item Theorem~\ref{th:MLMC_varbound}: the variance reduction of the recursive difference estimator is limited to that of a CV estimator that only uses a single low-fidelity model.
\item Theorem~\ref{th:ACV}: an approximate control variate scheme can be devised to converge to the optimal linear control variate that uses multiple low-fidelity models.
\end{itemize}

Finally, we note that this work considers \textit{variance reduction} of a Monte Carlo estimator, as distinguished from \textit{mean-squared-error reduction} with respect to some unknown ``truth'' model. That is, we consider the highest-fidelity model to be most accurate and our aim is to construct an unbiased and reduced-variance estimate of the statistics of this model using lower-fidelity models. 
  This differs from %
multilevel~\cite{Giles2008,Giles2015} and multi-index~\cite{Haji2016} Monte Carlo which can target both variance and bias reduction, although %
in practice this requires a case where the bias of the high-fidelity model is both significant and under the user's control (e.g., it is not already at the boundary of what is practical to simulate repeatedly on a high-performance computer).  If both variance and bias control are desired/practical,
  our proposed approaches can be used within the inner loop step of MLMC-type schemes that combine model adaptation for bias reduction (e.g., grid refinement for the high-fidelity) with variance reduction approaches. In the inner loop of such schemes, see e.g.~\cite{Cliffe2011}, a minimum variance multi-model estimator is required, and we envision the incorporation of our proposed approaches within such an adaptive context in the future.

The remainder of the paper is structured as follows: in \S\ref{sec:problem}, we describe a unifying framework for multi-model control variates approaches and then identify existing recursive estimators within this general formalism; in \S\ref{sec:acv}, we develop a set of new approximate control variates estimators; and in \S\ref{sec:numexamples}, we demonstrate our approaches for several numerical test cases.

\section{A unifying approximate control variate framework}\label{sec:problem}
In this section we provide background on CV methods and present a framework for estimating integrals using CV schemes with unknown mean values. We then show how existing recursive methodologies 
are 
special cases of this framework.

Let $(\varspace, \mathcal{F}, P)$ be a probability space and 
Let 
$\qoif:\reals^d \to \qspace \subset \reals$ denote a mapping from a vector of inputs to a scalar-valued output. This mapping is refered to as the ``high-fidelity model'' because it represents the stochastic process whose statistics we desire to estimate. Our goal is estimate \(\mean{}=\mathbb{E}\left[\qoif\right]\) 
using a set of samples $\sset{} = (\sset{}^{(1)},\ldots,\sset{}^{(\nhf)})$ of the input random variables. The MC estimate of the mean
\begin{equation}
\est{}(\sset{})=\nhf^{-1}\sum_{i=1}^\nhf \qoi\left(\sset{}^{(i)}\right)
  \label{eq:mc_mean_estimator}
\end{equation}
is unbiased, that is $\mathbb{E}[\est{}]=\mathbb{E}[Q]$, and converges a.s. For $Q$ with bounded variance, the Central Limit Theorem implies that the error in the estimate becomes normally distributed with variance $\nhf^{-1}\varF{Q}$, as $\nhf\rightarrow\infty$, where $\varF{Q}$ denotes the variance of $Q$.

\subsection{Traditional control variate estimation}
The MC-based\footnote{Control-variate-based variance reduction algorithms can certainly be performed with other sampling-based and non-sampling-based estimators. In this paper we focus on the common case of Monte Carlo estimators.} linear\footnote{In the rest of the paper, we drop the term ``linear'' as all schemes that are studied will be linear. For reference, nonlinear control variates are those that may have polynomial corrections of the form $\cvest = \est{} + \sum_{i=1}^{p} \alpha_i \left(\est{1} - \mean{1}\right)^{i}$.} CV algorithm seeks a new estimator $\cvest$ that has smaller variance than $\est{}$ while still requiring only $\nhf$ evaluations of $\qoif$. The algorithm introduces an additional random variable \(\cv{1}\) with known mean \(\mean{1}\). Then it requires computing an estimator $\est{1}$ of $\mean{1}$ using the same samples that were used for $\est{}$. Finally, these quantities are assembled into the following estimator
\begin{equation}
  \cvest{}(\cvw, \sset{}) = \est{}(\sset{}) + \alpha \left( \est{1}(\sset{}) - \mean{1} \right),
  \label{eq:linear_cv}
\end{equation}
for some scalar \(\alpha\). As we will see shortly, the variance of $\cvest$ is strongly influenced by the correlation between $\qoi$ and $\cv{1}$, and greater variance reduction is achieved for higher correlation. 
To this end we will call $\est{1}$ the correlated mean estimator (CME), and we will refer to $\mean{1}$ as the control variate mean (CVM). This approach can be extended to $\nmodels$ information sources $\left(\qoif_i: \reals^d \to \qspace_i \subset \reals\right)_{i=1}^{\nmodels}$, see e.g., \cite{Lavenberg1978,Lavenberg1981,Lavenberg1982}, using
\begin{equation}\label{eq:lcv}
  \cvest{}(\vec{\cvw}, \sset{}) = \est{}(\sset{}) + \sum_{i=1}^{\nmodels}\cvw_i\left(\est{i}(\sset{}) - \mean{i}\right),
\end{equation}
where the estimator now uses a vector of CV weights $\vec{\cvw} \equiv (\cvw_1,\ldots,\cvw_\nmodels)$.  CV methods can also estimate multiple quantities of interest, e.g. \cite{Rubinstein1985,Venkatraman1986}; for simplicity, here we only consider a scalar $\qoif$.

One of the measures of the effectiveness of a CV scheme is measured 
by the \textit{variance reduction ratio}:
\begin{equation} \label{eq:varrat}
  \gamma^{\cvl}(\vec{\cvw}) \equiv \frac{\varF{\cvest(\vec{\cvw},\sset{})}}{\varF{\est{}(\sset{})}}. 
\end{equation}
The optimal  CV (OCV) uses weights that minimize this variance:
\begin{equation}\label{eq:opt}
  \optcvw = \argmin_{\vec{\cvw}} \gamma^{\cvl}(\vec{\cvw}) = \argmin_{\vec{\cvw}} \varF{\cvest (\vec{\cvw}, \sset{})}.
\end{equation}
Following~\cite{Lavenberg1981}, let $\covm \in \reals^{\nmodels \times \nmodels}$ denote the covariance matrix among $\qoi_i$ and $\covv \in \reals^{\nmodels}$ denote the vector of covariances between $\qoi$ and each $\qoi_i$. The solution to Equation~\eqref{eq:opt} is $\optcvw = -\covm^{-1}\covv$. If we further define \[\covvnorm = \covv/\StDevF{\qoi} = \left[\ccoeff{1} \StDevF{\cv{1}}, \dots,  \ccoeff{\nmodels} \StDevF{\cv{\nmodels}} \right]^{\mathrm{T}},\] where $\ccoeff{i}$ is the Pearson correlation coefficient between $\qoi$ and $\cv{i}$, then the variance reduction becomes
\begin{equation}\label{eq:optcv_cov}
  \gamma^{\cvl}(\optcvw) = 1 - R_{\cvl}^2, \quad 0 \leq R_{\cvl}^2 \leq 1,
\end{equation}
where $R_{\cvl}^2 = \covvnorm^T \covm^{-1}\covvnorm$.  No variance reduction with respect to MC is achieved when $R_{\cvl}^2 = 0$, and maximum reduction occurs when $R_{\cvl}^2 = 1$. For a single low-fidelity information source, this quantity simplifies to $R_{\cvl-1}^2 = \ccoeff{1}^2$.

\subsection{Approximate control variate estimation}\label{sec:approximate_cv_est}
Traditional control variate estimation assumes that the means $\mean{i}$ are known. In our motivating problem of multifidelity uncertainty quantification, these means are not known, but rather must be estimated from lower fidelity simulations. 
As in~\eqref{eq:mc_mean_estimator}, $\sset{}$ denotes a set of samples used to evaluate the high-fidelity function $\qoi$. Now let $\sset{i}$ denote a 
set of $\nhf_i$ samples partitioned into two ordered subsets  $\sset{i}^1 \subset \sset{i}$ and $\sset{i}^2 \subset \sset{i}$ such that $\sset{i}^1 \cup \sset{i}^2 = \sset{i}$. Note that $\sset{i}^1$ and $\sset{i}^2$ are not required to be disjoint, \ie they may overlap such that $\sset{i}^1 \cap \sset{i}^2 \neq \emptyset$. We will construct and analyze approximate control variate (ACV) estimators of the following form
\begin{align}
  \acvest(\vec{\cvw},\vec{\sset{}}) &= \est{}(\sset{}) + \sum_{i=1}^{\nmodels}\cvw_i\left(\est{i}\left(\sset{i}^{1}\right) - \estm{i}\left(\sset{i}^{2}\right)\right)  
  = \est{}(\sset{}) + \sum_{i=1}^{\nmodels} \cvw_i \cvdiff{i}(\sset{i})
  = \est{} + \vec{\cvw}^T \vec{\cvdiff{}}, \label{eq:acv}
\end{align}
where $\vec{\cvdiff{}} = (\cvdiff{1},\ldots,\cvdiff{\nmodels})$, $\cvdiff{i}(\sset{i}) = \est{i}\left(\sset{i}^{1}\right) - \estm{i}\left(\sset{i}^{2}\right)$ and we use $\vec{\sset{}} \equiv (\sset{},\sset{1},\ldots,\sset{\nmodels})$ to denote the input values to the model set: $\sset{}$ for the high-fidelity model and $\sset{i}$ for each of the low-fidelity models, $i=1,\ldots,\nmodels$. The estimate $\estm{i}$ of the CVM $\mu_i$ will be called the estimated control variate mean (ECVM). \textit{Many of the estimators in the literature, and those that we derive in this paper, are differentiated by how the samples $\sset{i}$ are related among the random variables $\cvdiff{i}$ for $i=1,\dots,\nmodels$ and how the subsets $\sset{i}^1$ and $\sset{i}^2$ are defined 
within each random variable $\cvdiff{i}$.} It will be useful to consider ordered sets of samples such that $\sset{i} = [\sset{i}^1, \sset{i}^2] = [\sset{i}^{1(1)},\ldots,\sset{i}^{1(L_1)}, \sset{i}^{2(1)},\ldots, \sset{i}^{2(L_2)}]$ 
for some $L_1,L_2 >0$

Because we now employ extra samples to estimate $\mu_i$, we must account for the additional cost this imposes. Let $\nhf$ denote the number of %
realizations of $Q$ and $$\nhf_{i} = \lceil \rat_{i} \nhf \rceil$$ denote the 
number of %
evaluations of $\cv{i}$ for scaling factor $\rat_{i} \in \reals_{+}$. Let $\cost_i<1$ denote the ratio between the cost of a single realization of $\cv{i}$ and the cost of obtaining a realization of the high-fidelity $Q$. Then the cost of the ACV estimator in equivalent high-fidelity evaluations is
\begin{align}
w_\mathrm{ACV} = \nhf + \sum_{i=1}^\nmodels \nhf_i \cost_i.
\end{align}

For fixed $\vec{\cvw}$, the estimator $\acvest$ is unbiased when each component, $\est{}$, $\est{i}$, and $\estm{i}$ for $i=1,\ldots,\nmodels$ is unbiased. The variance of $\acvest$ is given in the following proposition. 
\begin{proposition}[Variance of the ACV estimator]\label{prop:acv_var}
   The variance of the ACV estimator~\eqref{eq:acv} for fixed $\vec{\cvw}$ is
  \begin{equation} \label{eq:acv_var}
    \varF{\acvest} = \varF{\est{}}\left( 1 + \vec{\cvw}^T \frac{\covdiff}{\varF{\est{}}} \vec{\cvw} + 2 \vec{\cvw}^T \frac{\covhc}{\varF{\est{}}}\right)  \textrm{ where } \vec{\cvdiff{}} = (\cvdiff{1},\ldots,\cvdiff{\nmodels}).
  \end{equation}
\end{proposition}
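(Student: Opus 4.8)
The plan is to treat the estimator in its compact linear form $\acvest = \est{} + \vec{\cvw}^T \vec{\cvdiff{}}$ from~\eqref{eq:acv} and to compute its variance directly by expanding the variance of a sum of two correlated random quantities, holding the weight vector $\vec{\cvw}$ fixed so that its entries act as deterministic scalars that factor out of all second moments.

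First I would write $\varF{\acvest} = \varF{\est{}+\vec{\cvw}^T\vec{\cvdiff{}}}$ and apply the elementary identity $\varF{X+Y} = \varF{X} + \varF{Y} + 2\covF{X,Y}$ with $X = \est{}$ and $Y = \vec{\cvw}^T \vec{\cvdiff{}}$. This produces three contributions: the base variance $\varF{\est{}}$, the variance of the weighted difference vector $\varF{\vec{\cvw}^T\vec{\cvdiff{}}}$, and a cross term $2\covF{\est{},\vec{\cvw}^T\vec{\cvdiff{}}}$.

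Next I would reduce the latter two contributions using the bilinearity of covariance together with the fact that $\vec{\cvw}$ is deterministic. For the quadratic term, $\varF{\vec{\cvw}^T\vec{\cvdiff{}}} = \sum_{i,j}\cvw_i\cvw_j\covF{\cvdiff{i},\cvdiff{j}} = \vec{\cvw}^T\covdiff\vec{\cvw}$, recognising $\covdiff$ as the covariance matrix of the difference vector $\vec{\cvdiff{}}$. For the cross term, $\covF{\est{},\vec{\cvw}^T\vec{\cvdiff{}}} = \sum_i\cvw_i\covF{\cvdiff{i},\est{}} = \vec{\cvw}^T\covhc$. Collecting these gives $\varF{\acvest} = \varF{\est{}} + \vec{\cvw}^T\covdiff\vec{\cvw} + 2\vec{\cvw}^T\covhc$, and factoring $\varF{\est{}}$ out of all three terms yields the claimed expression~\eqref{eq:acv_var}.

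There is essentially no hard analytic obstacle here; the entire argument is bilinearity of second moments applied to a linear estimator with fixed coefficients. The only point that genuinely requires care — and the reason the cross term $\covhc$ and the off-diagonal entries of $\covdiff$ must be retained rather than dropped — is that the sample subsets $\sset{i}^1$ and $\sset{i}^2$ are permitted to overlap both with one another and with the high-fidelity sample set $\sset{}$. Consequently $\est{}$ and the differences $\cvdiff{i}$ are in general correlated, and no independence among the component estimators $\est{}$, $\est{i}$, and $\estm{i}$ can be invoked to simplify the expression. I would therefore state explicitly that the derivation assumes nothing about the sampling structure; this is precisely the generality that later allows the single formula to specialize to the various recursive and non-recursive schemes analysed in the remainder of the paper.
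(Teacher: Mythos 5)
Your proposal is correct and follows essentially the same route as the paper's own proof: expand $\varF{\est{} + \vec{\cvw}^T\vec{\cvdiff{}}}$ via the variance of a sum, use bilinearity of covariance to identify the quadratic term $\vec{\cvw}^T\covdiff\vec{\cvw}$ and the cross term $2\vec{\cvw}^T\covhc$, then factor out $\varF{\est{}}$. Your added remark that the possible overlap of the sample sets is exactly why no independence simplification is available is a sound observation that the paper leaves implicit, but it does not change the argument.
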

\begin{proof}
  This result follows from the variance of MC estimators, the variance of scaled random variables, and the variance of sums of random variables. We have
  \begin{align*}
    \varF{\acvest(\vec{\cvw},\vec{\sset{}})} &= \var \left[\est{} + \vec{\cvw}^T \vec{\cvdiff{}} \right]
    = \varF{\est{}} + \vec{\cvw}^T \cov[\vec{\cvdiff{}}, \vec{\cvdiff{}}] \vec{\cvw} + 2 \vec{\cvw}^T \covhc \\
    &= \var[\est{}]\left( 1 + \vec{\cvw}^T \frac{\cov[\vec{\cvdiff{}}, \vec{\cvdiff{}}]}{\var[\est{}]} \vec{\cvw} + 2 \vec{\cvw}^T \frac{\cov[\vec{\cvdiff{}}, \est{}]}{\var[\est{}]}\right),
  \end{align*}
  where the middle term on the first line is the definition, the last term follows from the variance of sums of random variables, and the second line factors out the baseline estimator variance.
\end{proof}

Therefore, the variance ratio of the approximate control variate (ACV) estimator is
\begin{equation}
  \gamma^{\acvl}(\vec{\cvw}) = 1 + \vec{\cvw}^T \frac{\covdiff}{\varF{\est{}}} ~\vec{\cvw} + 2 \vec{\cvw}^T \frac{\covhc}{\varF{\est{}}}.
\end{equation}
The optimal approximate CV estimator consists of the $\vec{\cvw}$ which minimizes this ratio. The properties of this optimal estimator are given in the following proposition.
\begin{proposition}[Optimal ACV]\label{prop:ocv}
  Assume $\covF{\cvdiff{},\cvdiff{}}$ is positive definite and $\varF{\est{}}>0$. The ACV weight that provides the greatest variance reduction for the approximate CV Equation~\eqref{eq:acv} is given by
  \begin{equation}\label{eq:opt_acv_weights}
      \optacvw = \argmin_{\vec{\cvw}} \gamma^{\acvl}(\vec{\cvw}) =  -\covdiff^{-1}\covhc
  \end{equation}
  with corresponding estimator variance
  \begin{equation} \label{eq:ocv_var}
      \varF{\acvest(\optacvw,\vec{\sset{}})} = \varF{\est{}}\left( 1 - R_{\acvl}^2\right) \textrm{ where } R_{\acvl}^2 = \covhc^T \frac{\covdiff^{-1}}{\varF{\est{}}}\covhc.
  \end{equation} 
\end{proposition}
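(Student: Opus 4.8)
The plan is to treat the minimization of $\gamma^{\acvl}(\vec{\cvw})$ as an unconstrained convex-quadratic problem in the weight vector $\vec{\cvw}$, exploiting the positive-definiteness hypothesis to guarantee both a unique minimizer and the invertibility needed to write it in closed form. First I would observe that $\gamma^{\acvl}$ is a quadratic form in $\vec{\cvw}$ with Hessian $2\,\covdiff/\varF{\est{}}$; since $\covdiff$ is positive definite and $\varF{\est{}}>0$, this Hessian is positive definite, so $\gamma^{\acvl}$ is strictly convex and any stationary point is automatically the unique global minimizer. This reduces the problem to locating a single stationary point.

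Next I would compute the gradient with respect to $\vec{\cvw}$, using that $\covdiff$ is symmetric (it is a covariance matrix). The gradient is $2\,\covdiff\,\vec{\cvw}/\varF{\est{}} + 2\,\covhc/\varF{\est{}}$, and setting it to zero gives the first-order condition $\covdiff\,\vec{\cvw} = -\covhc$. Invertibility of $\covdiff$, again from positive-definiteness, then yields $\optacvw = -\covdiff^{-1}\covhc$ directly, with the common factor $\varF{\est{}}$ cancelling.

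The final step is to substitute $\optacvw$ back into $\gamma^{\acvl}$. The quadratic term contributes $\covhc^T\covdiff^{-1}\covhc/\varF{\est{}}$ after the cancellation $\covdiff^{-1}\covdiff = I$, while the linear term $2\,(\optacvw)^T\covhc/\varF{\est{}}$ contributes $-2$ times the same scalar; adding these to the leading $1$ collapses the cross terms to give $\gamma^{\acvl}(\optacvw) = 1 - R_{\acvl}^2$ with $R_{\acvl}^2 = \covhc^T(\covdiff^{-1}/\varF{\est{}})\covhc$. Multiplying through by $\varF{\est{}}$ recovers the stated variance $\varF{\acvest(\optacvw,\vec{\sset{}})} = \varF{\est{}}(1 - R_{\acvl}^2)$.

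I expect no substantive obstacle, as this is a textbook convex-quadratic minimization; the only point requiring care is tracking the symmetry $(\covdiff^{-1})^T = \covdiff^{-1}$ both when differentiating and when simplifying the quadratic term after back-substitution. A stray transpose or sign there would spoil the clean cancellation between the quadratic and linear contributions that produces the compact form $1 - R_{\acvl}^2$, so verifying that cancellation is the one computation I would check explicitly rather than wave through.
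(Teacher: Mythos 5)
Your proposal is correct and follows essentially the same route as the paper's proof: compute the gradient of the quadratic $\gamma^{\acvl}$, set it to zero to obtain $\optacvw = -\covdiff^{-1}\covhc$, and substitute back to collapse the quadratic and linear terms into $1 - R_{\acvl}^2$. Your explicit appeal to strict convexity (via the positive-definite Hessian) to guarantee that the stationary point is the unique global minimizer is a small but worthwhile tightening of the paper's brief justification.
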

\begin{proof}
  The variance reduction $\gamma^{\acvl}$ is quadratic in $\vec{\cvw}$, is non-negative, 
and has an extremum found by setting the gradient to zero
  \begin{align*}
    0 &=  \frac{\covdiff}{\var[\est{}]} \optacvw + \frac{\covhc}{\varF{\est{}}}  
    \implies \optacvw = -\covdiff^{-1}\covhc.
  \end{align*}
  If we substitute this weight into Equation~\eqref{eq:acv_var}, we obtain the stated result.
\end{proof}

Next, we use these results to derive existing estimators that use different sampling strategies for  computing $\covdiff$ and $\covhc$. These estimators can be broadly grouped as recursive nested  (in the case of MFMC \cite{Peherstorfer2016}) and recursive difference (in the case of MLMC-like \cite{Giles2008,Giles2015} sampling allocations).

\subsection{Recursive nested estimators}\label{sec:mfmc}
In this section, we describe a sampling strategy that has a recursive and nested structure. This sampling structure requires that both the CME and ECVM use the same samples as that of all higher fidelity models, and the ECVM augments this set with additional samples. The most prominant example of such a sampling strategy is that of the multifidelity Monte Carlo (MFMC) estimator~\cite{Peherstorfer2016b,Ng2013,Pasupathy2012,Peherstorfer2016}.
We begin by deriving a new variance reduction expression for MFMC, an alternative to~\cite[Lemma 3.3]{Peherstorfer2016b}, that relates the correlation of the CVs to the variance reduction. Then we use this expression to demonstrate that MFMC is a recursive estimator. Finally, due to this recursive nature, we show that the maximum variance reduction provided by this estimator is limited to that provided by the single optimal CV, regardless of how many samples are used to evaluate $\cv{i}$.

The MFMC estimator can be obtained from the following recursive procedure. Samples $\sset{}$ are used for $\est{}$ and for the CME ($\est{1}$) of $\cv{1}$. Then, an enriched set of samples $\sset{1}$ are used for the ECVM ($\estm{1}$) of $\cv{1}$ to obtain 
\begin{equation}
 \acvest\left(\cvw,\sset{},\sset{1}\right) = \est{}(\sset{}) + \cvw(\est{1}(\sset{}) - \estm{1}(\sset{1})),
\end{equation}
where, recalling our sample partition definitions in~\eqref{eq:acv}, we have chosen $\sset{1}^1 = \sset{}$ and $\sset{1}^2 = \sset{1}$, \ie $\sset{1}$ consists of the original $\sset{}$ samples along with an additional set. 

Now we introduce another CV to reduce the variance of the $\estm{1}$ estimate
\begin{align*}
  \acvest\left(\cvw,\tilde{\cvw},\sset{},\sset{1},\sset{2}\right)   &= \est{}(\sset{}) + \cvw(\est{1}(\sset{1}^1) - \left( \estm{1}(\sset{1}) + \tilde{\cvw}\left(\est{2}(\sset{1}) - \estm{2}(\sset{2})\right) \right) \\
  &= \est{}(\sset{}) + \cvw(\est{1}(\sset{1}^1) - \estm{1}(\sset{1})) - \cvw\tilde{\cvw}\left(\est{2}(\sset{1}) - \estm{2}(\sset{2})\right)
\end{align*}
where we have similarly assigned partitions $\sset{2}^1 = \sset{1}$ and $\sset{2}^2 = \sset{2}.$

After collapsing $\alpha$-products, this pattern produces the MFMC estimator
\begin{equation}\label{eq:MFMC_estimator}
  \begin{split}
  \mfmcest\left(\vec{\cvw},\vec{\sset{}} \right) &= \est{}(\sset{}) + \sum_{i=1}^{\nmodels}\cvw_i \cvdiff{i}(\sset{i}) \ \  \textrm{ where } 
  \cvdiff{i}(\sset{i}) = \est{i}\left(\sset{i}^{1}\right) - \estm{i}\left(\sset{i}\right),
  \end{split}
\end{equation}
and the sampling strategy partitions $\sset{i}$  into $\sset{i}^{1}$ and $\sset{i}^{2}$ according to
\begin{align*}
  \sset{i}^{1}=\sset{i-1} \quad \textrm{ and } \quad \sset{i}^{2}&=\sset{i}.
\end{align*}
for $i = 2, \ldots, \nmodels$ and $\sset{1}^1 = \sset{}$ and $\sset{1}^2 = \sset{1}$. The recursive nested sampling scheme for the MFMC estimator is illustrated in Figure~\ref{fig:mfmc_sampling_scheme}. 

The optimal weights are given in the following Lemma.
\begin{lemma}[Optimal CV weights for MFMC]\label{lem:mfmc_weight}
 The optimal weights for the MFMC estimator are 
 \begin{equation} \label{eq:mfmc_opt_alpha}
  \cvw_i^{\mfmc} = - \dfrac{\covF{\qoi, \cv{i}}}{\varF{\cv{i}}} \  \textrm{ for } \  i = 1,\ldots, \nmodels.
 \end{equation}
\end{lemma}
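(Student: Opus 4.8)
The plan is to invoke Proposition~\ref{prop:ocv}, which already provides the optimal ACV weights in closed form as $\optacvw = -\covdiff^{-1}\covhc$, and then to show that under the MFMC sampling structure the matrix $\covdiff$ is \emph{diagonal}, so that the claimed componentwise formula falls out of a scalar ratio. The whole argument therefore reduces to evaluating the entries of $\covdiff$ and $\covhc$ under the nested allocation $\sset{i}^1 = \sset{i-1}$, $\sset{i}^2 = \sset{i}$ (with the convention $\sset{0} = \sset{}$, $\nhf_0 = \nhf$), which induces the containment chain $\sset{} \subseteq \sset{1} \subseteq \cdots \subseteq \sset{\nmodels}$.

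The computational workhorse is the elementary identity that, for averages over i.i.d. realizations, $\covF{\,\nhf_a^{-1}\sum_k X(\sset{}^{(k)}),\, \nhf_b^{-1}\sum_l Y(\sset{}^{(l)})} = s\,\nhf_a^{-1}\nhf_b^{-1}\,\covF{X,Y}$, where $s$ is the number of shared realizations and independent samples contribute nothing. First I would apply this to the high-fidelity cross terms, using that $\sset{}$ is a common prefix of every $\sset{i-1}$ and $\sset{i}$, to obtain $[\covhc]_i = \bigl(\nhf_{i-1}^{-1} - \nhf_i^{-1}\bigr)\covF{\qoi,\cv{i}}$. The same identity applied to the diagonal yields $[\covdiff]_{ii} = \varF{\cvdiff{i}} = \bigl(\nhf_{i-1}^{-1} - \nhf_i^{-1}\bigr)\varF{\cv{i}}$, where the cross term between $\est{i}(\sset{i-1})$ and $\estm{i}(\sset{i})$ reduces the naive variance sum precisely because $\sset{i-1}$ is contained in $\sset{i}$.

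The crux of the proof --- and the step I expect to be the main obstacle --- is the vanishing of the off-diagonal entries. For $i < j$ I would expand $\covF{\cvdiff{i},\cvdiff{j}}$ into its four pairwise covariances between $\{\est{i}(\sset{i-1}),\estm{i}(\sset{i})\}$ and $\{\est{j}(\sset{j-1}),\estm{j}(\sset{j})\}$. Because $\sset{i} \subseteq \sset{j-1}$, every intersection of an index-$i$ set with an index-$j$ set collapses to the index-$i$ set, so the shared-realization count cancels the index-$i$ sample size in each denominator and each of the four coefficients reduces to the reciprocal of its index-$j$ sample size. The sign pattern $(+,-,-,+)$ inherited from $\cvdiff{i}\cvdiff{j} = (\est{i}-\estm{i})(\est{j}-\estm{j})$ then forces the cancellation $\bigl(\nhf_{j-1}^{-1} - \nhf_j^{-1} - \nhf_{j-1}^{-1} + \nhf_j^{-1}\bigr)\covF{\cv{i},\cv{j}} = 0$. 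The delicate bookkeeping is keeping track of which set is the smaller one in each intersection so that the shared counts are assigned correctly; this is where an error would most easily slip in.

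Finally, with $\covdiff$ diagonal, Proposition~\ref{prop:ocv} collapses to $\cvw_i^{\mfmc} = -[\covhc]_i/[\covdiff]_{ii}$, and the common factor $\bigl(\nhf_{i-1}^{-1} - \nhf_i^{-1}\bigr)$ cancels between numerator and denominator, leaving $-\covF{\qoi,\cv{i}}/\varF{\cv{i}}$ as asserted. I would close by remarking that this cancellation is exactly why the optimal MFMC weights are independent of the sample allocation $\rat_i$ --- a structural feature that the subsequent recursive variance-bound result (Theorem~\ref{th:MFMC_varbound}) will exploit.
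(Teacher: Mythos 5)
Your proposal is correct and follows essentially the same route as the paper's proof: both invoke Proposition~\ref{prop:ocv}, compute $\covhc$ and show that $\covdiff$ is diagonal under the nested MFMC sampling (obtaining the identical entries $\bigl(\nhf_{i-1}^{-1}-\nhf_i^{-1}\bigr)\covF{\qoi,\cv{i}}$ and $\bigl(\nhf_{i-1}^{-1}-\nhf_i^{-1}\bigr)\varF{\cv{i}}$, with off-diagonals vanishing by the same sign cancellation), and then cancel the common allocation factor. The only difference is organizational --- you apply the shared-sample covariance identity directly to the four pairwise covariances, whereas the paper first splits each $\cvdiff{i}$ into independent sub-sums --- which is the same computation in different bookkeeping.
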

The proof is provided in~\ref{app:proof:mfmc_weight}, and provides an alternate derivation to the identical result in~\cite[Th. 2.4]{Peherstorfer2016b}. %
In the prior work, these weights were derived using assumptions on model costs and correlation orderings, and our result removes these assumptions. The next Lemma provides a new representation of the variance reduction of this estimator in terms of the correlations between model pairs. 
\begin{lemma}[Variance reduction by optimal MFMC]\label{lem:MFMC_var_reduction}
  Assume $|\ccoeff{1}|>0$ and let $\absr{0} = 1$. The variance reduction of the optimal MFMC estimator is
  \begin{equation}
   \VarF{\mfmccvw} = \frac{\VarF{Q}}{\nhf} \left( 1 - R^2_{\mfmc}\right) \textrm{ where }
      R^2_{\mfmc}
      = \rho_1^2 \left( \frac{\absr{1}-1}{\absr{1}} + \sum_{i=2}^{M} \frac{\absr{i}-\absr{i-1}}{\absr{i}\absr{i-1}} \frac{\rho_i^2}{\rho_1^2} \right).
        \label{eq:mfmc_R}
  \end{equation}
\end{lemma}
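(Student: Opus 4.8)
The plan is to evaluate the optimal-ACV variance formula from Proposition~\ref{prop:ocv} under the specific nested sampling structure of MFMC, and to show that the resulting $R^2_{\acvl}$ equals the claimed $R^2_{\mfmc}$. The first observation I would make is that MFMC sampling is nested, $\sset{} \subseteq \sset{1} \subseteq \cdots \subseteq \sset{\nmodels}$ with cardinalities $\nhf_0 = \nhf \le \nhf_1 \le \cdots \le \nhf_\nmodels$, so the optimal weights from Proposition~\ref{prop:ocv}, namely $\optacvw = -\covdiff^{-1}\covhc$, will be shown to coincide with the MFMC weights of Lemma~\ref{lem:mfmc_weight}. Consequently $\VarF{\mfmccvw} = \VarF{\est{}}\left(1 - R^2_{\acvl}\right)$ with $R^2_{\acvl} = \covhc^T \covdiff^{-1} \covhc / \VarF{\est{}}$, and the entire task reduces to computing $\covhc$ and $\covdiff$ explicitly.

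The key computational tool is the covariance of two Monte Carlo estimators built on nested sample sets: if one estimator uses the first $n_A$ draws and the other the first $n_B$ of a common nested sequence, then independence of distinct draws gives $\covF{X,Y}/\max(n_A,n_B)$, since only the $\min(n_A,n_B)$ shared draws contribute. Applying this to $\cvdiff{i} = \est{i}(\sset{i-1}) - \estm{i}(\sset{i})$ and to $\est{}(\sset{})$, I would obtain $\covF{\cvdiff{i}, \est{}} = \covF{Q,Q_i}\left(\nhf_{i-1}^{-1} - \nhf_i^{-1}\right)$ and the diagonal entry $\VarF{\cvdiff{i}} = \VarF{Q_i}\left(\nhf_{i-1}^{-1} - \nhf_i^{-1}\right)$. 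The crucial step is to show that $\covdiff$ is in fact \emph{diagonal}: for $i < j$ the four cross terms comprising $\covF{\cvdiff{i}, \cvdiff{j}}$ each reduce, via this identity, to $\covF{Q_i,Q_j}$ divided by either $\nhf_{j-1}$ or $\nhf_j$, because $\nhf_{i-1}, \nhf_i \le \nhf_{j-1}$ whenever $i \le j-1$; these four terms then telescope to zero. This vanishing of the off-diagonals is the main obstacle, as it is the step that genuinely exploits the nested, consecutive-index structure of MFMC rather than generic sampling.

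Once $\covdiff$ is diagonal, $R^2_{\acvl}$ collapses to the single sum $\sum_{i=1}^{\nmodels} \covF{\cvdiff{i},\est{}}^2 / \left(\VarF{\est{}}\, \VarF{\cvdiff{i}}\right)$. Substituting the expressions above, the common factor $\left(\nhf_{i-1}^{-1} - \nhf_i^{-1}\right)$ cancels once, and writing $\covF{Q,Q_i}^2 = \ccoeff{i}^2 \VarF{Q}\VarF{Q_i}$ together with $\VarF{\est{}} = \VarF{Q}/\nhf$ reduces the $i$-th term to $\ccoeff{i}^2\left(\nhf/\nhf_{i-1} - \nhf/\nhf_i\right)$. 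Finally, inserting $\nhf_i = \absr{i}\nhf$ with $\absr{0}=1$ gives $R^2_{\acvl} = \sum_{i=1}^{\nmodels} \ccoeff{i}^2\left(\absr{i-1}^{-1} - \absr{i}^{-1}\right)$; I would then split off the $i=1$ term, use $\absr{i-1}^{-1} - \absr{i}^{-1} = (\absr{i} - \absr{i-1})/(\absr{i}\absr{i-1})$, and factor out $\ccoeff{1}^2$ to recover exactly the stated form of $R^2_{\mfmc}$, completing the identification $R^2_{\acvl} = R^2_{\mfmc}$.
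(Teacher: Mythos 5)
Your proposal is correct, and it reaches the same destination as the paper by the same overall architecture --- invoke Proposition~\ref{prop:ocv}, show that the nested MFMC sampling makes $\covdiff$ diagonal, compute $\covhc$, and collapse the quadratic form $R^2 = \covhc^T\covdiff^{-1}\covhc/\varF{\est{}}$ into a sum before factoring out $\ccoeff{1}^2$. Where you genuinely differ is in the computational device used to obtain the covariance entries. The paper (in the proof of Lemma~\ref{lem:mfmc_weight}, which Lemma~\ref{lem:MFMC_var_reduction}'s proof then cites) explicitly partitions each $\cvdiff{i}$ into three sums over disjoint sample blocks --- the first $\nhf$ samples, the next $(\absr{i-1}-1)\nhf$, and the final $(\absr{i}-\absr{i-1})\nhf$ --- and assembles variances and covariances block by block using independence. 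You instead prove and apply a single reusable identity: for Monte Carlo estimators built on nested sample sets of sizes $n_A$ and $n_B$, the covariance is $\covF{X,Y}/\max(n_A,n_B)$, and then expand each entry of $\covdiff$ bilinearly into four cross terms. Your verification of the key facts checks out: the diagonal entries become $\varF{\cv{i}}\left(\nhf_{i-1}^{-1}-\nhf_i^{-1}\right)$, the entries of $\covhc$ become $\covF{\qoi,\cv{i}}\left(\nhf_{i-1}^{-1}-\nhf_i^{-1}\right)$, and for $i<j$ the four cross terms pair up as $\pm\covF{\cv{i},\cv{j}}/\nhf_{j-1}$ and $\pm\covF{\cv{i},\cv{j}}/\nhf_j$ and cancel exactly --- the step that, as you correctly note, uses $\nhf_i \le \nhf_{j-1}$ and hence the consecutive-index nesting of MFMC. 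What your route buys is transparency and economy: the $\max$ identity makes the vanishing of the off-diagonals a two-line telescoping argument rather than a bookkeeping exercise over sample-index ranges, and it simultaneously recovers the optimal weights $\cvw_i^{\mfmc}=-\covF{\qoi,\cv{i}}/\varF{\cv{i}}$ of Lemma~\ref{lem:mfmc_weight} as a byproduct (the common factor $\nhf_{i-1}^{-1}-\nhf_i^{-1}$ cancels in $-\covdiff^{-1}\covhc$). What the paper's explicit splitting buys is a template that generalizes to the non-nested and partially-overlapping sampling schemes treated later (ACV-IS, ACV-MF, ACV-KL), where a clean $\min/\max$ identity no longer suffices and one must track which sample blocks are shared between which estimators.
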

The proof is provided in~\ref{app:proof:mfmc_var_reduction}.
Under the assumption made in \cite{Peherstorfer2016b} that $|\ccoeff{1}| \geq |\ccoeff{i}|$ for $i = 2,\ldots,\nmodels$ (can be constructed by reordering without loss of generality), we can use Lemma~\ref{lem:MFMC_var_reduction} to show that the MFMC estimator cannot obtain greater variance reduction than the optimal single CV ($R_{OCV-1}^2=\rho_1^2$). 
\begin{theorem}[Maximum variance reduction of MFMC]\label{th:MFMC_varbound}
  The variance reduction of MFMC is bounded above by that of the optimal single CV, \ie,
  \begin{equation}
    R_{\mfmc}^2 \leq \ccoeff{1}^2.
  \end{equation}
\end{theorem}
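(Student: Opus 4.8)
The plan is to start from the closed-form expression for $R^2_{\mfmc}$ supplied by Lemma~\ref{lem:MFMC_var_reduction} and collapse it to a telescoping sum that is manifestly bounded by $\ccoeff{1}^2$. The correlation-ordering assumption $|\ccoeff{1}| \geq |\ccoeff{i}|$ gives $\rho_i^2/\rho_1^2 \leq 1$ for every $i$, so I would first replace each ratio $\rho_i^2/\rho_1^2$ appearing in the sum by its upper bound $1$. Provided the accompanying coefficients are non-negative, this substitution can only increase the value, yielding
\begin{equation*}
  R^2_{\mfmc} \leq \rho_1^2 \left( \frac{\absr{1}-1}{\absr{1}} + \sum_{i=2}^{M} \frac{\absr{i}-\absr{i-1}}{\absr{i}\absr{i-1}} \right).
\end{equation*}
It then remains to show that the bracketed quantity does not exceed $1$.

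The key algebraic observation is that each coefficient admits the partial-fraction decomposition $\frac{\absr{i}-\absr{i-1}}{\absr{i}\absr{i-1}} = \frac{1}{\absr{i-1}} - \frac{1}{\absr{i}}$, and that the leading term likewise equals $\frac{\absr{1}-\absr{0}}{\absr{1}\absr{0}} = \frac{1}{\absr{0}} - \frac{1}{\absr{1}}$ because $\absr{0}=1$. Rewriting the whole bracket as $\sum_{i=1}^{M}\left(\frac{1}{\absr{i-1}} - \frac{1}{\absr{i}}\right)$ exposes a telescoping sum equal to $\frac{1}{\absr{0}} - \frac{1}{\absr{M}} = 1 - \frac{1}{\absr{M}}$. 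Since the lowest-fidelity model is evaluated at least $\nhf$ times, $\absr{M} \geq 1$, so $1 - 1/\absr{M} \leq 1$ and therefore $R^2_{\mfmc} \leq \rho_1^2 (1 - 1/\absr{M}) \leq \rho_1^2 = \ccoeff{1}^2$, which is the claimed bound.

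The one point requiring care --- and the main obstacle to a fully rigorous argument --- is justifying the term-by-term replacement $\rho_i^2/\rho_1^2 \mapsto 1$, which preserves the inequality only if each coefficient $\frac{\absr{i}-\absr{i-1}}{\absr{i}\absr{i-1}}$ is non-negative. This is precisely where the recursive nested sampling structure enters: because $\sset{i-1} \subset \sset{i}$ by construction, the sample counts obey $\nhf_{i-1} \leq \nhf_i$, so $\absr{i-1} \leq \absr{i}$ and every numerator $\absr{i}-\absr{i-1}$ is non-negative (the denominator $\absr{i}\absr{i-1} > 0$ is immediate since all $\absr{i} \geq 1$). I would state this monotonicity explicitly as the structural ingredient underpinning the telescoping bound, emphasising that it is the nesting of the samples, rather than any property of the correlations themselves, that ultimately caps the variance reduction at that of the single best control variate.
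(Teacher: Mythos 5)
Your proposal is correct and follows essentially the same route as the paper's proof: both start from the expression for $R_{\mfmc}^2$ in Lemma~\ref{lem:MFMC_var_reduction}, invoke the ordering assumption $\lvert\ccoeff{i}\rvert \leq \lvert\ccoeff{1}\rvert$ to replace each $\ccoeff{i}^2/\ccoeff{1}^2$ by $1$, and then collapse the resulting telescoping sum to $1 - 1/\absr{\nmodels} \leq 1$. Your explicit justification that the coefficients $\frac{\absr{i}-\absr{i-1}}{\absr{i}\absr{i-1}}$ are non-negative (via the nesting $\sset{i-1} \subset \sset{i}$, hence $\absr{i-1} \leq \absr{i}$) is a detail the paper leaves implicit, and is a welcome clarification rather than a departure.
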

\begin{proof}
  The proof follows from algebraic manipulation of \eqref{eq:mfmc_R}.
  \begin{equation*}
   R_{\mfmc}^2 = \ccoeff{1}^2 \left( \frac{\absr{1}-1}{\absr{1}} + \sum_{i=2}^{\nmodels} \frac{\absr{i}-\absr{i-1}}{\absr{i}\absr{i-1}} \frac{\rho_i^2}{\ccoeff{1}^2} \right) 
             \leq \ccoeff{1}^2 \left( 1 - \frac{1}{\absr{1}} + \sum_{i=2}^{\nmodels} \frac{1}{\absr{i-1}} - \frac{1}{\absr{i}} \right)             
   = \ccoeff{1}^2 \left( 1 - \frac{1}{\absr{\nmodels}} \right) < \ccoeff{1}^2,
  \end{equation*}
  where we note that the first inequality derives from the assumption $\lvert \ccoeff{i} \rvert \leq \lvert \ccoeff{1} \rvert$ for $i = 2, \dots, \nmodels$. 
\end{proof}
The variance reduction $R_{\mfmc}^2$ of MFMC is limited by $\ccoeff{1}^2$, irregardless of the amount of low-fidelity data used. Fundamentally, this limitation arises because of the recursive structure of the sampling strategy. In the following section, we show that this limitation also arises in another commonly used sampling approach.

\subsection{Recursive difference estimators}\label{sec:mlmc}

While the MFMC estimator has a recursive nested structure, estimators that use recursive difference strategies as control variates are also found in the literature (see for instance \cite{Owen2013}). The classical difference estimator replaces the random variable $\cv{}$ with a ``difference'' random variable 
$\mu_1 + (\cv{}-\cv{1})$. Clearly both of these random variables have the same mean since $\mathbb{E}[\cv{1}] = \mu_1$, but this re-arrangement can result in a random variable with lower variance when $\VarF{\cv{}-\cv{1}}$ is smaller than $\VarF{\cv{}}$. 

In our case $\mu_1$ is unknown and must be estimated. The samples are partitioned $\sset{1} = (\sset{1}^1, \sset{1}^2)$ so that $\sset{1}^1=\sset{}$ are shared to compute the difference $\cv{}-\cv{1}$, and  $\sset{1}^2$ are used for computing $\estm{1}$. Then we have the estimator
\begin{equation}
 \acvest{}(\sset{},\sset{1}^2) = \estm{1}(\sset{1}^2) + \left( \est{}(\sset{}) - \est{1}(\sset{}) \right).  
\end{equation}
We can introduce additional low-fidelity models by proceeding in a recursive fashion
\begin{equation}
  \acvest{}(\sset{},\sset{1}^2,\sset{2}^2) = \est{2}(\sset{2}^2) + \left( \est{1}(\sset{1}^2)-\est{2}(\sset{1}^2) \right) + \left( \est{}(\sset{}) - \est{1}(\sset{}) \right).
  \label{eq:recur_diff_2model}
\end{equation}
We reinforce the point that the need to share samples arises because the difference between two random variables is required. If we re-arrange the terms, it is possible to write
\begin{equation}
 \acvest{}(\sset{},\sset{1}^2,\sset{2}^2) = \est{}(\sset{}) - \left( \est{1}(\sset{}) - \estm{1}(\sset{1}^2) \right) 
                                                 - \left( \est{2}(\sset{1}^2) - \estm{2}(\sset{2}^2) \right),
\end{equation}
which corresponds to a control variate estimator using two low-fidelity models and fixed weights $\cvw_i = -1$. Thus, one can think of a recursive difference estimator as a {\em control-variate approach with fixed control weights.} From an implementation perspective, this approach eliminates errors associated with imprecise estimation of the optimal weights. 
However, it introduces new sources of inefficiency by introducing weights that are, in the most general setting, sub-optimal. 

The estimator in~\eqref{eq:recur_diff_2model}, expanded from two to $M$ levels, defines the inner loop of a MLMC method~\cite{Giles2008,Cliffe2011}, where the inner loop corresponds to a fixed definition of the highest-fidelity model/most-resolved level\footnote{An outer loop may then adapt this most-resolved level to control bias error.}.
In the following definition, we consider this case of a fixed high-fidelity model and reintroduce the CV weights.
\begin{definition}[Weighted Recursive Difference (W-RDiff) Estimator]
  Let $\vec{\sset{}} = (\sset{},\sset{1},\ldots,\sset{\nmodels})$. Each ordered set of samples $\sset{i}$ is partitioned according to $\sset{i} = \sset{i}^1 \cup \sset{i}^2$, $\sset{i}^1 \cap \sset{i}^2 = \emptyset$, and $\sset{i}^1 = \sset{i-1}^2$ for $i = 1,\ldots,\nmodels$ where $\sset{0}^2 = \sset{1}^1 = \sset{}$ and $\sset{1}^2 \neq \emptyset$. Then the estimator RDiff is defined as
  \begin{equation}\label{eq:peer}
      \mlest(\vec{\cvw},\vec{\sset{}}) = \est{}(\sset{}) + \sum_{i=1}^{\nmodels} \cvw_i \left(\est{i}(\sset{i-1}^2) - \estm{i}(\sset{i}^2) \right).
  \end{equation}
\end{definition}

This W-RDiff estimator offers a simple modification to the traditional MLMC inner loop algorithm by introducing the optimal ACV weights~\eqref{eq:opt_acv_weights} without changing the MLMC sample strategy. A similar idea is also presented in \cite{Sukys2017}, where the authors recognize the limitation in using constant unitary weights withing the classical MLMC technique. In particular, the authors in \cite{Sukys2017} observed that the choice of unitary weights limits the variance reduction, and might lead to a variance increase if the correlation coefficients drop below a certain threshold\footnote{In~\cite{Sukys2017}, a sample allocation is also derived in closed form that begins by imposing a uniform cost redistribution across levels in an initial step. Our sample allocation scheme in \S\ref{sec:sample-allocation} will relax this assumption.}.
The variance reduction of the W-RDiff estimator is given by the following lemma.
\begin{lemma}[Variance reduction of W-RDiff]\label{lem:var_wRDiff}
  Let the W-RDiff estimator be defined as
  \begin{equation}
    \mlest(\vec{\cvw},\vec{\sset{}}) = \est{}(\sset{}) + \sum_{i=1}^{\nmodels} \cvw_i \left(\est{i}(\sset{i-1}^2) - \estm{i}(\sset{i}^2) \right), 
 \end{equation}
 with the sampling strategy: $\sset{i} = \sset{i}^1 \cup \sset{i}^2$, $\sset{i}^1 \cap \sset{i}^2 = \emptyset$, and $\sset{i}^1 = \sset{i-1}^2$ for $i = 1,\ldots,\nmodels$ where $\sset{0}^2 = \sset{1}^1 = \sset{}$, $\sset{1}^2 \neq \emptyset$. The variance of the estimator is
   \begin{equation}
    \varF{\mlest(\vec{\cvw})} = \varF{\est{}}\left(1 - R_{\mll}^2\right),
  \end{equation}
  where 
  \begin{equation}\label{eq:Rmll}
    R_{\mll}^2 = - \cvw_1^2 \rs{1}^2 - 2 \cvw_1 \ccoeff{1} \rs{1} - \cvw_\nmodels^2 \frac{\rs{\nmodels}}{\rcard{M}} - \sum_{1=2}^\nmodels \frac{1}{\rcard{i-1}}
                                                          \left( \cvw_i^2 \rs{i}^2 + \rs{i-1}^2 \rs{i-1}^2 - 2 \cvw_i \cvw_{i-1} \ccoeff{i,i-1} \rs{i} \rs{i-1} \right),
  \end{equation}
  where $\rs{i} = \frac{ \StDevF{\qoi_i} }{ \StDevF{\qoi}}$ is the ratio of the standard deviations, $\ccoeff{ij}$ is the Pearson correlation coefficient between  $(\cv{i},\cv{j})$, and $\rcard{i}= |\sset{i}^2| / \nhf$ is the ratio between the cardinality of the sets $\sset{i}^2$ and $\sset{}$.  
 \end{lemma}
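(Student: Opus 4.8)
The plan is to specialize the general ACV variance identity of Proposition~\ref{prop:acv_var},
\[
  \varF{\mlest} = \varF{\est{}} + \vec{\cvw}^T \covdiff\, \vec{\cvw} + 2\,\vec{\cvw}^T \covhc,
\]
to the chained W-RDiff sampling structure, so that the entire computation reduces to determining the two covariance objects $\covdiff$ and $\covhc$. The single tool I would use repeatedly is the elementary identity that, for two Monte Carlo averages built from i.i.d.\ samples, the covariance is proportional to the fraction of shared samples: for sample sets $\mathcal{S},\mathcal{T}$ and models $a,b$,
\[
  \covF{\tfrac{1}{|\mathcal{S}|}\textstyle\sum_{s\in\mathcal{S}}\qoi_a(s),\ \tfrac{1}{|\mathcal{T}|}\textstyle\sum_{t\in\mathcal{T}}\qoi_b(t)} = \frac{|\mathcal{S}\cap\mathcal{T}|}{|\mathcal{S}|\,|\mathcal{T}|}\,\covF{\qoi_a,\qoi_b}.
\]
Every covariance entry therefore reduces to counting how many samples the relevant sets share.

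Next I would read off the overlap structure from the sampling definition $\sset{i}^1 = \sset{i-1}^2$ with $\sset{i}^1 \cap \sset{i}^2 = \emptyset$. These conditions force the ``second halves'' $\sset{}=\sset{0}^2,\sset{1}^2,\ldots,\sset{\nmodels}^2$ to be pairwise disjoint, and each difference $\cvdiff{i} = \est{i}(\sset{i-1}^2) - \estm{i}(\sset{i}^2)$ is built from the two consecutive links $\sset{i-1}^2$ and $\sset{i}^2$. The diagonal of $\covdiff$ then follows immediately: because $\sset{i-1}^2$ and $\sset{i}^2$ are disjoint, the within-difference cross term vanishes and $\varF{\cvdiff{i}} = \varF{\qoi_i}\big(\tfrac{1}{|\sset{i-1}^2|} + \tfrac{1}{|\sset{i}^2|}\big)$. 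For off-diagonal entries, $\cvdiff{i}$ and $\cvdiff{j}$ can share samples only when they touch a common link, which happens solely for adjacent indices $j=i+1$: there $\estm{i}(\sset{i}^2)$ and $\est{i+1}(\sset{i}^2)$ both use $\sset{i}^2$, giving $\covdiffij{i}{i+1} = -\covF{\qoi_i,\qoi_{i+1}}/|\sset{i}^2|$, while all entries with $|i-j|\ge 2$ vanish. Hence $\covdiff$ is tridiagonal. For $\covhc$, the only sample set shared with the high-fidelity estimator $\est{}(\sset{})$ is $\sset{}=\sset{0}^2$ itself, which enters only through the $\est{1}(\sset{0}^2)$ term of $\cvdiff{1}$; therefore only the first component of $\covhc$ survives, equal to $\covF{\qoi_1,\qoi}/\nhf$.

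Finally I would assemble the quadratic form. Substituting the tridiagonal $\covdiff$ and the single-entry $\covhc$ into the variance identity, factoring out $\varF{\est{}}=\varF{\qoi}/\nhf$, and rewriting variances and covariances through the standard-deviation ratios $\rs{i}=\StDevF{\qoi_i}/\StDevF{\qoi}$ and correlation coefficients $\ccoeff{i},\ccoeff{ij}$ (so that $\varF{\qoi_i}=\rs{i}^2\varF{\qoi}$, $\covF{\qoi,\qoi_i}=\ccoeff{i}\rs{i}\varF{\qoi}$, and $\covF{\qoi_i,\qoi_{i+1}}=\ccoeff{i,i+1}\rs{i}\rs{i+1}\varF{\qoi}$), together with $\rcard{i}=|\sset{i}^2|/\nhf$ and the convention $\rcard{0}=1$, produces the claimed expression for $R_{\mll}^2$ after collecting terms by index. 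I expect the main obstacle to be purely in the bookkeeping: correctly isolating the tridiagonal overlap pattern (in particular verifying that no two non-adjacent differences share any sample, and that $\covhc$ has exactly one nonzero entry) and then handling the index endpoints $i=1$ (where $\rcard{0}=1$) and $i=\nmodels$ carefully when regrouping the double sum into its final form.
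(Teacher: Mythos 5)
Your proposal is correct: every covariance you compute (the tridiagonal $\covdiff$ with $\varF{\cvdiff{i}} = \varF{\cv{i}}\bigl(|\sset{i-1}^2|^{-1} + |\sset{i}^2|^{-1}\bigr)$ and $\covdiffij{i}{i+1} = -\covF{\cv{i},\cv{i+1}}/|\sset{i}^2|$, and the single nonzero entry $\covF{\cvdiff{1},\est{}} = \covF{\qoi,\cv{1}}/\nhf$) checks out, and expanding your quadratic form and regrouping by sample set reproduces exactly the paper's intermediate expression, hence the stated $R^2_{\mll}$. However, your route differs from the paper's. The paper does not go through Proposition~\ref{prop:acv_var} at all for this lemma: it rearranges the estimator directly, collecting together all terms that depend on each disjoint set ($-\cvw_\nmodels\estm{\nmodels}(\sset{\nmodels}^2)$, then $\cvw_1\est{1}(\sset{})+\est{}(\sset{})$, then $\cvw_i\est{i}(\sset{i-1}^2)-\cvw_{i-1}\estm{i-1}(\sset{i-1}^2)$ for $i=2,\ldots,\nmodels$), and sums the variances of these mutually independent groups. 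That regrouping is shorter and avoids matrix bookkeeping. Your approach buys two things in exchange: it is uniform with how the paper proves all its other results (the MFMC, ACV-IS, ACV-MF, and ACV-KL proofs all compute $\covdiff$ and $\covhc$ and invoke Propositions~\ref{prop:acv_var} and~\ref{prop:ocv}), and it actually exhibits the tridiagonal covariance structure of W-RDiff that the paper asserts in \S\ref{sec:acv} (``tridiagonal (W-RDiff) covariance structures, see the Appendices'') but never writes down explicitly; it also feeds directly into Proposition~\ref{prop:ocv} if one wants the optimal weights. One small caution common to both arguments: the stated sampling conditions literally guarantee only that \emph{adjacent} sets $\sset{i-1}^2,\sset{i}^2$ are disjoint, so your claim of pairwise disjointness (equivalently, mutual independence of $\sset{},\sset{1}^2,\ldots,\sset{\nmodels}^2$) is an interpretation of the MLMC-style strategy --- the same one the paper makes implicitly when it declares all cross-group covariances zero --- and is worth stating as an explicit hypothesis.
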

 The proof is provided in \ref{sec:proof_wRDiff}, and leads directly to a bound on the maximum variance reduction.
\begin{theorem}[Maximum variance reduction of RDiff]\label{th:MLMC_varbound}
    The variance reduction of RDiff is bounded above by that of the optimal single CV, i.e.,
  \begin{equation}
    R_{\mll}^2 \leq \ccoeff{1}^2.
  \end{equation}
\end{theorem}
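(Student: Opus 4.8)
The plan is to read $R_{\mll}^2$ from Lemma~\ref{lem:var_wRDiff} as a concave quadratic in the weight vector $\vec{\cvw}$ and to bound its \emph{maximum} over all weights; since the optimal W-RDiff weights attain that maximum, the stated inequality then holds for the estimator of interest (and in fact for every choice of weights). First I would clear the standard-deviation ratios with the substitution $x_i = \cvw_i \rs{i}$ (a bijection since each $\rs{i}>0$), which recasts the variance reduction as
\[
  R_{\mll}^2 = -2\ccoeff{1}\, x_1 - \left( x_1^2 + \frac{x_\nmodels^2}{\rcard{\nmodels}} + \sum_{i=2}^{\nmodels} \frac{1}{\rcard{i-1}}\bigl( x_i^2 + x_{i-1}^2 - 2\ccoeff{i,i-1}\, x_i x_{i-1} \bigr) \right) \equiv -2\ccoeff{1}\, x_1 - \vec{x}^{\mathrm{T}} \mat{A}\, \vec{x}.
\]
The decisive structural fact is that the linear part depends on $x_1$ alone. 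Hence the concave maximum over $\vec{x}$ is $\ccoeff{1}^2 (\mat{A}^{-1})_{11}$, attained at $\vec{x}^{*} = -\ccoeff{1}\,\mat{A}^{-1}\vec{e}_1$, and the whole theorem collapses to the single scalar inequality $(\mat{A}^{-1})_{11}\le 1$.

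Next I would exploit the tridiagonal structure of $\mat{A}$. Reading off coefficients gives $\mat{A}_{11} = 1 + \rcard{1}^{-1}$, $\mat{A}_{ii} = \rcard{i-1}^{-1} + \rcard{i}^{-1}$ for $i\ge 2$, and $\mat{A}_{i,i+1} = \mat{A}_{i+1,i} = -\ccoeff{i+1,i}/\rcard{i}$. For a tridiagonal matrix the top-left entry of the inverse is the terminating continued fraction $(\mat{A}^{-1})_{11} = 1/S_1$, where the pivots obey the backward recursion $S_\nmodels = \mat{A}_{\nmodels\nmodels}$ and $S_i = \mat{A}_{ii} - \mat{A}_{i,i+1}^2/S_{i+1}$ (this is just successive Schur-complement elimination from the bottom-right corner). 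It therefore suffices to show $S_1 \ge 1$.

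The core of the argument is a backward induction proving the invariant $S_i \ge \rcard{i-1}^{-1}$ for $i = \nmodels, \ldots, 2$, using the convention $\rcard{0}=1$ (which is forced by $\sset{0}^2 = \sset{}$). The base case $S_\nmodels = \rcard{\nmodels-1}^{-1} + \rcard{\nmodels}^{-1} \ge \rcard{\nmodels-1}^{-1}$ is immediate. For the step, $S_{i+1}\ge \rcard{i}^{-1}$ gives $1/S_{i+1}\le \rcard{i}$, so the correction term satisfies $\mat{A}_{i,i+1}^2/S_{i+1} = (\ccoeff{i+1,i}^2/\rcard{i}^2)/S_{i+1} \le \ccoeff{i+1,i}^2/\rcard{i} \le \rcard{i}^{-1}$, where the final step is exactly the Cauchy--Schwarz bound $\ccoeff{i+1,i}^2 \le 1$ on the Pearson correlation. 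Thus $S_i \ge \rcard{i-1}^{-1} + \rcard{i}^{-1} - \rcard{i}^{-1} = \rcard{i-1}^{-1}$, closing the induction and incidentally confirming $\mat{A}\succ 0$. Feeding $S_2\ge \rcard{1}^{-1}$ into the top pivot yields $S_1 = 1 + \rcard{1}^{-1} - \ccoeff{2,1}^2/(\rcard{1}^2 S_2) \ge 1$, hence $(\mat{A}^{-1})_{11}\le 1$ and $R_{\mll}^2 \le \ccoeff{1}^2$. The degenerate case $\nmodels = 1$ is checked directly: then $\mat{A} = 1 + \rcard{1}^{-1}$ and $(\mat{A}^{-1})_{11} < 1$.

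I expect the main obstacle to be finding the correct inductive quantity rather than any heavy computation. A naive single Schur complement removing only the first coordinate produces the useless \emph{lower} bound $(\mat{A}^{-1})_{11} \ge 1/\mat{A}_{11}$, the wrong direction; the key realization is that tracking the pivots $S_i$ down the tridiagonal chain, together with the observation that $\ccoeff{i,i-1}^2 \le 1$ is \emph{precisely} strong enough to make the telescoping invariant $S_i \ge \rcard{i-1}^{-1}$ self-propagate, turns the bound into a short induction. Everything else is routine algebra.
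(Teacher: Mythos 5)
Your proof is correct, and it takes a genuinely different route from the paper's. The paper's own proof is a two-line limiting argument: it sends $\rcard{i}\to\infty$, so every term of \eqref{eq:Rmll} carrying a factor $1/\rcard{i}$ vanishes, leaving $-\cvw_1^2\rs{1}^2-2\cvw_1\ccoeff{1}\rs{1}$, whose maximum over $\cvw_1$ is $\ccoeff{1}^2$; the finite-sample inequality is then implicit in the (unstated) fact that each discarded bracket satisfies $\cvw_i^2\rs{i}^2+\cvw_{i-1}^2\rs{i-1}^2-2\cvw_i\cvw_{i-1}\ccoeff{i,i-1}\rs{i}\rs{i-1}\ge\left(|\cvw_i|\rs{i}-|\cvw_{i-1}|\rs{i-1}\right)^2\ge 0$, so finite data can only lower $R^2_{\mll}$. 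You instead work at finite sample sizes throughout: you maximize the concave quadratic over all weights at once, reduce the theorem to the scalar inequality $(\mat{A}^{-1})_{11}\le 1$, and prove that by backward Schur-complement induction on your tridiagonal matrix $\mat{A}$, spending the bound $\ccoeff{i+1,i}^2\le 1$ once per pivot. Your route costs more machinery but buys three things the paper's does not make explicit: a fully rigorous finite-sample statement (no appeal to a limit plus an unstated monotonicity), positive definiteness of the quadratic form as a byproduct of the positive pivots, and the exact optimal-weight value $\ccoeff{1}^2(\mat{A}^{-1})_{11}=\ccoeff{1}^2/S_1$, which quantifies how far below $\ccoeff{1}^2$ the W-RDiff estimator sits for given $\rcard{i}$ and inter-model correlations --- information the limiting argument discards. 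Two small remarks: you silently corrected the typos in \eqref{eq:Rmll} (the term $\rs{i-1}^2\rs{i-1}^2$ should read $\cvw_{i-1}^2\rs{i-1}^2$, and $\rs{\nmodels}/\rcard{\nmodels}$ should read $\rs{\nmodels}^2/\rcard{\nmodels}$, as the appendix proof of Lemma~\ref{lem:var_wRDiff} confirms), which is the right reading; and the unconstrained-maximum formula you invoke presupposes $\mat{A}\succ 0$, which you only confirm at the end --- reorder so the pivot induction comes first, which costs nothing since the induction nowhere uses the maximum formula.
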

\begin{proof}
We want to understand the behavior of $R_{\mll}^2$, given in Equation~\eqref{eq:Rmll}, in the limit of infinite low-fidelity data $\rat_i \rightarrow \infty$ for $i=1,\dots, \nmodels$ (or equivalently $\rcard{i} \rightarrow \infty$). This limit implies
\begin{equation} \label{eq:proof_mlmc}
\lim_{\vec{\absr{}} \to \infty} R_{\mll}^2 = - \cvw_1^2 \rs{1}^2 - 2 \cvw_1 \rho_1 \rs{1},
\end{equation}
and the maximum of this function is obtained for $\cvw_1 = - \ccoeff{1}/\rs{1}$ for which
$\lim_{\vec{\absr{}} \to \infty} R_{\mll}^2 = \ccoeff{1}^2.$
\end{proof}

The convergence of the $R^2$ term of this estimator to $\ccoeff{1}^2$ requires the ability to set the weight $\cvw_1$ appropriately. Indeed we can see from~\eqref{eq:proof_mlmc} that if $\cvw_1$ is not set appropriately, we may not obtain the optimal variance reduction. Moreover, it is the recursive nature of this estimator that limits the maximum possible variance reduction. %

\subsection{Summary and example}\label{sec:motivate_example}

\begin{figure}
  \centering
  \begin{subfigure}[b]{0.23\textwidth}
    \centering
    \includegraphics[width=\textwidth]{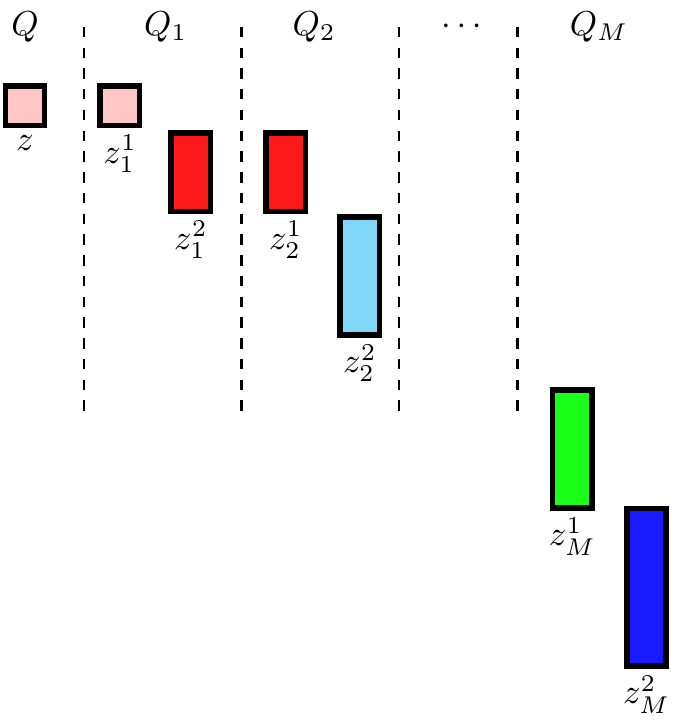}
    \caption{W-RDiff sampling strat.}
    \label{fig:mlmc_sampling_scheme}
  \end{subfigure}
  ~
  \begin{subfigure}[b]{0.23\textwidth}
    \centering
    \includegraphics[width=\textwidth]{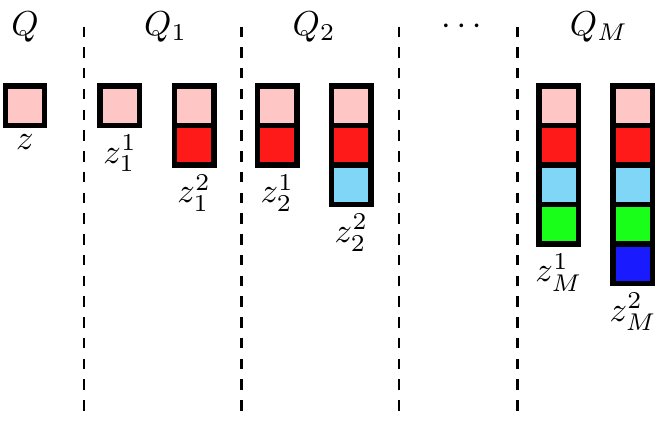}    
    \vspace{47pt}
    \caption{MFMC sampling strat.}
    \label{fig:mfmc_sampling_scheme}
  \end{subfigure}
  ~
  \begin{subfigure}[b]{0.23\textwidth}
    \centering
    \includegraphics[width=\textwidth]{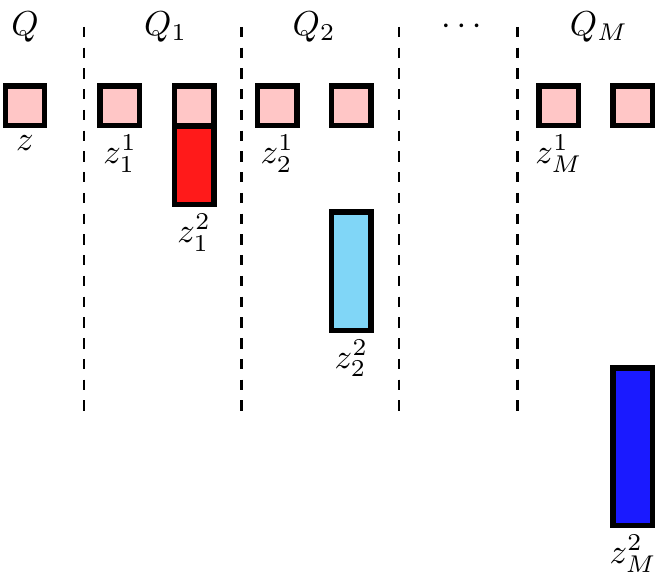}
    \vspace{20pt}
    \caption{ACV-IS sampling strat.}
    \label{fig:acv1_sampling_scheme}
  \end{subfigure}
  ~
  \begin{subfigure}[b]{0.23\textwidth}
    \centering
    \includegraphics[width=\textwidth]{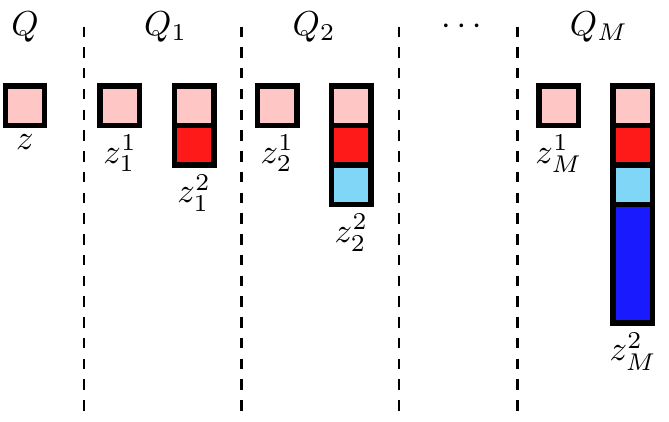}    
    \vspace{35pt}
    \caption{ACV-MF sampling strat.}
    \label{fig:acv2_sampling_scheme}
  \end{subfigure}
  \caption{Visualization of existing (W-RDiff and MFMC) and proposed (ACV-IS and ACV-MF) sampling strategies. Each column represents a distribution of samples, and the colors are used to indicate shared input samples among the levels.}
  \label{fig:existing_sampling_schemes}
\end{figure}

Table~\ref{tab:existingmethods} summarizes the sample distribution and the variance reduction ratio of the methods described to this point. Figure~\ref{fig:existing_sampling_schemes} displays the sampling schemes of the W-RDiff and MFMC algorithms in Figure~\ref{fig:existing_sampling_schemes}(a,b) alongside our proposed algorithms (detailed in \S\ref{sec:acv} to follow) in Figure~\ref{fig:existing_sampling_schemes}(c,d). 

\renewcommand{\arraystretch}{1.1}
\begin{table}
  {\footnotesize
  \begin{tabular}{|c|c|c|c|l|}
    \hline
    Algorithm & Relation between $\sset{}$ and $\sset{i}$ &$\sset{i}^1$ & $\sset{i}^2$ & \multicolumn{1}{c|}{Reduction ratio $\gamma$} \\
    \hline
    \hline
    OCV                           & $\sset{i} = \sset{}$   & $\sset{}$         & $\emptyset$             & $1 - R_{\cvl}^2$     \\
    OCV-1                    & $\sset{i} = \sset{}$   & $\sset{}$         & $\emptyset$             & $1 - \ccoeff{1}^2$ \\
    W-RDiff                  & $\sset{1}^1 = \sset{}$, $\sset{1} \cap \sset{i} = \emptyset$ for $i>1$ & $\sset{i-1}^2$  & $\sset{i} \setminus \sset{i}^1$  &    $1 - R_{\mll}^2$ \ for \ $R^2_{\mll} \color{red}\leq\color{black} \ccoeff{1}^2$ \\ 
    MFMC~\cite{Peherstorfer2016b} & $\sset{i} \supset \sset{}$ for all $i$ & $\sset{i-1}$    & $\sset{i}$  &
    $1 - R_{\mfmc}^2$ \ for \ $R^2_{\mfmc} \color{red}\leq\color{black} \ccoeff{1}^2$
    \\
    \hline
  \end{tabular}
  }
  \centering
  \caption{ Representations of various CV-type variance reduction estimators in the framework of Equation~\eqref{eq:acv}. The estimators OCV and OCV-1 refer to optimal CV estimator with known means, where OCV-1 only uses a single low-fidelity model and OCV uses all $\nmodels$. For these estimators, no samples are required to estimate $\mu$ and therefore  $\sset{i}^2$ is empty. As summarized in the final column, the greatest variance reduction possible with W-RDiff and MFMC is strictly less than or equal to the optimal  CV using a single model.}
  \label{tab:existingmethods}
\end{table}

To demonstrate these results we consider the following simple monomial example: let $\qoi(\omega) = \omega^5$ and $\cv{i}(\omega) = \omega^{5-i}$ for $i = 1,\ldots, 4$, where $\omega \sim \mathcal{U}(0,1)$. The correlation matrix for this problem is given in Table~\ref{tab:corr}. As we have not yet introduced a cost model, we first explore performance through the lens of an assumed sample ratio $r_i = 2^{i+x}$ so that $\vec{r}(x) = [2\times 2^x, 4\times 2^x, 8\times 2^x, 16 \times 2^x]$ for $x = 0,1,\ldots,29$.  As a result, the sample allocations across $i$ are prescribed by $r_i$ without assuming any relationship between these allocations and either the model cost $w_i$ and/or the estimated $\rho_{ij}$; this viewpoint is taken in Figures~\ref{fig:var_reduct}
,~\ref{fig:var_reduct_inter}, and~\ref{fig:var_reduct_inter2}. We explore the introduction of $\cost_i$ and resulting optimal allocation strategies later in \S\ref{sec:sample-allocation} and \S\ref{sec:num_model}. Therefore, in this section, we are concerned only with the problem of demonstrating the existence of a significant gap in variance reduction between OCV-1 and OCV. This gap depends only on the covariance and correlation matrices, and it is not related to the sample allocation per model. However, the possibility to exploit this gap ultimately depends on the actual sample allocation strategy, which is strictly related to model cost, as well as correlation.  This aspect will be discussed later in \S\ref{sec:sample-allocation}.

\begin{table}
  \centering
  {\footnotesize
  \begin{tabular}{|c|ccccc|}
    \hline
             & $\qoi$   & $\cv{1}$ & $\cv{2}$  & $\cv{3}$ & $\cv{4}$  \\
    \hline
    $\qoi$   & 1        & 0.994995 &  0.975042 & 0.927132 & 0.820633 \\
    $\cv{1}$ & 0.994995 & 1        &  0.992172 & 0.958367 & 0.865941 \\
    $\cv{2}$ & 0.975042 & 0.992172 &  1        & 0.986021 & 0.916385 \\
    $\cv{3}$ & 0.927132 & 0.958367 &  0.986021 & 1        & 0.968153 \\
    $\cv{4}$ & 0.820633 & 0.865941 &  0.916385 & 0.968153 & 1        \\
   \hline
  \end{tabular}
  }
  \caption{Correlation matrix for monomial example computed with $10^5$ samples.}
  \label{tab:corr}
\end{table}

Figure~\ref{fig:var_reduct} shows the computed variance reduction ratio $\gamma$. The dotted horizontal lines provide baseline variance reduction ratios corresponding to MC ($\gamma=1$), the single OCV (OCV-1), double OCV (OCV-2), triple OCV (OCV-3), and the optimal control variate that uses all the models (OCV). The variance reduction of the  estimators W-RDiff and MFMC is bounded by that provided by OCV-1, \ie $\ccoeff{1}^2$, as expected from the theory.

To summarize, this example shows that, when the number of high fidelity samples is fixed, the existing recursive estimators do not achieve the maximum-possible variance reduction. Further, we are aware of no existing methods in the form of~\eqref{eq:acv} that can converge to the variance reduction achieved by the OCV estimator (with known means) as the amount of low-fidelity data increases. Our goal in the following section is to overcome this deficiency by developing algorithms that can converge to the OCV estimator (red line in Figure~\ref{fig:var_reduct}).

\begin{figure}
  \centering
  \includegraphics[width=0.4\textwidth]{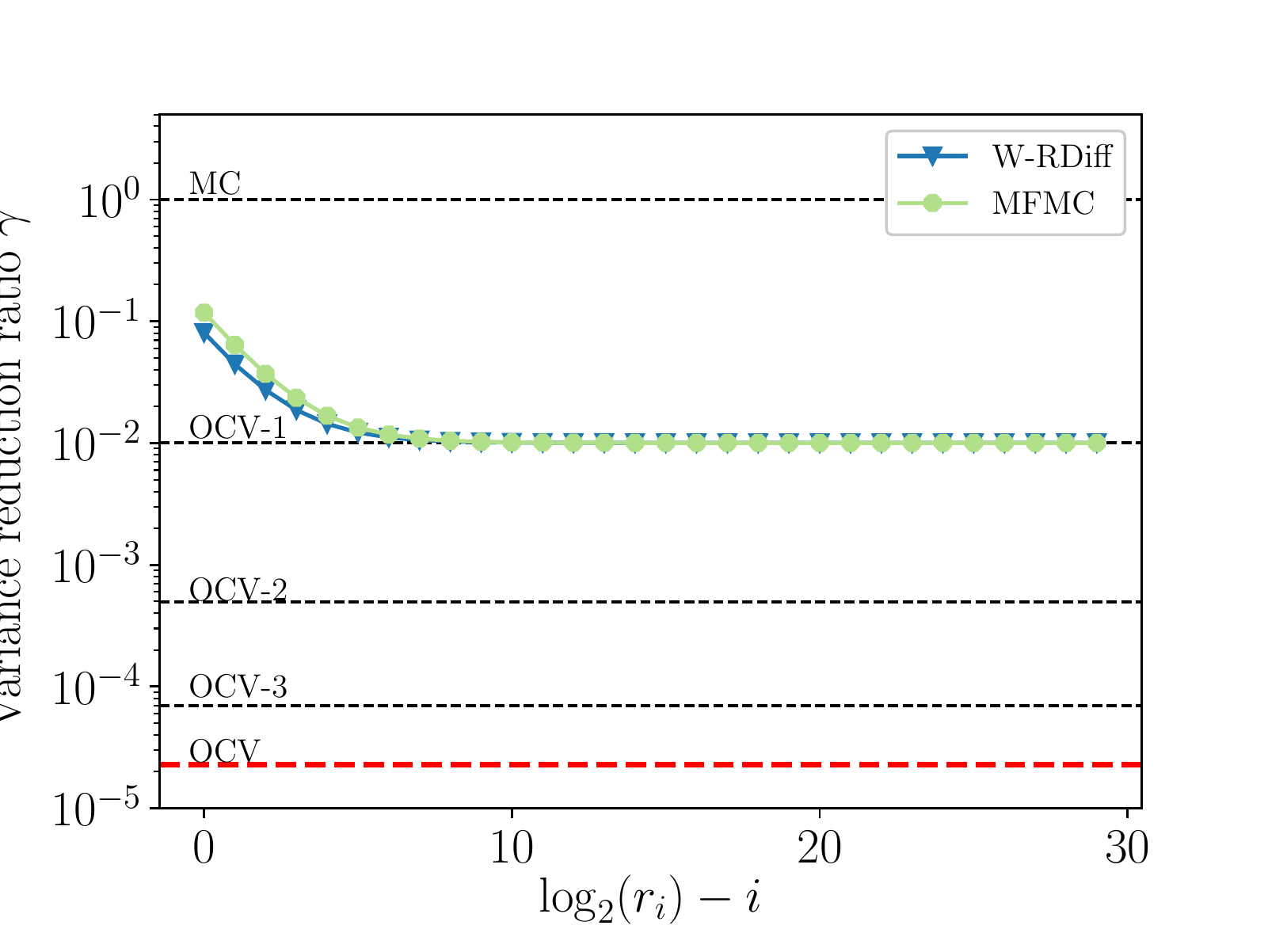}
  \caption{Variance reduction ratios for various estimators as a function of number of sample per level $r_i$ and fixed number of high fidelity samples. Baselines are provided for one (OCV-1), two (OCV-2), three (OCV-3), and four (OCV) CV estimators. Both W-RDiff and MFMC converge to the OCV-1 baseline as expected. None of the estimators converge to the optimal control variate shown in red.}
  \label{fig:var_reduct}
\end{figure}

\section{Approximate Control Variates}\label{sec:acv}
In this section, we use Propositions~\ref{prop:acv_var} and \ref{prop:ocv} to derive approximate control variate (ACV) estimators that provably converge to the optimal control variate (OCV) estimators with increasing low-fidelity data.
To provide motivation for the following discussion, compare the expressions for the OCV estimator variance~\eqref{eq:optcv_cov} and the expression for ACV estimator variance~\eqref{eq:ocv_var}. In the limit of infinite low-fidelity data, we intend for these two expressions to match. The recursive approaches described in \S\ref{sec:mfmc} and \S\ref{sec:mlmc} are limited because their recursive sampling patterns lead to either diagonal (MFMC) or tridiagonal (W-RDiff) covariance structures (see the Appendices), %
and we must recover the full covariance matrix to achieve convergence to OCV. We describe two such approaches next.

\subsection{Two convergent estimators}
The most straightforward way to obtain an ACV estimator with the same covariance structure of an OCV estimator, in terms of $\covdiff$ and $\covhc$, is to set $\sset{i}^1 = \sset{}$ for each CV and then to use all available samples to estimate $\estm{i}$, \ie, $\sset{i}^2 = \sset{i}$. In other words, the estimation of $\est{}$ and each $\est{i}$ employ the exact same set of samples and the estimation of each $\estm{i}$ uses these same samples plus a sample increment.
This simple estimator is termed approximate control variate-independent samples (ACV-IS) and is detailed below.
\begin{definition}[ACV-IS]
  Let $\sset{i}^1 = \sset{}$, $\sset{i}^2 = \sset{i}$ and $(\sset{i} \setminus \sset{i}^1) \cap (\sset{j} \setminus \sset{j}^1) = \emptyset$ for $i\neq j$ and  $i = 1,\ldots, \nmodels$. Then the estimator ACV-IS is defined as 
  \begin{equation}  \label{eq:acv-1}
      \acvoneest(\vec{\cvw},\vec{\sset{}}) = \est{}(\sset{}) + \sum_{i=1}^{\nmodels}\cvw_i\left(\est{i}\left(\sset{} \right) - \estm{i}\left(\sset{i}\right)    \right).
\end{equation}
\end{definition}
The estimator only requires shared evaluations for the $\nhf$ input samples that comprise $\est{}$ and each $\est{i}$. The rest of the samples $\sset{i} \setminus \sset{i}^1$ for each control variate are completely independent. As a result, an attractive feature of the sample distribution strategy for ACV-IS is that each control variate can be evaluated separately and in parallel. The sampling scheme for the ACV-IS estimator is illustrated in Figure~\ref{fig:acv1_sampling_scheme} for reference.
The optimal weights and variance reduction of the ACV-IS estimator are provided below, and the proof is provided in~\ref{app:proof:opt_acv1}.
\begin{theorem}[Optimal CV-weights and variance reduction for ACV-IS] \label{th:opt_acv1}
The optimal CV weights and estimator variance for the ACV-IS estimator are\footnote{In this paper, $\circ$ denotes a Hadamard, or elementwise, product.}
  \begin{equation}
    \optacvone = - \left[ \covm \circ \fmatone\right]^{-1} \left[\Diag{\fmatone} \circ \covv \right]
  \end{equation}
  and
  \begin{equation}
    \VarF{\acvoneest(\optacvone)} = \frac{\varF{\qoi}}{\nhf}\left(1 - R_{\acvone}^2\right),
    \textrm{ where }     R_{\acvone}^2 = \bvec{a}^{\normalfont T} \left[\covm \circ \fmatone\right]^{-1} \bvec{a},
  \end{equation}
  $\bvec{a}= \left[\Diag{\fmatone} \circ \covvnorm \right]$
  and $\fmatone \in \reals^{\nmodels \times \nmodels}$ has elements 
  \begin{equation}
    \fmatone_{ij} = \left\{
      \begin{array}{cl}
        \frac{\absr{i}-1}{\absr{i}}\frac{\absr{j}-1}{\absr{j}}  & \textrm{if } i \neq j \\
        \frac{\absr{i}-1}{\absr{i}} & \textrm{otherwise} 
      \end{array}
      \right. .
  \end{equation}
\end{theorem}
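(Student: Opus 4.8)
The plan is to specialize the general optimal-ACV result of Proposition~\ref{prop:ocv} to the ACV-IS sampling scheme. That proposition already supplies $\optacvw = -\covdiff^{-1}\covhc$ together with $R_{\acvl}^2 = \covhc^T \covdiff^{-1}\covhc / \varF{\est{}}$, so the only work is to evaluate the two covariance objects $\covdiff$ and $\covhc$ explicitly for the particular sample-sharing pattern defining ACV-IS, and then to substitute and simplify. The heart of the proof is therefore a covariance computation rather than an optimization. Throughout I write $c_i = \covF{Q,\cv{i}}$ for the $i$th entry of $\covv$ and $C_{ij} = \covF{\cv{i},\cv{j}}$ for the $(i,j)$ entry of $\covm$, and I repeatedly use that MC estimators built from distinct input realizations are uncorrelated, so each pairwise covariance of two MC means equals $(\text{\# shared realizations})$ times the underlying covariance, divided by the product of the two sample counts.

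First I would compute the entries of $\covhc$, namely $\covF{\cvdiff{i},\est{}}$ with $\cvdiff{i} = \est{i}(\sset{}) - \estm{i}(\sset{i})$. Since $\sset{i}^1 = \sset{}$ and $\sset{i}^2 = \sset{i} \supset \sset{}$, both $\est{i}(\sset{})$ and $\est{}(\sset{})$ share all $\nhf$ samples, whereas $\estm{i}(\sset{i})$ shares only the $\nhf$ high-fidelity samples with $\est{}(\sset{})$. Counting shared realizations gives $\covF{\est{i}(\sset{}),\est{}(\sset{})} = c_i/\nhf$ and $\covF{\estm{i}(\sset{i}),\est{}(\sset{})} = c_i/(\absr{i}\nhf)$, so $\covF{\cvdiff{i},\est{}} = \frac{c_i}{\nhf}\frac{\absr{i}-1}{\absr{i}}$. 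In matrix form this is exactly $\covhc = \frac{1}{\nhf}\,[\Diag{\fmatone}\circ\covv]$.

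Next I would compute $\covdiff$ entrywise by expanding $\covF{\cvdiff{i},\cvdiff{j}}$ into its four covariance terms and again counting shared realizations. The off-diagonal case $i\neq j$ is where the defining ACV-IS property is essential: because the increments $\sset{i}\setminus\sset{}$ and $\sset{j}\setminus\sset{}$ are disjoint, the two ECVMs $\estm{i}$ and $\estm{j}$ share only the $\nhf$ common samples, and the four terms collapse to $\frac{C_{ij}}{\nhf}(1-1/\absr{i})(1-1/\absr{j})$. The diagonal case $i=j$ reduces to $\varF{\cvdiff{i}} = \frac{\varF{\cv{i}}}{\nhf}(1-1/\absr{i})$. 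Together these are precisely $\covdiff = \frac{1}{\nhf}\,\covm\circ\fmatone$ with $\fmatone$ as stated.

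Finally, substituting into Proposition~\ref{prop:ocv}, the factors of $1/\nhf$ cancel in the weight formula to give $\optacvone = -[\covm\circ\fmatone]^{-1}[\Diag{\fmatone}\circ\covv]$, while in the variance formula one surviving factor of $1/\nhf$ cancels against $\varF{\est{}} = \varF{Q}/\nhf$; writing $\covv = \StDevF{Q}\,\covvnorm$ then extracts $\varF{Q}$, leaving $R_{\acvone}^2 = \bvec{a}^T[\covm\circ\fmatone]^{-1}\bvec{a}$ with $\bvec{a} = \Diag{\fmatone}\circ\covvnorm$. Invertibility is covered by the standing hypothesis of Proposition~\ref{prop:ocv}, but it can also be seen directly: with $b_i = (\absr{i}-1)/\absr{i}\in[0,1)$ one has $\fmatone = \bvec{b}\bvec{b}^T + \Diag{(b_i(1-b_i))_i}$, which is positive semidefinite, so the Schur product theorem makes $\covm\circ\fmatone$ positive semidefinite. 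The main obstacle I expect is the careful bookkeeping of shared versus independent realizations in the four-term expansion of $\covF{\cvdiff{i},\cvdiff{j}}$ — in particular making the disjointness condition do its work so that the cross term between the two ECVMs retains only the $\nhf$ shared samples, which is exactly what produces the clean rank-one-plus-diagonal structure of $\fmatone$ and hence the Hadamard-product form of the answer.
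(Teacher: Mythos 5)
Your proposal is correct and follows essentially the same route as the paper's proof: specialize Proposition~\ref{prop:ocv} by computing $\covdiff = \frac{1}{\nhf}\left[\covm \circ \fmatone\right]$ and $\covhc = \frac{1}{\nhf}\left[\Diag{\fmatone}\circ\covv\right]$ from the ACV-IS sample-sharing structure (disjoint increments, shared first $\nhf$ samples), then substitute and cancel the $1/\nhf$ factors. The only cosmetic difference is bookkeeping --- the paper splits each $\cvdiff{i}$ into explicitly independent sums and reads covariances off the shared sum, while you expand each $\covF{\cvdiff{i},\cvdiff{j}}$ into four terms and apply a shared-realization counting rule --- and your Schur-product remark on invertibility is a small bonus beyond what the paper states.
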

Another convergent ACV estimator can be obtained directly from MFMC. In this case, since a common set of samples is already present among all the models, we only  need to break the recursive pattern by using $\sset{i}^1=\sset{}.$
\begin{definition}[ACV-MF]\label{def:acv2}
  Let $\sset{i}^1 = \sset{}$, $\sset{i}^2 = \sset{i}$ and $\sset{j}^{(k)} = \sset{i}^{(k)}$ for $j > i$ and $k \leq \min(\absr{i},\absr{j})\nhf$. Then ACV-MF estimator is 
  \begin{equation}  \label{eq:acv-2}
      \acvtwoest(\vec{\cvw},\vec{\sset{}}) = \est{}(\sset{}) + \sum_{i=1}^{\nmodels}\cvw_i\left(\est{i}\left(\sset{} \right) - \estm{i}\left(\sset{i}\right)\right).
\end{equation}
\end{definition}
The ACV-MF estimator can use an identical set of samples to the MFMC estimator and can thus be considered a drop-in replacement. \textit{The only difference is that $\est{i}$ is evaluated using only the first $\nhf$ samples instead of $\absr{i-1}\nhf$.} Furthermore, using fewer samples for $\est{i}$ does not cause loss of accuracy because the CV approach does not require an accurate estimate of $\est{i}$ in terms of how close it is to $\mean{i}$; rather, it requires an estimator $\est{i}$ that is \textit{correlated to} $\est{}$ and unbiased. The sampling scheme for the ACV-MF estimator is illustrated in Figure~\ref{fig:acv2_sampling_scheme} for reference.
The optimal weights and variance reduction of the ACV-MF estimator is provided below, and the proof is provided in~\ref{app:proof:opt_acv2}.
\begin{theorem}[Optimal CV-weights and variance reduction for ACV-MF]\label{th:opt_acv2}
  The optimal CV weights and estimator variance for the ACV-MF estimator are
  \begin{equation}
      \optacvtwo = - \left[ \covm \circ \fmattwo\right]^{-1} \left[\Diag{\fmattwo} \circ \covv \right],
  \end{equation}
  and
  \begin{equation}
    \VarF{\acvtwoest(\optacvtwo)} = \frac{\varF{\qoi}}{\nhf}\left(1 - R_{\acvtwo}^2\right),
    \textrm{ where }
    R_{\acvtwo}^2 = \bvec{a}^{\normalfont T} \left[\covm \circ \fmattwo\right]^{-1} \bvec{a},
  \end{equation}
  $\bvec{a} = \left[\Diag{\fmattwo} \circ \covvnorm \right]$
  and $\fmattwo \in \reals^{\nmodels \times \nmodels}$ has elements
  \begin{equation}
    \fmattwo_{ij} = \left\{
      \begin{array}{cl}
        \frac{\min(\absr{i},\absr{j})-1}{\min(\absr{i},\absr{j})}  & \textrm{if } i \neq j \\
        \frac{\absr{i}-1}{\absr{i}} & \textrm{otherwise} 
      \end{array}
      \right. .
  \end{equation}
\end{theorem}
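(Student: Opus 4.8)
The plan is to invoke Proposition~\ref{prop:ocv} and reduce everything to computing the two covariance objects it requires. Since ACV-MF is an instance of the general ACV estimator~\eqref{eq:acv}, Proposition~\ref{prop:ocv} immediately gives $\optacvtwo = -\covdiff^{-1}\covhc$ and $R_{\acvtwo}^2 = \covhc^{T}\covdiff^{-1}\covhc/\varF{\est{}}$. The entire task is therefore to evaluate $\covdiff$ and $\covhc$ under the ACV-MF sampling rule of Definition~\ref{def:acv2} and to recognize them as the Hadamard products $\frac{1}{\nhf}[\covm\circ\fmattwo]$ and $\frac{1}{\nhf}[\Diag{\fmattwo}\circ\covv]$.

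The workhorse for these evaluations is the elementary fact that for two MC means $\bar A$ and $\bar B$ built from i.i.d. draws over sample sets $S_A$ and $S_B$, the cross terms between distinct samples vanish by independence, leaving $\covF{\bar A,\bar B} = \frac{|S_A\cap S_B|}{|S_A|\,|S_B|}\covF{A,B}$. The whole computation is then just bookkeeping of sample overlaps. First I would record the three overlap cardinalities dictated by Definition~\ref{def:acv2}: $\est{}$ and each $\est{i}$ share the full set $\sset{}$ ($\nhf$ samples); because $\sset{i}^1 = \sset{}$ we have $\sset{}\subset\sset{i}$, so $\est{}$ (and likewise $\est{i}$) shares exactly $\nhf$ samples with $\estm{i}(\sset{i})$; and the nesting convention $\sset{j}^{(k)} = \sset{i}^{(k)}$ for $k\le\min(\absr{i},\absr{j})\nhf$ yields $|\sset{i}\cap\sset{j}| = \min(\absr{i},\absr{j})\nhf$.

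Next I would assemble the two covariance objects. Expanding $\cvdiff{i} = \est{i}(\sset{}) - \estm{i}(\sset{i})$, the high-fidelity covariance collapses to $\covF{\cvdiff{i},\est{}} = \frac{1}{\nhf}\covv_i\bigl(1 - \tfrac{1}{\absr{i}}\bigr)$, which gives $\covhc = \frac{1}{\nhf}[\Diag{\fmattwo}\circ\covv]$ with diagonal $\fmattwo_{ii} = \frac{\absr{i}-1}{\absr{i}}$. Expanding the four-term product $\covF{\cvdiff{i},\cvdiff{j}}$ and inserting the overlaps produces $\frac{\covm_{ij}}{\nhf}\bigl(1 - \tfrac{1}{\absr{i}} - \tfrac{1}{\absr{j}} + \tfrac{\min(\absr{i},\absr{j})}{\absr{i}\absr{j}}\bigr)$. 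The one genuine algebraic step is to check that the parenthesized factor equals $\frac{\min(\absr{i},\absr{j})-1}{\min(\absr{i},\absr{j})}$; taking $\absr{i}\le\absr{j}$ without loss of generality makes the two $1/\absr{j}$ terms cancel and reduces it to $\frac{\absr{i}-1}{\absr{i}}$, so $\covdiff = \frac{1}{\nhf}[\covm\circ\fmattwo]$.

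Finally I would substitute back into Proposition~\ref{prop:ocv}. The common $\nhf^{-1}$ prefactors cancel in $\optacvtwo = -[\covm\circ\fmattwo]^{-1}[\Diag{\fmattwo}\circ\covv]$, giving the first claim. For the variance, using $\varF{\est{}}=\varF{\qoi}/\nhf$ and writing $\covv = \StDevF{\qoi}\,\covvnorm$, so that $\Diag{\fmattwo}\circ\covv = \StDevF{\qoi}\,\bvec{a}$, collapses all prefactors into $\StDevF{\qoi}^{2}/\varF{\qoi}=1$ and leaves $R_{\acvtwo}^2 = \bvec{a}^{T}[\covm\circ\fmattwo]^{-1}\bvec{a}$. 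I expect the only real obstacle to be the overlap bookkeeping, specifically pinning down $|\sset{i}\cap\sset{j}| = \min(\absr{i},\absr{j})\nhf$ from the nested-sampling convention; once that is correct, the remaining work is the single $\min$-identity and routine cancellation of $\nhf$ and $\StDevF{\qoi}$ factors.
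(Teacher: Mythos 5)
Your proposal is correct and follows essentially the same route as the paper: both reduce the theorem to Propositions~\ref{prop:acv_var} and~\ref{prop:ocv} and then compute $\covdiff$ and $\covhc$ from the sample-overlap structure of Definition~\ref{def:acv2}, arriving at $\covdiff = \frac{1}{\nhf}[\covm\circ\fmattwo]$ and $\covhc = \frac{1}{\nhf}[\Diag{\fmattwo}\circ\covv]$. The only difference is organizational: where the paper splits each $\cvdiff{i}$ into explicit sums over disjoint sample blocks and treats the cases $\absr{j}\geq\absr{i}$ and $\absr{j}<\absr{i}$ separately (reusing the ACV-IS computations for the diagonal and $\covhc$), you package the same independence argument into the single overlap formula $\covF{\bar A,\bar B}=\frac{|S_A\cap S_B|}{|S_A||S_B|}\covF{A,B}$, which compresses the case analysis into one $\min$-identity check.
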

Interestingly, the form of the optimal estimators for ACV-IS and ACV-MF only differ in the terms containing the matrices $\fmatone$ and $\fmattwo$. The way in which these matrices enter the covariance calculation is algebraically identical. 
We analyze the conditions under which this algebraic form converges to the optimal CV in the proposition below.
\begin{proposition}[Convergent estimators]
  If an approximate CV estimator with $\vec{r} = [\absr{1},\ldots,\absr{\nmodels}]$ yields an optimal variance reduction with 
  \begin{equation}
    R^2_{\mat{G}(\vec{r})} = \left[\Diag{\mat{G}} \circ \covv \right]^{\normalfont T} \frac{\left[\covm \circ \mat{G} \right]^{-1}}{\varF{\qoi}} \left[\Diag{\mat{G}} \circ \covv \right],
  \end{equation}
  and $\mat{G}(\vec{\absr{}}) \to \mat{1}_{\nmodels \times \nmodels}$ then the estimator converges to the optimal  CV
  \begin{equation}
    \lim_{\vec{\absr{}} \to \infty} R_{\mat{G}(\vec{\absr{}})}^2 = \frac{\covv^{\normalfont T} \covm^{-1} \covv}{\VarF{Q}} = R_{OCV}^2,
  \end{equation}
  where $\vec{\absr{}} = [\absr{1},\ldots,\absr{\nmodels}]$ and $\vec{\absr{}} \to \infty$ means that $r_i\to \infty$ for $i = 1,\ldots,\nmodels.$. 
\end{proposition}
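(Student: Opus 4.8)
The plan is to exploit the elementary fact that the Hadamard product with the all-ones matrix acts as the identity, so that the limiting hypothesis $\mat{G}(\vec{\absr{}}) \to \mat{1}_{\nmodels\times\nmodels}$ collapses the $\mat{G}$-dependent expression for $R^2_{\mat{G}(\vec{\absr{}})}$ exactly onto the OCV formula~\eqref{eq:optcv_cov}. The argument is essentially one of continuity, and the only genuine care needed concerns invertibility of the relevant matrices along the limit.

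First I would record the two elementary Hadamard identities: for any $\mat{A} \in \reals^{\nmodels \times \nmodels}$ and any $\vec{b} \in \reals^{\nmodels}$ we have $\mat{A} \circ \mat{1}_{\nmodels \times \nmodels} = \mat{A}$ and $\Diag{\mat{1}_{\nmodels \times \nmodels}} \circ \vec{b} = \vec{b}$, the latter because $\Diag{\mat{1}_{\nmodels \times \nmodels}}$ is simply the all-ones vector. Applying these with $\mat{A} = \covm$ and $\vec{b} = \covv$ identifies the intended target of the limit as $\covv^{T} \covm^{-1} \covv / \varF{\qoi}$, which is exactly $R_{\cvl}^2$ of~\eqref{eq:optcv_cov} once one substitutes $\covvnorm = \covv / \StDevF{\qoi}$ and uses $\StDevF{\qoi}^2 = \varF{\qoi} = \VarF{Q}$.

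Next I would argue by continuity. The Hadamard product is continuous in each factor, so $\covm \circ \mat{G}(\vec{\absr{}}) \to \covm$ and $\Diag{\mat{G}(\vec{\absr{}})} \circ \covv \to \covv$ as $\vec{\absr{}} \to \infty$. Because $\covm$ is positive definite, hence nonsingular, and matrix inversion is continuous on the open set of invertible matrices, for $\vec{\absr{}}$ large enough $\covm \circ \mat{G}(\vec{\absr{}})$ lies in a neighborhood of $\covm$ on which it remains invertible with $[\covm \circ \mat{G}(\vec{\absr{}})]^{-1} \to \covm^{-1}$. Composing these convergences with the continuity of the quadratic form $(\vec{u}, \mat{S}) \mapsto \vec{u}^{T} \mat{S}\, \vec{u}$ then yields $R^2_{\mat{G}(\vec{\absr{}})} \to \covv^{T} \covm^{-1} \covv / \varF{\qoi} = R_{\cvl}^2$, which is the claim.

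The main obstacle—indeed the only subtlety—is guaranteeing that $\covm \circ \mat{G}(\vec{\absr{}})$ stays invertible so that the inverse and its limit are well defined. This is precisely where the positive-definiteness of $\covm$ together with the openness of the invertible matrices (and continuity of inversion thereon) is invoked; it is also implicitly underwritten by the standing assumption that $\covdiff$ is positive definite, which is what makes $R^2_{\mat{G}(\vec{\absr{}})}$ well defined through Proposition~\ref{prop:ocv} in the first place.
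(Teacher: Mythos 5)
Your proof is correct and takes essentially the same route as the paper, whose entire argument is the observation that $\mat{1}_{\nmodels \times \nmodels} \circ \covm = \covm$ and $\mat{1}_{\nmodels} \circ \covv = \covv$, with the limit treated as self-evident. The continuity details you supply --- convergence of the Hadamard products, continuity of inversion on the open set of invertible matrices (using positive definiteness of $\covm$), and continuity of the quadratic form --- simply make rigorous what the paper leaves implicit.
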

The proof of this proposition is self evident since $\mat{1}_{\nmodels} \circ \covv = \covv$ and $\mat{1}_{\nmodels \times \nmodels} \circ \covm = \covm$. 

\begin{theorem}[Convergence of ACV-IS and ACV-MF estimators to the optimal  CV.]\label{th:ACV}
  The variance reduction of ACV-IS and ACV-MF converges to that of the optimal  CV with increasing data
  \begin{equation}
    \lim_{\vec{\absr{}} \to \infty}R_{\acvone}^2 = \lim_{\vec{\absr{}} \to \infty} R_{\acvtwo}^2 = R^2.
  \end{equation}
\end{theorem}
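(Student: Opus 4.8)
The plan is to recognize this final theorem as an immediate corollary of the preceding Proposition on convergent estimators, thereby reducing the whole argument to two elementwise limit checks. The first step is to cast the variance reductions $R_{\acvone}^2$ and $R_{\acvtwo}^2$ from Theorems~\ref{th:opt_acv1} and~\ref{th:opt_acv2} into the canonical form $R_{\mat{G}(\vec{\absr{}})}^2$ used in that Proposition. Using $\covvnorm = \covv / \StDevF{\qoi}$ and $\varF{\qoi} = \StDevF{\qoi}^2$, the vector $\bvec{a} = \Diag{\mat{G}} \circ \covvnorm$ obeys $\bvec{a} = (\Diag{\mat{G}} \circ \covv)/\StDevF{\qoi}$, so that $\bvec{a}^{\mathrm{T}} [\covm \circ \mat{G}]^{-1} \bvec{a} = [\Diag{\mat{G}} \circ \covv]^{\mathrm{T}} [\covm \circ \mat{G}]^{-1} [\Diag{\mat{G}} \circ \covv] / \varF{\qoi}$. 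This is precisely $R_{\mat{G}(\vec{\absr{}})}^2$, with $\mat{G} = \fmatone$ for ACV-IS and $\mat{G} = \fmattwo$ for ACV-MF. Hence both estimators satisfy the hypotheses of the convergent-estimators Proposition, and all that remains is to verify that $\fmatone$ and $\fmattwo$ tend to the all-ones matrix $\mat{1}_{\nmodels \times \nmodels}$.

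Second, I would check these two matrix limits entrywise as $\vec{\absr{}} \to \infty$, meaning every $\absr{i} \to \infty$. On the diagonal, both matrices have entries $(\absr{i}-1)/\absr{i} \to 1$. Off the diagonal, the ACV-IS entry $\tfrac{\absr{i}-1}{\absr{i}}\tfrac{\absr{j}-1}{\absr{j}} \to 1$ as a product of two factors each tending to one, while the ACV-MF entry $\tfrac{\min(\absr{i},\absr{j})-1}{\min(\absr{i},\absr{j})} \to 1$ because $\min(\absr{i},\absr{j}) \to \infty$ whenever all $\absr{i} \to \infty$. Therefore $\fmatone(\vec{\absr{}}) \to \mat{1}_{\nmodels \times \nmodels}$ and $\fmattwo(\vec{\absr{}}) \to \mat{1}_{\nmodels \times \nmodels}$, which is exactly the condition the Proposition requires.

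Third, I would invoke the Proposition to conclude $\lim_{\vec{\absr{}} \to \infty} R_{\acvone}^2 = \lim_{\vec{\absr{}} \to \infty} R_{\acvtwo}^2 = \covv^{\mathrm{T}} \covm^{-1} \covv / \varF{\qoi} = R_{OCV}^2 = R^2$, using the identities $\mat{1}_{\nmodels \times \nmodels} \circ \covm = \covm$ and $\mat{1}_{\nmodels} \circ \covv = \covv$. The single point meriting care, and the only genuine obstacle, is the interchange of the limit with the matrix inversion: one must know that $[\covm \circ \mat{G}]^{-1}$ is continuous in $\mat{G}$ at the limiting argument. This follows because $\covm \circ \mat{1}_{\nmodels \times \nmodels} = \covm$ is positive definite under the standing assumption of Proposition~\ref{prop:ocv}, so by continuity of the eigenvalues $\covm \circ \mat{G}$ remains positive definite, hence invertible, for all $\vec{\absr{}}$ sufficiently large, and matrix inversion is continuous on the open set of invertible matrices. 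With this in hand, the quadratic form is a continuous function of $\mat{G}$ in a neighborhood of $\mat{1}_{\nmodels \times \nmodels}$, and evaluating it at the limit yields the claimed equality, completing the proof.
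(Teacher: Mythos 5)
Your proof is correct and follows essentially the same route as the paper's: both reduce the theorem to the convergent-estimators Proposition and then verify entrywise that $\fmatone$ and $\fmattwo$ tend to $\mat{1}_{\nmodels \times \nmodels}$. Your additional steps --- matching the $\bvec{a}$-form of Theorems~\ref{th:opt_acv1} and~\ref{th:opt_acv2} to the Proposition's canonical quadratic form, and justifying the interchange of the limit with matrix inversion via positive definiteness of $\covm$ --- simply make explicit what the paper treats as self-evident.
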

The proof is provided in~\ref{app:proof:ACV}.

We now return to the monomial example of \S\ref{sec:motivate_example}. In Figure~\ref{fig:var_reduct_inter}, we add the proposed estimators ACV-IS and ACV-MF. As expected, these estimators converge to the OCV baseline. 
\begin{figure}
  \centering
  \includegraphics[width=0.35\textwidth]{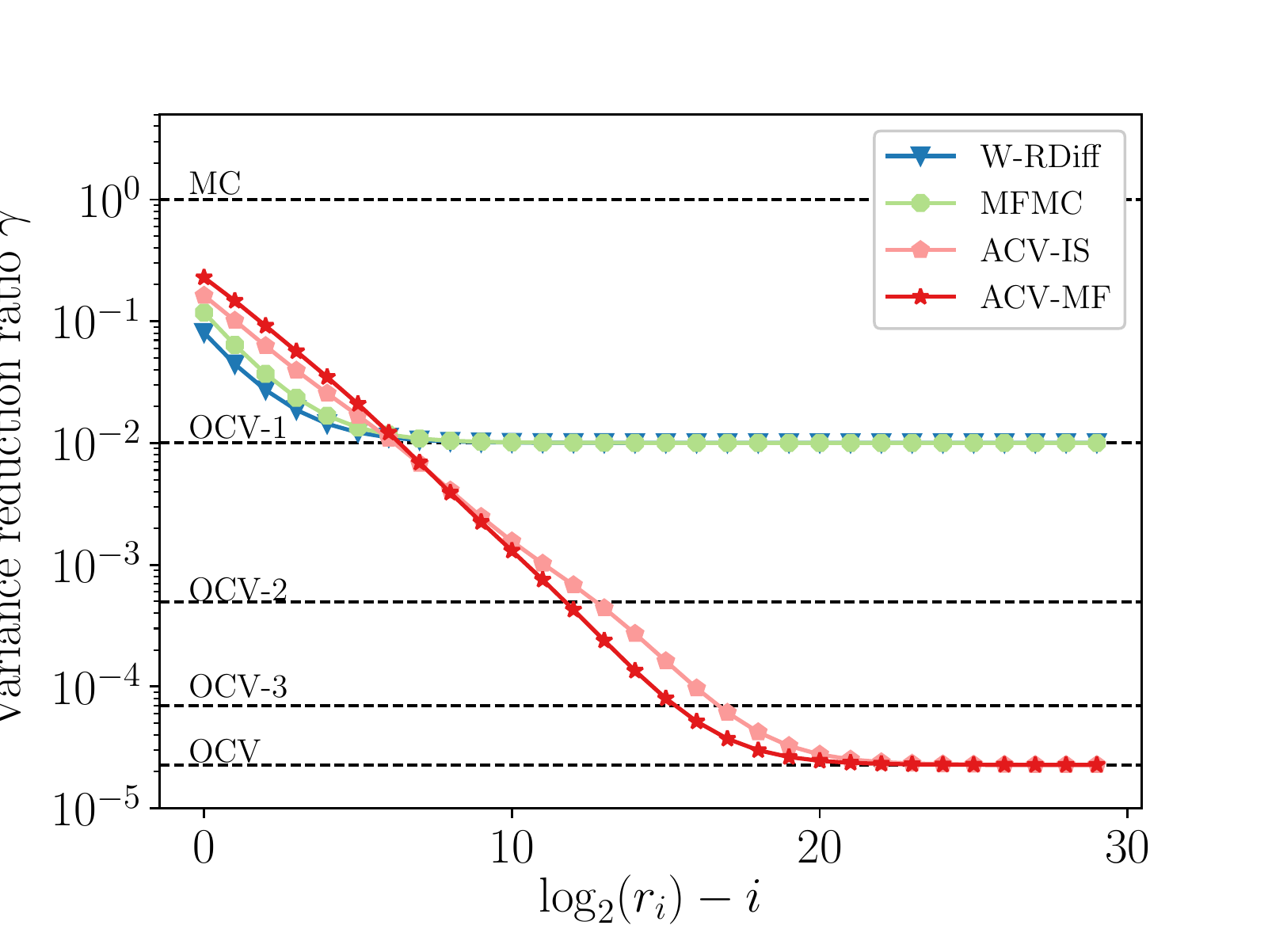}
  \caption{Comparison of variance reduction of ACV-IS and ACV-MF with W-RDiff and MFMC on the example from \S\ref{sec:motivate_example}. ACV-IS and ACV-MF are the only ones to converge to OCV}
  \label{fig:var_reduct_inter}
\end{figure}
When the ratios $\absr{i}$ are small (left side of Figure~\ref{fig:var_reduct_inter}), we are in a regime where the variance reduction is less than OCV-1. 
For large ratios $\absr{i}$ (right side of Figure~\ref{fig:var_reduct_inter}), both ACV estimators converge to OCV. In regimes with smaller ratios, it might be appropriate to target a particular variance reduction level based on the ``maximum'' attainable variance reduction. Once this level is identified, we can use recursion to accelerate the estimator to this optimum. Such a hybrid approach is described in the next section.

To summarize, the key features of ACV-IS and ACV-MF are given in Table~\ref{tab:novelmethods} and Figures~\ref{fig:acv1_sampling_scheme} and~\ref{fig:acv2_sampling_scheme}.

\begin{table}
{\footnotesize  
    \begin{tabular}{|c|c|c|c|c|}
    \hline
    Algorithm & Relation between $\sset{}$ and $\sset{i}$ &$\sset{i}^1$ & $\sset{i}^2$ & Reduction Ratio $\gamma$ \\
    \hline
    \hline
    ACV-IS   & $\sset{} \cap \sset{i} = \sset{i}^1$, $\sset{i}^2 \cap \sset{j}^2 = \emptyset$ for $1\leq i \neq j$& $\sset{}$ & $ \sset{i} \setminus \sset{i}^1 $ & $1 - R_{\acvone}^2$ Th.~\ref{th:opt_acv1} 
    \\
    \hline
    ACV-MF   & $\sset{} \cap (\sset{i}\setminus\sset{i}^1) = \emptyset$ & $\sset{}$ &  $\sset{i}$ & $1 - R_{\acvtwo}^2$ Th.~\ref{th:opt_acv2} 
    \\
    \hline
    \end{tabular}
}
  \centering
  \caption{Summary of the two convergent estimators ACV-IS and ACV-MF. The main difference between the two estimators is that the sets $\sset{i}^2$ are independent between two distinct models for ACV-IS, whereas we have $\sset{i+1}^2 \cap \sset{i}^2 = \sset{i}^2$ for ACV-MF. ACV-MF is closely related to MFMC; the main difference is that only $\nhf$ samples are used for $\sset{i}^1$. The variance reduction of both ACV-IS and ACV-MF \textit{in the limit of infinite samples of $\cv{i}$} is greater than or equal to $\ccoeff{1}^2$ ($R_{\acvone}^2 \geq \ccoeff{1}^2$ and $R_{\acvtwo}^2 \geq \ccoeff{1}^2$). For reference, the reverse is true for W-RDiff and MFMC.}
  \label{tab:novelmethods}
\end{table}

\subsection{Accelerating the approximate CV}
Sections~\ref{sec:mfmc} and \ref{sec:mlmc} show that a recursive CV estimator limits the maximum achievable variance reduction. However these strategies may be useful to accelerate convergence to a target CV level in cases where there is not enough data from higher-fidelity models to achieve the targeted performance directly. This targeted level could be the single or two-level optimal CV (OCV-1 and OCV-2), or any other level up to OCV (all available CVs).
The recursive techniques we have discussed all had sampling strategies that accelerated the convergence of $\estm{i}$ by CV $\cv{i+1}$, and as such were limited to convergence to OCV-1. 
In this section, we develop an algorithm to show how our framework can be used to create new schemes that accelerate convergence to any given target. This proposed algorithm is only one realization of a myriad of possible approaches. Our intention is to provide an example that uses this general framework to develop and explore new algorithms. 

The scheme is conceptually simple. We first partition all of the control variates into two groups; the first $K$ variables form a $K$-level approximate control variate, and the last $\nmodels-K$ variables are used to reduce the variance of estimating $\mu_{L}$ for some $L \leq K$. The resulting estimator accelerates convergence to OCV-$K$, and $L$ provides a degree of freedom for targeting a control variate level that contributes the greatest to the estimator variance.
\begin{definition}[ACV-KL: accelerated ACV]
  Let $K,L \leq \nmodels$ and $K \in \posint$ with $0 \leq L \leq K$. The ACV-KL estimator is 
  \begin{equation}  \label{eq:K-acv2}
    \begin{split}
      &\Kacvest(\vec{\cvw},\vec{\sset{}}) = \est{}(\sset{}) + \sum_{i=1}^{K}\cvw_i\left(\est{i}\left(\sset{}\right) - \estm{i}\left(\sset{i}\right)\right) + 
       \sum_{i=K+1}^{\nmodels} \cvw_i\left(\est{i}\left(\sset{L}\right) - \estm{i}\left(\sset{i}\right)\right),
    \end{split}
  \end{equation}
    where we select $\sset{i}^1 = \sset{}$ for $i \leq K$ and $\sset{i}^1 = \sset{L}$ for $i > K$. Furthermore $\sset{i}^2 = \sset{i}$ for all $i$. The sets $\sset{i} \setminus \sset{i}^1$ can be chosen in several ways. Here we choose the same sampling strategy as ACV-MF\footnote{Note that the ACV-IS sampling strategy for $\sset{i}^2$ could also have been chosen, but we do not analyze this approach here. Our aim is to demonstrate a basic framework for deriving approximate CV estimators, and many combinations are possible. We have chosen representative realizations of the framework to convey the main concepts.}: $\sset{i}^1 = \sset{}$, $\sset{i}^2 = \sset{i}$ and $\sset{j}^{(k)} = \sset{i}^{(k)}$ for $j > i$ and $k \leq \min(\absr{i},\absr{j})\nhf$.
\end{definition}

This estimator differs from the previous recursive estimators because
the first two terms in \eqref{eq:K-acv2} correspond to an ACV-MF estimator with $K$ CVs and the last term adds a CV scheme to the ACV-MF estimator, \ie 
\begin{equation}
    \begin{split}
    \Kacvest(\vec{\cvw},\vec{\sset{}}) &= \acvtwoestK(\cvw_1,\ldots,\cvw_K,\sset{},\sset{1},\ldots,\sset{k}) + 
     \sum_{i=K+1}^{\nmodels} \cvw_i\left(\est{i}\left(\sset{L}\right) - \estm{i}\left(\sset{i}\right)\right).
  \end{split}
\end{equation}
The inclusion of the ACV-MF estimator enables the ACV-KL estimator to converge to the OCV estimator and the last term reduces the variance of $\estm{L}$, thereby accelerating convergence of the scheme. The optimal weights and variance reduction for the ACV-KL estimator are now provided.

\begin{theorem}[Optimal CV-weights and variance reduction for ACV-KL]\label{th:ACV-KL}
  Assume $\absr{i} > \absr{L}$ for $i > L$, then the optimal weights for the ACV-KL control variate are
  \begin{equation}
      \optKacv(K,L) = - \left[ \covm \circ \fmat^{(K,L)}\right]^{-1} \left[\Diag{\fmat^{(K,L)}} \circ \covv \right],
  \end{equation}
  and the estimator variance $V \equiv \VarF{\Kacvest (\optKacv(K,L))}$ is
  \begin{equation}
    V = \frac{\varF{\qoi}}{\nhf}\left(1 - R^2_{\Kacv}(K,L)\right),
    \textrm{ where } R_{\Kacv}^2(K,L) = \bvec{a}^{\normalfont T} \frac{\left[\covm \circ \fmat^{(K,L)}\right]^{-1}}{\varF{\qoi}} \bvec{a},
  \end{equation}
  $\bvec{a} = \left[\Diag{\fmat^{(K,L)}} \circ \covv \right]$
  and $\fmat^{(K,L)} \in \reals^{\nmodels \times \nmodels}$ has elements
  \begin{equation}
    \begin{split}
      & \fmat^{(K,L)}_{ij} = 
      \left\{
    \begin{array}{cl}
      \frac{\min(\absr{i},\absr{j}) - 1}{\min(\absr{i},\absr{j})} & \textrm{ if } i,j \leq K \\
      \frac{(\absr{i}-\absr{L})(\absr{j}-\absr{L}) + \absr{L}(\min(\absr{i},\absr{j}) - \absr{L})}{\absr{i}\absr{j}\absr{L}} & \textrm{ if } i,j > K \\
      \left[\frac{\absr{i} - \absr{L}} {\absr{i} \absr{L}} \right] & \textrm{ if } L < i \leq K, \ j > K \\
      \left[\frac{\absr{j} - \absr{L}} {\absr{j} \absr{L}} \right]  & \textrm{ if }  L < j \leq K, \ i > K \\
      0 & \textrm{ otherwise }
    \end{array}
    \right. , 
    \end{split}
  \end{equation}
  for $i \neq j.$ The diagonal elements are  $\fmat^{(K,L)}_{ii} = \frac{\absr{i}-1}{\absr{i}}$ if $i \leq K$ and $\fmat^{(K,L)}_{ii} = \frac{\absr{i} - \absr{L}}{\absr{i}\absr{L}}$ otherwise.
\end{theorem}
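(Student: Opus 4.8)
The plan is to recognise the ACV-KL estimator~\eqref{eq:K-acv2} as a particular instance of the generic ACV estimator~\eqref{eq:acv}, so that the optimal weights and variance reduction follow immediately from Proposition~\ref{prop:ocv} once the two ingredients $\covdiff$ and $\covhc$ are computed for this specific sampling design. Thus the whole argument reduces to evaluating these two quantities and showing that they take the Hadamard-product form
\[
\covdiff = \frac{1}{\nhf}\left(\covm \circ \fmat^{(K,L)}\right), \qquad \covhc = \frac{1}{\nhf}\left(\Diag{\fmat^{(K,L)}} \circ \covv\right),
\]
after which substitution into Proposition~\ref{prop:ocv} delivers the claim. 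This mirrors the structure of the proof of Theorem~\ref{th:opt_acv2} for ACV-MF; the only genuinely new bookkeeping comes from the two-group anchoring, namely $\sset{i}^1 = \sset{}$ for $i \leq K$ versus $\sset{i}^1 = \sset{L}$ for $i > K$.

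The computational engine is the elementary identity that for two Monte Carlo sample means $\hat{A}$ and $\hat{B}$ built from i.i.d. draws over ordered sample sets $S_A$ and $S_B$ one has $\covF{\hat{A},\hat{B}} = (|S_A \cap S_B| / (|S_A|\,|S_B|))\,\covF{A,B}$, since only shared realisations contribute. First I would expand each $\cvdiff{i} = \est{i}(\sset{i}^1) - \estm{i}(\sset{i})$ into its two pieces and write $\covF{\cvdiff{i},\cvdiff{j}}$ as a sum of four such terms and $\covF{\cvdiff{i},\est{}}$ as a sum of two. The nested ACV-MF design fixes every intersection size: all low-fidelity sets share their first $\nhf$ samples with $\sset{}$, and the ordered nesting $\sset{j}^{(k)} = \sset{i}^{(k)}$ gives $|\sset{i}\cap\sset{j}| = \min(\absr{i},\absr{j})\nhf$. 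Tracking these cardinalities through the four terms, and invoking the hypothesis $\absr{i} > \absr{L}$ for $i > L$ to collapse the relevant minima, reproduces each of the five off-diagonal cases (and the two diagonal cases) in the stated definition of $\fmat^{(K,L)}$.

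With $\covdiff$ and $\covhc$ in the displayed Hadamard form, the conclusion follows by substitution. Using $\VarF{\est{}} = \VarF{\qoi}/\nhf$, the weight formula $\optKacv = -\covdiff^{-1}\covhc$ has its two factors of $\nhf^{-1}$ cancel, giving exactly $-[\covm\circ\fmat^{(K,L)}]^{-1}[\Diag{\fmat^{(K,L)}}\circ\covv]$; likewise the quadratic form $R^2_{\Kacv} = \covhc^{T}(\covdiff^{-1}/\VarF{\est{}})\covhc$ collapses to $\bvec{a}^{T}[\covm\circ\fmat^{(K,L)}]^{-1}\bvec{a}/\VarF{\qoi}$ with $\bvec{a} = \Diag{\fmat^{(K,L)}}\circ\covv$, which is the asserted expression.

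I expect the main obstacle to be the $i,j > K$ block, where both differences are anchored at $\sset{L}$ and all four covariance terms survive with distinct intersection sizes; reassembling them into the compact numerator $(\absr{i}-\absr{L})(\absr{j}-\absr{L}) + \absr{L}(\min(\absr{i},\absr{j})-\absr{L})$ is the one substantive algebraic step. A secondary check is the ``otherwise'' case, where $i \le L$ forces $\absr{i}\le\absr{L}$ and hence cancellations that make $\fmat^{(K,L)}_{ij}$ vanish; here the ordering hypothesis is precisely what distinguishes a zero entry from a nonzero one, so I would verify this case separately to confirm the block structure of $\fmat^{(K,L)}$.
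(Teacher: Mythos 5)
Your proposal is correct, and it reaches the theorem by the same overall reduction the paper uses: view ACV-KL as an instance of the generic ACV estimator, compute $\covdiff$ and $\covhc$ for this particular sampling design, and substitute into Propositions~\ref{prop:acv_var} and~\ref{prop:ocv}. Where you differ is the computational engine. The paper's proof constructs, for each of its cases ($i,j\le K$; $i,j>K$; $i=L$; $i<L$; $L<i\le K$, $j>K$), an explicit ad hoc decomposition of each $\cvdiff{i}$ into sums over disjoint, mutually independent sample groups (Equations~\eqref{eq:splitting}, \eqref{eq:kacv-split2}, \eqref{eq:kacv-split3}) and reads off covariances group by group. You instead invoke a single overlap-counting lemma, $\covF{\hat{A},\hat{B}} = \left(|S_A\cap S_B|/(|S_A|\,|S_B|)\right)\covF{A,B}$ for sample means over i.i.d.\ draws, so that every entry of $\covdiff$ reduces to bookkeeping of four intersection cardinalities and every entry of $\covhc$ to two. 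The two are mathematically equivalent (the paper's splittings are precisely how one proves your lemma), but your organization is more systematic, less error-prone, and makes the vanishing entries transparent as exact cancellations of overlaps; the paper's version is self-contained and shows directly which sample groups drive each surviving contribution. One caveat applies to both proofs: the stated hypothesis $\absr{i}>\absr{L}$ for $i>L$ does not by itself fix all the minima. The zero block requires $\absr{i}\le\absr{L}$ for $i<L$ (you flag this correctly), and the $L<i\le K$, $j>K$ entry requires $\min(\absr{i},\absr{j})=\absr{i}$; if instead $\absr{j}<\absr{i}$ your overlap count yields $(\absr{j}-\absr{L})/(\absr{j}\absr{L})$ rather than the stated $(\absr{i}-\absr{L})/(\absr{i}\absr{L})$. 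The paper makes the same implicit assumption (its splitting \eqref{eq:kacv-split2} opens with ``assume $\absr{j}>\absr{i}$ for $j>i$''), so you match its level of rigor, but you should state the monotone-ordering assumption explicitly at the point where you collapse that minimum rather than attributing it to the $\absr{i}>\absr{L}$ hypothesis alone.
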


The proof, provided in~\ref{app:proof:ACV-KL}, is constructive in that it provides an explicit expression for the variance reduction and resulting estimator. Algorithm~\ref{alg:acvkl} provides pseudocode that summarizes this procedure. The algorithm requires (an estimate of) the covariances $\covm$ and $\covv$ and the variance of $\qoi$. Using these quantities, the algorithm provides an (approximate) optimal control variate weight and associated variance reduction for any given sample sizes defined by the ratios $r_i$ and the parameters $K$ and $L$. Note that, for this algorithm, $\fmat^{(K,L)}$ does not converge to $\mat{1}_{\nmodels \times \nmodels}$ as the number of samples goes to infinity, unless $K = L = \nmodels$ for which ACV-KL becomes ACV-MF. Furthermore, since this ACV-KL approach generalizes the ACV-MF sampling strategy, it is also a drop-in replacement of MFMC.

\begin{figure}
  \centering
  \begin{subfigure}[b]{0.355\textwidth}
    \centering
    \includegraphics[width=\textwidth,clip,trim=0 0 40 0]{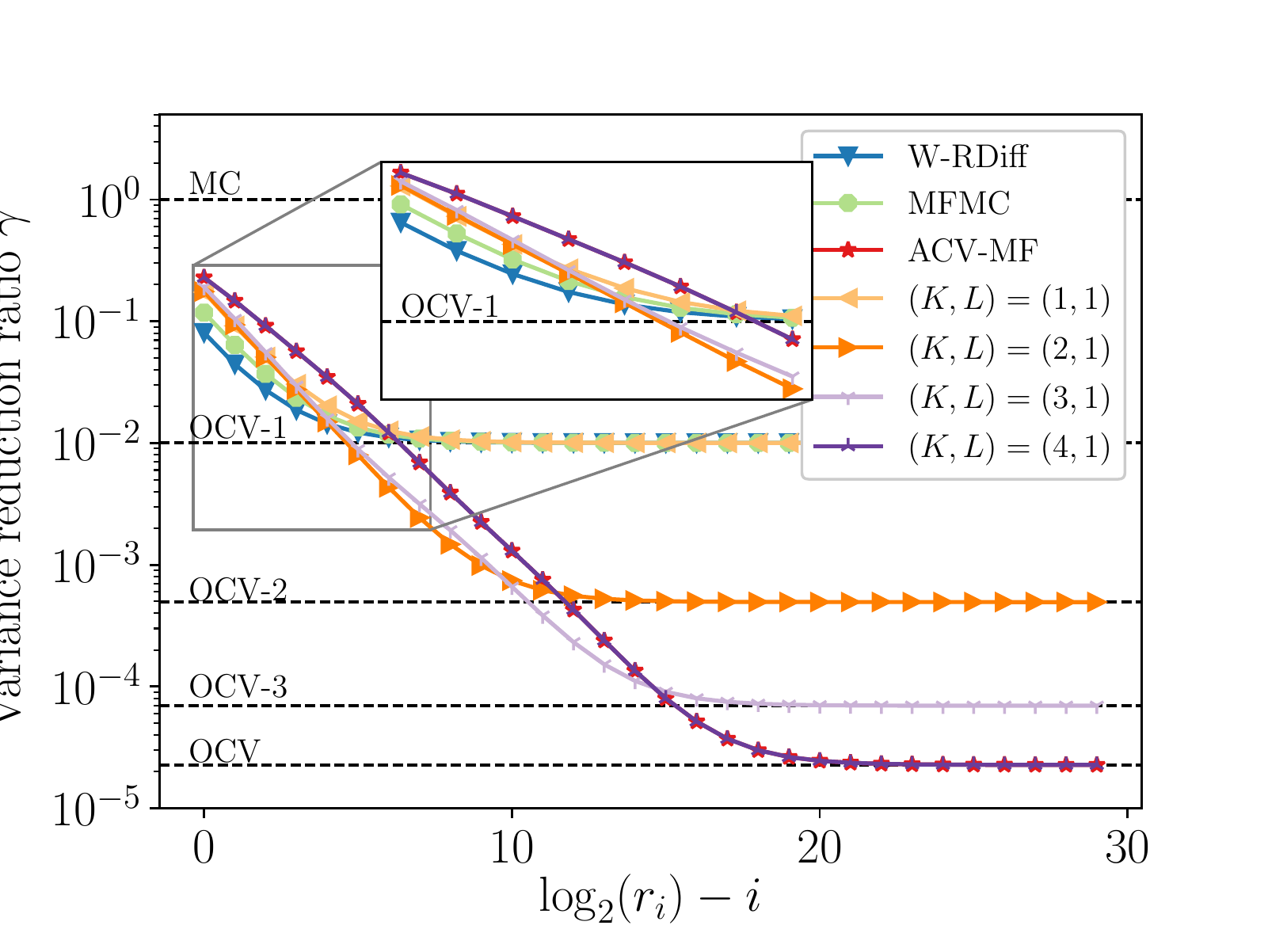}
    \caption{Varying $K$ with $L=1$}
  \end{subfigure}
  \begin{subfigure}[b]{0.31\textwidth}
    \centering
    \includegraphics[width=\textwidth,clip,trim=51 0 40 0]{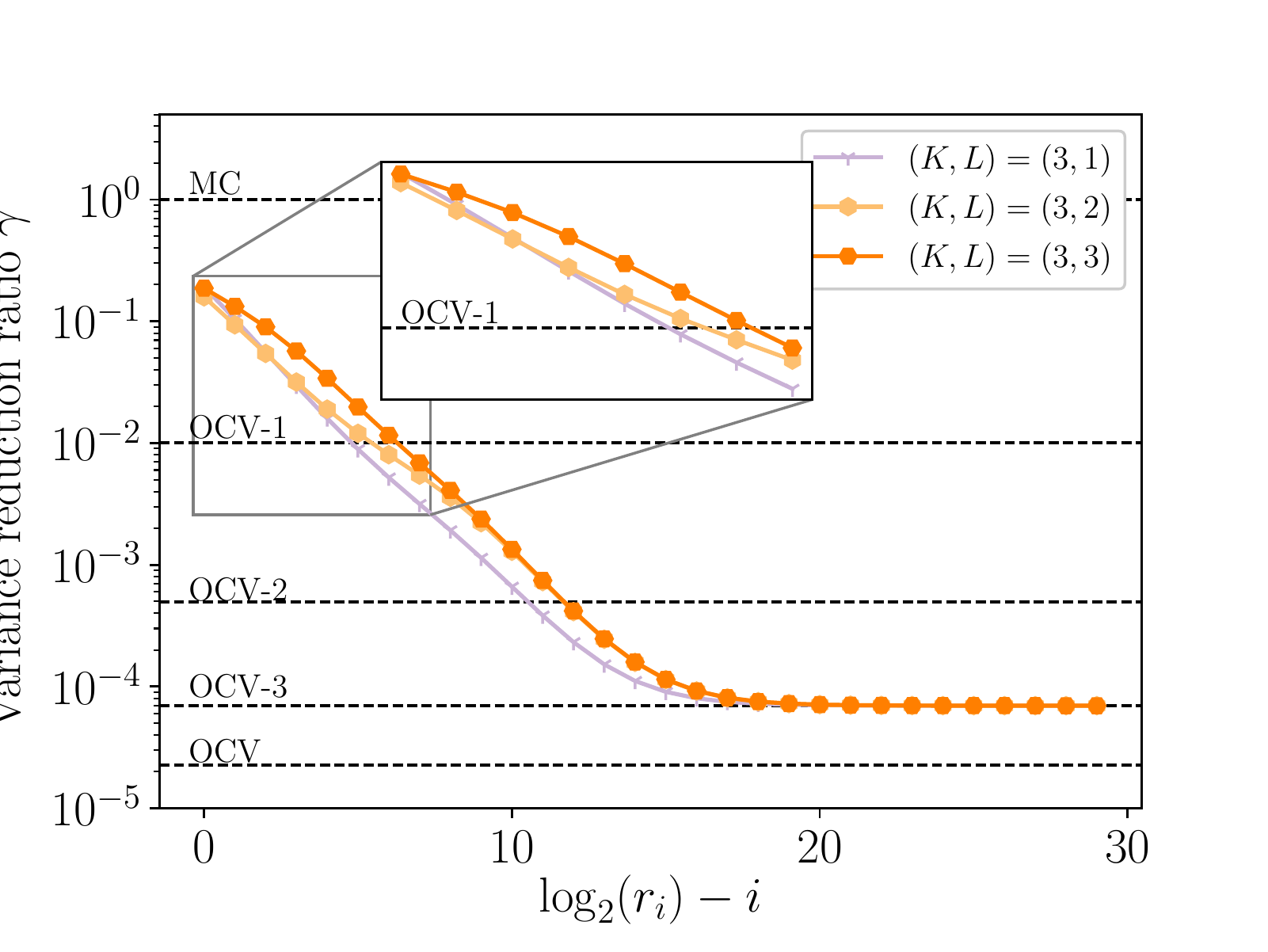}
    \caption{Varying $L$ fixed $K=3$}
  \end{subfigure}
  \begin{subfigure}[b]{0.31\textwidth}
    \centering
    \includegraphics[width=\textwidth,clip,trim=51 0 40 0]{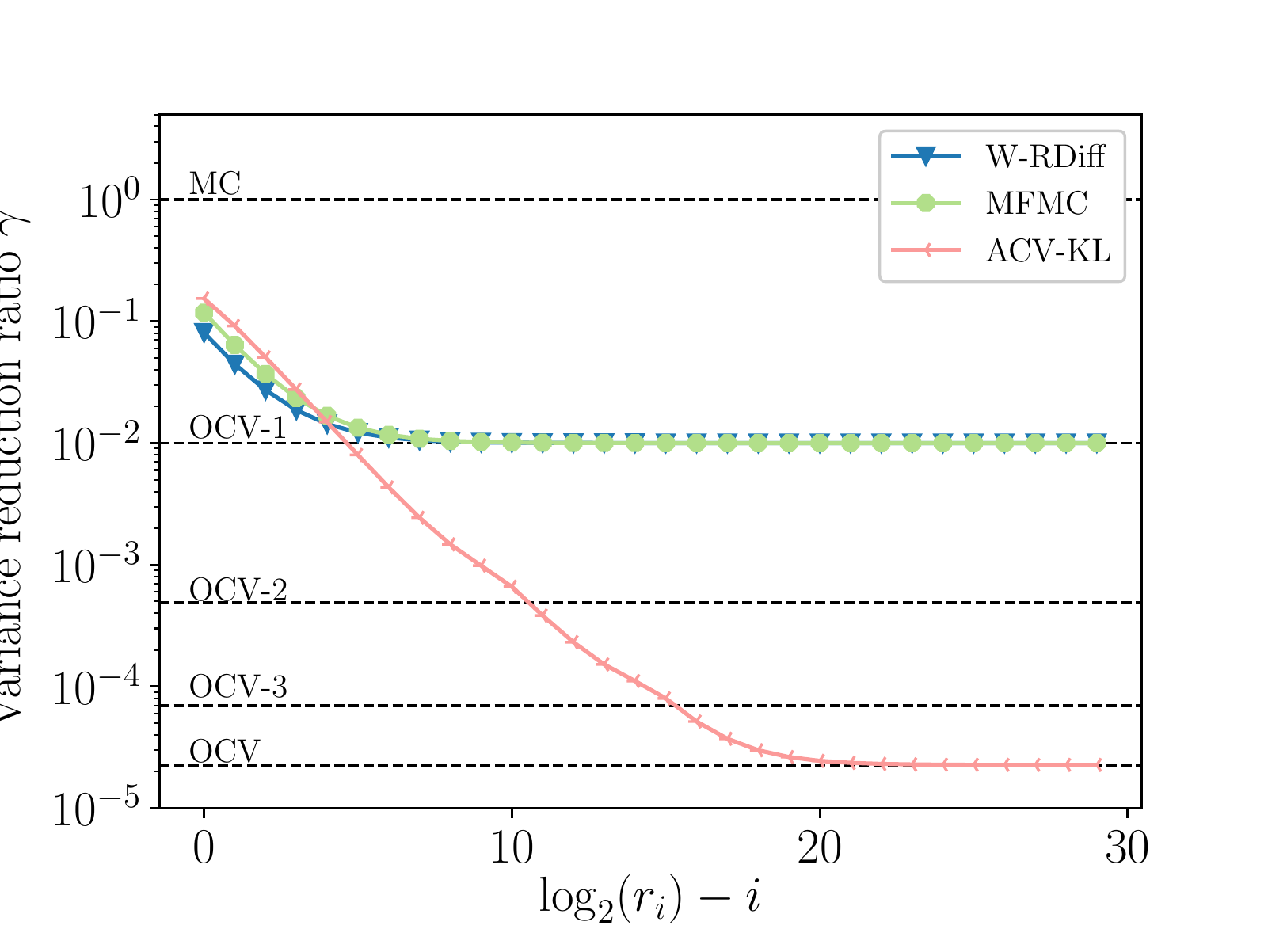}
    \caption{Best combination of $(K,L)$.}
    \label{fig:acv_kl_conv}
  \end{subfigure}
  \caption{Accelerated convergence to target levels by the ACV-KL class of estimators for various $(K,L)$ combinations on the monomial example. The kinks in optimal $(K,L)$ this plot are indicative of transitions between optimal $(K,L)$ combinations}
  \label{fig:var_reduct_inter2}
\end{figure}

The convergence of the ACV-KL estimator for various $(K,L)$ parameters is shown in Figure~\ref{fig:var_reduct_inter2}. The plot highlights that the ACV-MF estimator can be accelerated to any target baseline level, outperforming the baseline ACV-MF algorithm. Furthermore, the results demonstrate that it is possible for ACV-KL to achieve similar performance to the fully recursive algorithm in the low $\absr{i}$ region.

The best choice of $K$ and $L$ is problem dependent; however, they can be estimated at negligible cost. Specifically, we can embed Algorithm~\ref{alg:acvkl} inside an outer loop that, for a given evaluation strategy, searches over all combinations of parameters $K$ and $L$ to minimize the variance. This approach would essentially follow the lowest-variance line of the $(K,L)$ options shown in the combined left and middle panels of Figure~\ref{fig:var_reduct_inter2}. Figure~\ref{fig:acv_kl_conv} shows the performance of the ACV-KL estimator that chooses the best $(K,L)$ combination. As a result of these discrete choices, it has ``kinks'' as increasing sample sizes lead to different combinations.

{\footnotesize
\begin{algorithm}[h!]
  \caption{Approximate Control Variate (ACV-KL)} \label{alg:acvkl}
  \begin{algorithmic}[1]
    \REQUIRE {$\covm \in \reals^{\nmodels \times \nmodels}$: estimate of covariance among control variates; $\covv \in \reals^{\nmodels}$: estimate of covariance between $\qoi$ and each CV; $V$: estimate of the variance of the $\qoi$;  $(r_i)_{i=1}^{\nmodels}$: ratio of the number of evaluations of $\cv{i}$ to $\qoi$; $K$ and $L$: algorithm parameters where where $K \leq \nmodels$  and $L \leq K$.}
    \ENSURE $\vec{\cvw},R_{\Kacv}$: optimal weights and estimated variance reduction for the ACV-KL
    \STATE $\mat{A} = \textrm{zeros}(\nmodels, \nmodels)$
    \STATE $\mat{b} = \textrm{zeros}(\nmodels)$
    \FOR{$i = 1, \ldots, \nmodels$}
    \IF {$i \leq K$}
    \STATE $\mat{A}_{ii} = \frac{\absr{i}-1}{\absr{i}} \covm_{ii}$
    \ELSE
    \STATE $\mat{A}_{ii} = \frac{\absr{i} - \absr{L}}{\absr{i}\absr{L}} \covm_{ii}$ 
    \ENDIF
    \STATE $\mat{b}_i = \mat{A}_{ii} \covv_{i}$
    \FOR{$j = i+1, \ldots, \nmodels$}
    \IF { $i \leq K$ and $j \leq K$}
    \STATE $\mat{A}_{ij} = \left[\frac{\min(\absr{i},\absr{j}) - 1}{\min(\absr{i},\absr{j})} \right]\covm_{ij}$
    \ELSIF { $i > K$ and $j > K$}
    \STATE $\mat{A}_{ij} = \left[\frac{(\absr{i}-\absr{L})(\absr{j}-\absr{L}) + \absr{L}(\min(\absr{i},\absr{j}) - \absr{L})}{\absr{i}\absr{j}\absr{L}}\right]\covm_{ij}$
    \ELSIF {$L < i \leq K$ and $j > K$}
    \STATE $\mat{A}_{ij} = \left[\frac{\absr{i} - \absr{L}} {\absr{i} \absr{L}}\right]\covm_{ij}$
    \ENDIF
    \STATE $\mat{A}_{ji} = \mat{A}_{ij}$
    \ENDFOR
    \ENDFOR
    \STATE $\vec{\cvw} = -\mat{A}^{-1}\mat{b}$
    \STATE $R^2_{\Kacv} = \mat{b}^T\mat{A}^{-1}\mat{b} / V$
  \end{algorithmic}
\end{algorithm}
}

\subsection{Sample allocation}\label{sec:sample-allocation}
As previously mentioned, reducing the variance of a control variate estimator can be achieved by increasing the number of high fidelity evaluations and/or increasing the
number of low-fidelity simulations in order to exploit the correlation structure among models to the greatest extent possible. An efficient control variate estimator needs to reach the minimum overall variance by investing a fixed computational budget where it is more effective, \ie where the variance reduction per unit cost is greatest. 

We will seek to minimize the estimator variance subject to a constraint on the total cost. The form of the (un-weighted) RDiff and MFMC estimator enable analytic closed-form solutions to similar minimization problems, and their optimal allocation strategies can be found in~\cite{Giles2008} and~\cite{Peherstorfer2016b}, respectively. In our case the (weighted) RDiff, ACV-MF, and ACV-KL estimators do not yield analytic solutions, as far as we are aware, because of the complex inversion of $\covdiff$. In this paper, we rely on optimization approaches.

Let $J_{ACV}(\nhf,\vec{\absr{}}) = \left(1 - R^2_{ACV}\right) \frac{\VarF{\qoi}}{\nhf}$ denote the objective function for some ACV, where the expressions for $R^2_{ACV}$ are dependent on the ACV type. We minimize this objective subject to an inequality constraint on the cost and linear and bound constraints on the sampling design parameters:
\begin{equation}\label{eq:sample-allocation-optimization}
  \min_{\nhf, \vec{\absr{}},K,L}  \ \log(J_{ACV}(\nhf, \vec{\absr{}},K,L)) \quad  \textrm{ subject to  } 
   \nhf \left(\cost + \sum_{i=1}^{\nmodels} \cost_i \absr{i}\right) \leq C,  \quad
  \nhf      ~\geq ~1,     \quad 
  \absr{1}  ~\geq ~1,
\end{equation}
where  $\cost$ and $\cost_i$ denote the cost of obtaining a sample of $\qoi$ and $\cv{i}.$ In addition to the above constraints and unless specified otherwise in the numerical results, we constrain $\absr{i-1} > \absr{i}$. This constraint is not technically required, but rather empirically motivated as we have found it leads to more robust results. Again, we emphasize that our goal is not to provide the best optimization formulation, but rather a preliminary approach to demonstrate our theoretical findings.
For this problem to be well posed, we require $C \geq \cost + \sum_{i=1}^{\nmodels}\cost_i$, \ie, that the cost allowable is larger than that which corresponds to evaluating the quantity of interest and each control variates a single time.

In terms of implementation, we use a local gradient-based optimization procedure, in particular the interior point method SLSQP (see  \eg ~\cite{Nocedal2006}), to optimize over $\nhf$ and $\absr{i}$.  We use the automatic differentiation tool of the python PyTorch\footnote{https://pytorch.org} library to obtain gradients of the objective with respect to the optimization variables.  A couple of comments are in order:
\begin{enumerate}
  \item We minimize the log because it is better-conditioned 
  \item The covariances among the QoI and the CVs are estimated from pilot samples
\item  We relax the the integer optimization problem over $N$ by considering it as a continuous variable and then taking the ceiling of all non-integer sample allocations. 
\end{enumerate}
We leave developing additional optimization formulations for future work. Our intention here is to provide a baseline sample allocation procedure for studying the estimators.

\section{Numerical Experiments}\label{sec:numexamples}
Now we consider several numerical experiments to demonstrate the results of the theory.

\subsection{Model problem}\label{sec:num_model}
First, we again consider the monomial example from \S\ref{sec:motivate_example}. Recall, that the efficiency of the ACV estimators ultimately depends on the particular problem only through the correlation matrix and the costs. The correlation matrix in Table~\ref{tab:corr} is reasonable for what might be observed in realistic application
scenarios. However, the costs have thus far been unspecified. Here we investigate the effects of various cost prescriptions.

In Figure~\ref{fig:sample_opt}, we plot the actual variance reduction for two different cost prescriptions. Both have $\cost = 1$, but they differ in the gap between the QoI and the first control variate; in the left panel we have $\cost_i = 10^{-i}$ and in the right panel we have $\cost_i=10^{-i-1}$ for $i=1,2,3,4$. These results are based on solving \eqref{eq:sample-allocation-optimization} to select the number of samples assigned to each model. We see that all of the recursive estimators perform virtually identically, while ACV-KL provides greater variance reduction.
\begin{figure}
  \centering
  \begin{subfigure}[t]{0.45\textwidth}
    \centering
    \includegraphics[width=0.8\textwidth]{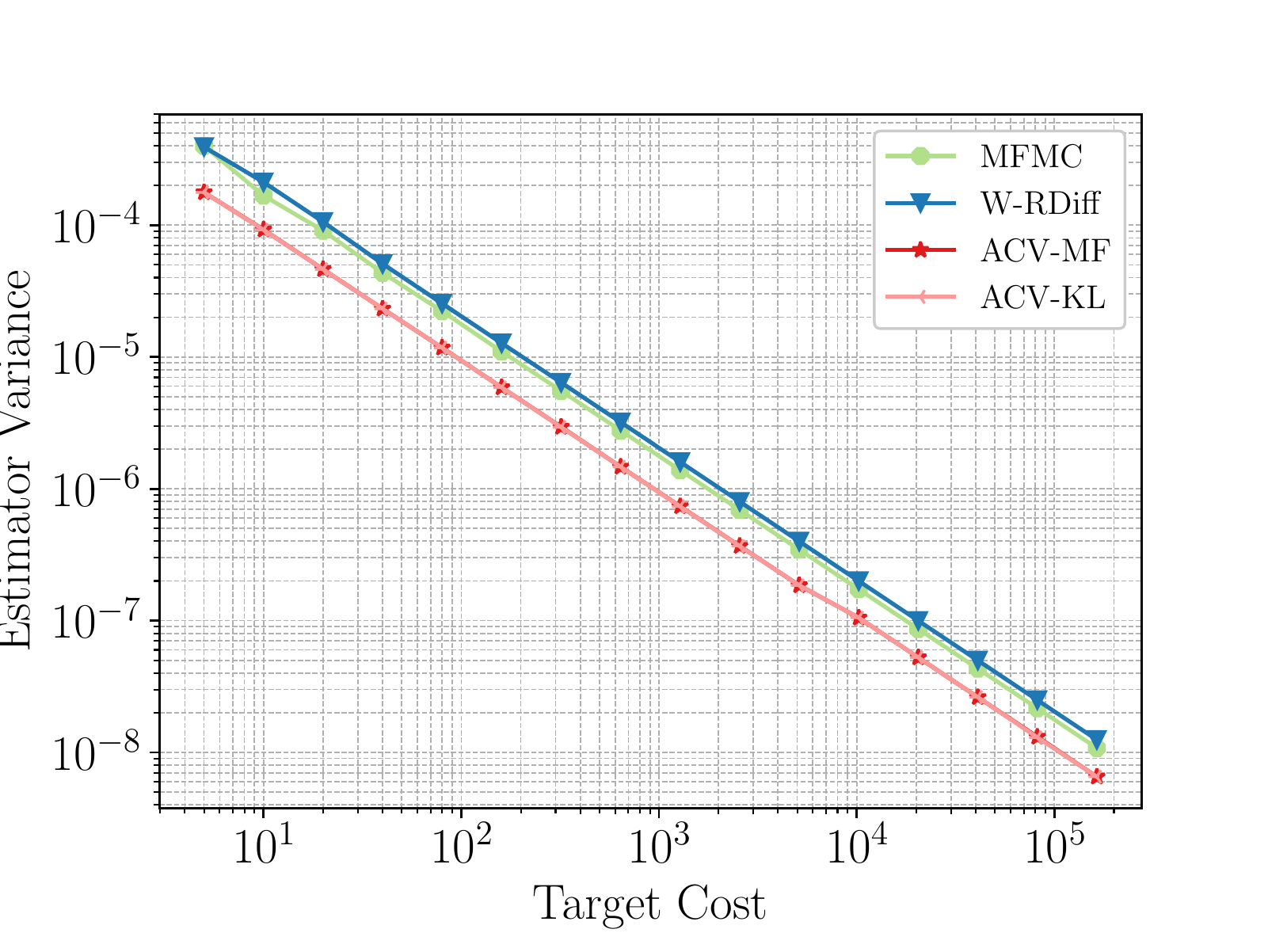}
    \caption{$\cost = 1$ and $\cost_i = 10^{-i}$ for $i = 1,2,3,4$.}
  \end{subfigure}
  \begin{subfigure}[t]{0.45\textwidth}
    \centering
    \includegraphics[width=0.8\textwidth]{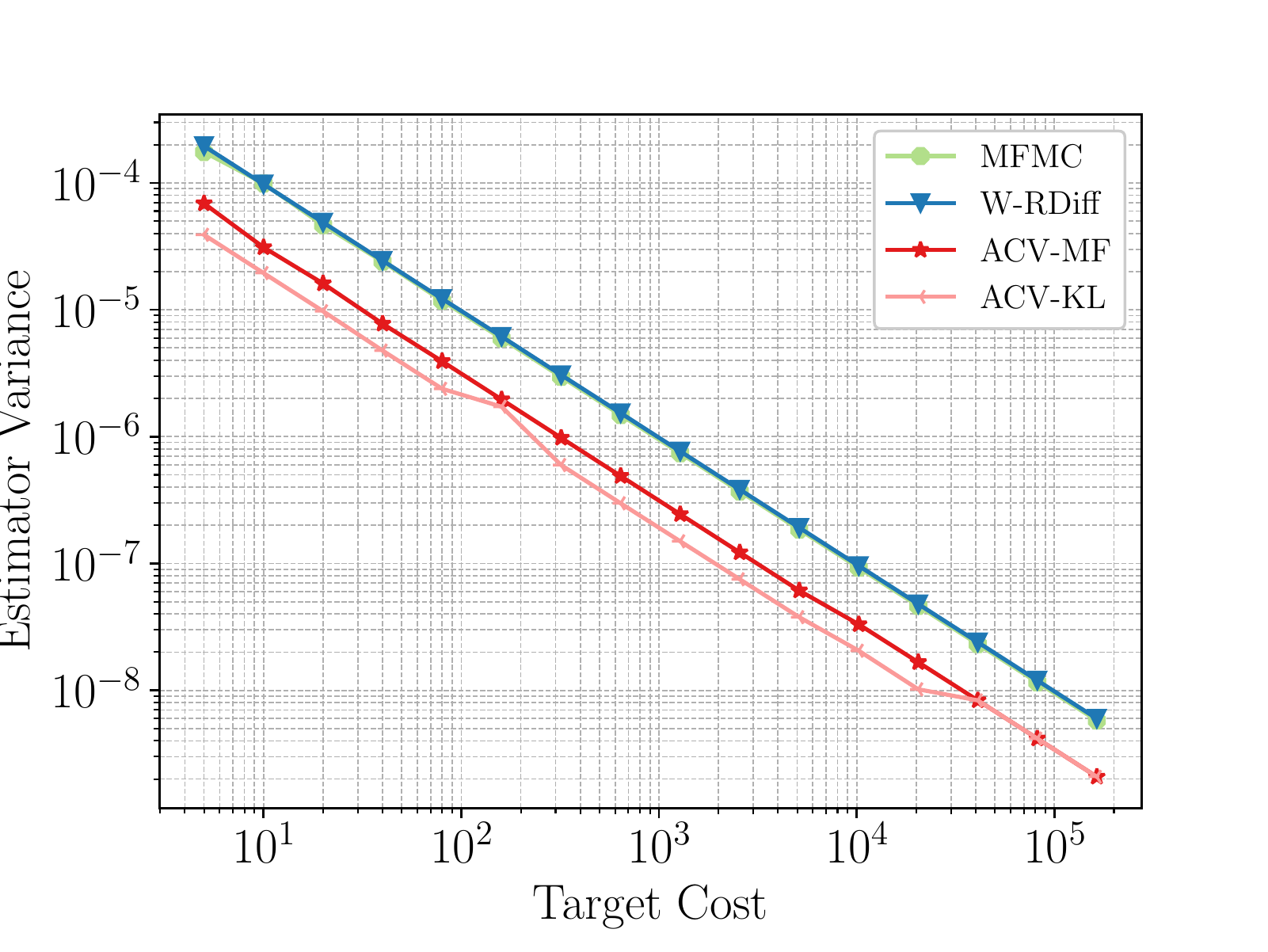}
    \caption{$\cost = 1$ and $\cost_i = 10^{-i-1}$ for $i = 1,2,3,4$.}
  \end{subfigure}  
  \caption{Variance reduction for the example from \S\ref{sec:motivate_example} under an optimal allocation strategy for each estimator. The W-RDiff and MFMC estimators perform virtually identically. The non-recursive ACV-KL estimator achieves significantly greater variance reduction. Right panel indicates greater reduction for greater cost difference.}
  \label{fig:sample_opt}
\end{figure}

The performance of all the sampling algorithms discussed in this paper are dependent on the relative cost of the models used. In Figure~\ref{fig:cost_vary}, we compare the variance reduction of W-RDiff and ACV-KL, when they are applied to the three monomial models, as we vary the costs of the two control variates. We constrain our search space to the case where the second control variate is less expensive to evaluate than the first, \ie, $\cost_2 < \cost_1$. Our goal is to demonstrate the scales at which the ACV-KL approach is able to provide significant variance reduction over recursive estimators. We only consider W-RDiff for this test because the results for other estimators are very similar (e.g., see Figure~\ref{fig:sample_opt}). Figure~\ref{fig:cost_vary} demonstrates that ACV-KL is able to achieve significant performance gains over the recursive approaches when the cost of the first control variate is at least 100 times lower than the truth model ($\cost_1 < 0.01$). 
In virtually all cases, ACV-KL performs better than regular Monte Carlo.

\begin{figure}
  \centering
  \includegraphics[scale=0.35]{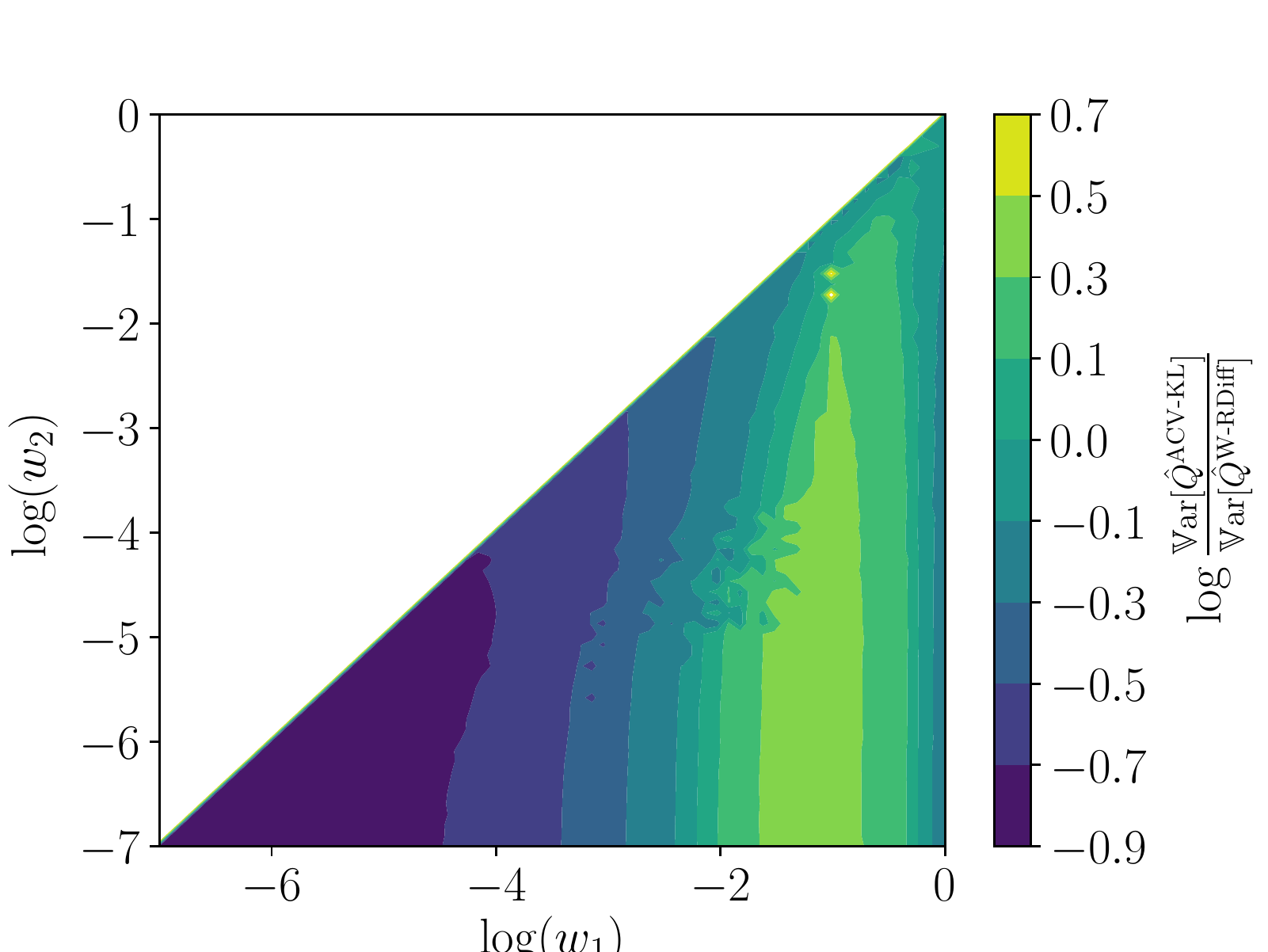}
  ~
  \includegraphics[scale=0.35]{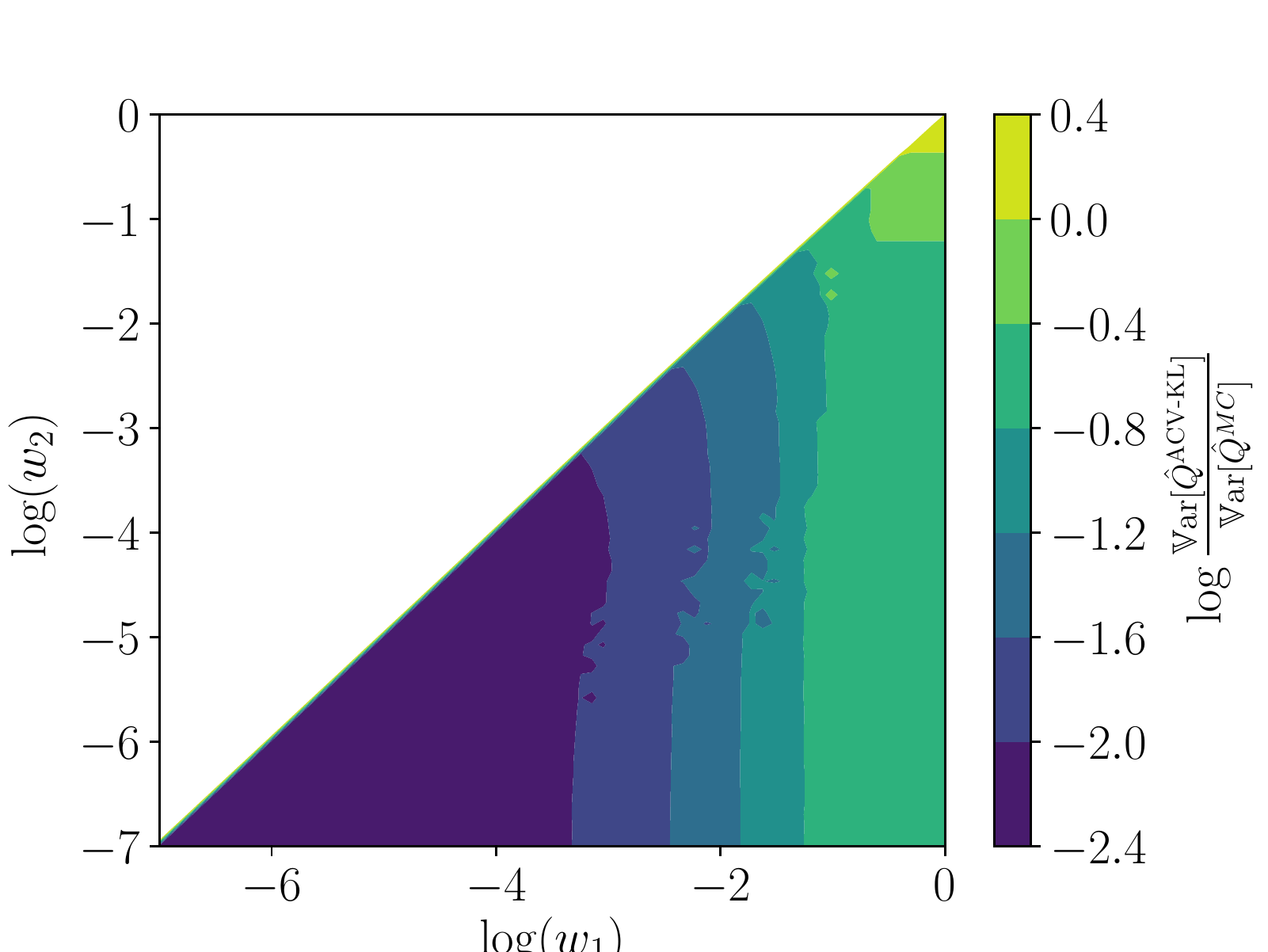}  
  \caption{Ratio of variance reduction achieved by ACV-KL compared to W-RDiff (left) and Monte Carlo (right) in the case of three total models ($\qoi(\om) = \om^3$, $\cv{1}(\om) = \om^2$, and $\cv{2}(\om) = \om$) as a function of the costs ($w_1, w_2$) of evaluating the control variates. The cost of the qoi is fixed to $\cost = 1$.}
  \label{fig:cost_vary}
\end{figure}

\subsection{A parametric model problem}

We now introduce a parametric model problem that enables us to quantify the performance of the different algorithms under several scenarios. We consider three two-dimensional functions; the first describes the high-fidelity quantity of interest and the next two serve as control variates:
\begin{equation}\label{eq:tunable_def}
\begin{split}
 Q   &= A \left( \cos\theta \, x^5 + \sin\theta \, y^5 \right), 
\quad  Q_1 = A_1 \left( \cos\theta_1 \, x^3 + \sin\theta_1 \, y^3 \right), 
\quad  Q_2 = A_2 \left( \cos\theta_2 \, x + \sin\theta_2 \, y \right), 
\end{split}
\end{equation}
where $x,y\sim\mathcal{U}(-1,1)$ and all $A$ and $\theta$ coefficients are real. We choose to set $A=\sqrt{11}$, $A_1=\sqrt{7}$ and $A_2=\sqrt{3}$ to obtain unitary variance for each model. This choice of unit variance for each model reduces the number of degrees of freedom in the problem parameterization since the correlation and covariance matrices are identical. Specifically, the analytic correlation/covariance matrix is given in Table~\ref{tab:corr_tunable}.
\begin{table}
  \centering
  {\footnotesize
  \begin{tabular}{|c|ccc|}
    \hline
             & $\qoi$   & $\cv{1}$ & $\cv{2}$   \\
    \hline
    $\qoi$   & 1        & $AA_1/9 \left(\sin\theta \sin\theta_1 + \cos\theta \cos\theta_1\right)$  & $AA_2/7\left(\sin\theta \sin\theta_2 + \cos\theta \cos\theta_2\right)$  \\
    $\cv{1}$ & \textit{sym} & 1        &  $A_1A_2/5\left(\sin\theta_1 \sin\theta_2 + \cos\theta_1 \cos\theta_2\right)$ \\
    $\cv{2}$ & \textit{sym} & \textit{sym} &  1        \\
   \hline
  \end{tabular}
  }
  \caption{Correlation/covariance matrix for Equation~\eqref{eq:tunable_def} with $A = \sqrt{11}$, $A_1 = \sqrt{7}$, and $A_2 = \sqrt{3}$.}
  \label{tab:corr_tunable}
\end{table}
To further reduce the number of degrees of freedom, 
we fix $\theta = \pi/2$ and $\theta_2 = \pi/6$ and let $\theta_1$ vary uniformly in the bounds $\theta_2 < \theta_1 < \theta$.  Each particular value of $\theta_1$ induces a different correlation $\ccoeff{1}$ between $\cv{1}$ and $\qoi$ and a different correlation $\ccoeff{12}$ between $\cv{1}$ and $\cv{2}$, whereas the correlation $\ccoeff{2}$ between $\cv{2}$ and $\qoi$ remains fixed. These correlations are reported for different settings of $\theta_1$ in Figure~\ref{fig:correlation_tunable_theta1}. We follow the optimization formulation specified in Section~\ref{sec:sample-allocation}; however, in this problem we have found we do not need to impose the $\absr{i-1} > \absr{i}$ constraint in the optimization.

\begin{figure}
  \centering
  \begin{subfigure}[b]{0.25\textwidth}
    \centering
    \includegraphics[width=\textwidth,clip,trim=0 0 25 0]{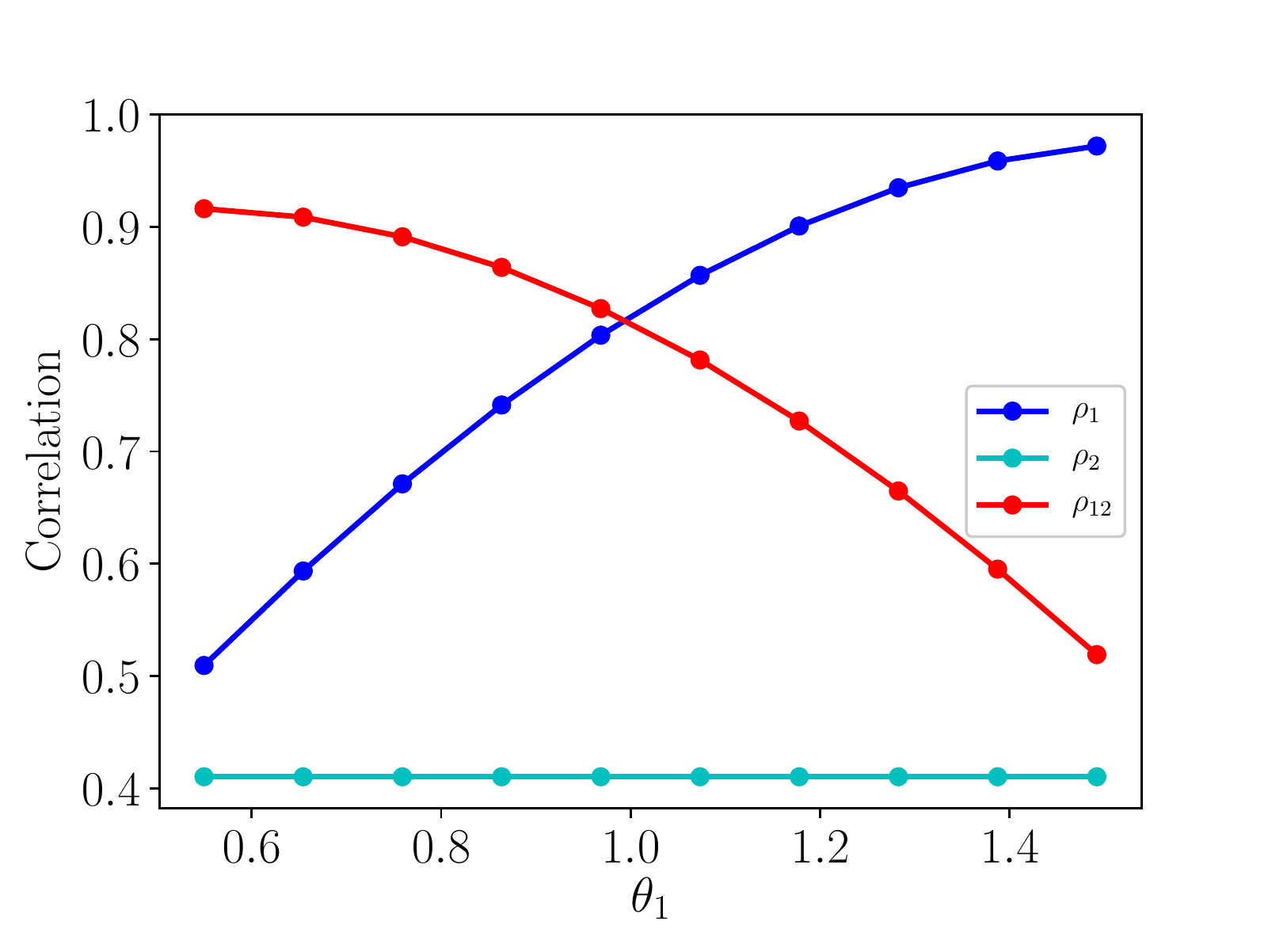}
    \caption{Correlations}
    \label{fig:correlation_tunable_theta1}
  \end{subfigure}
  ~
  \begin{subfigure}[b]{0.24\textwidth}
    \centering
    \includegraphics[width=\textwidth,clip,trim=0 0 10 0]{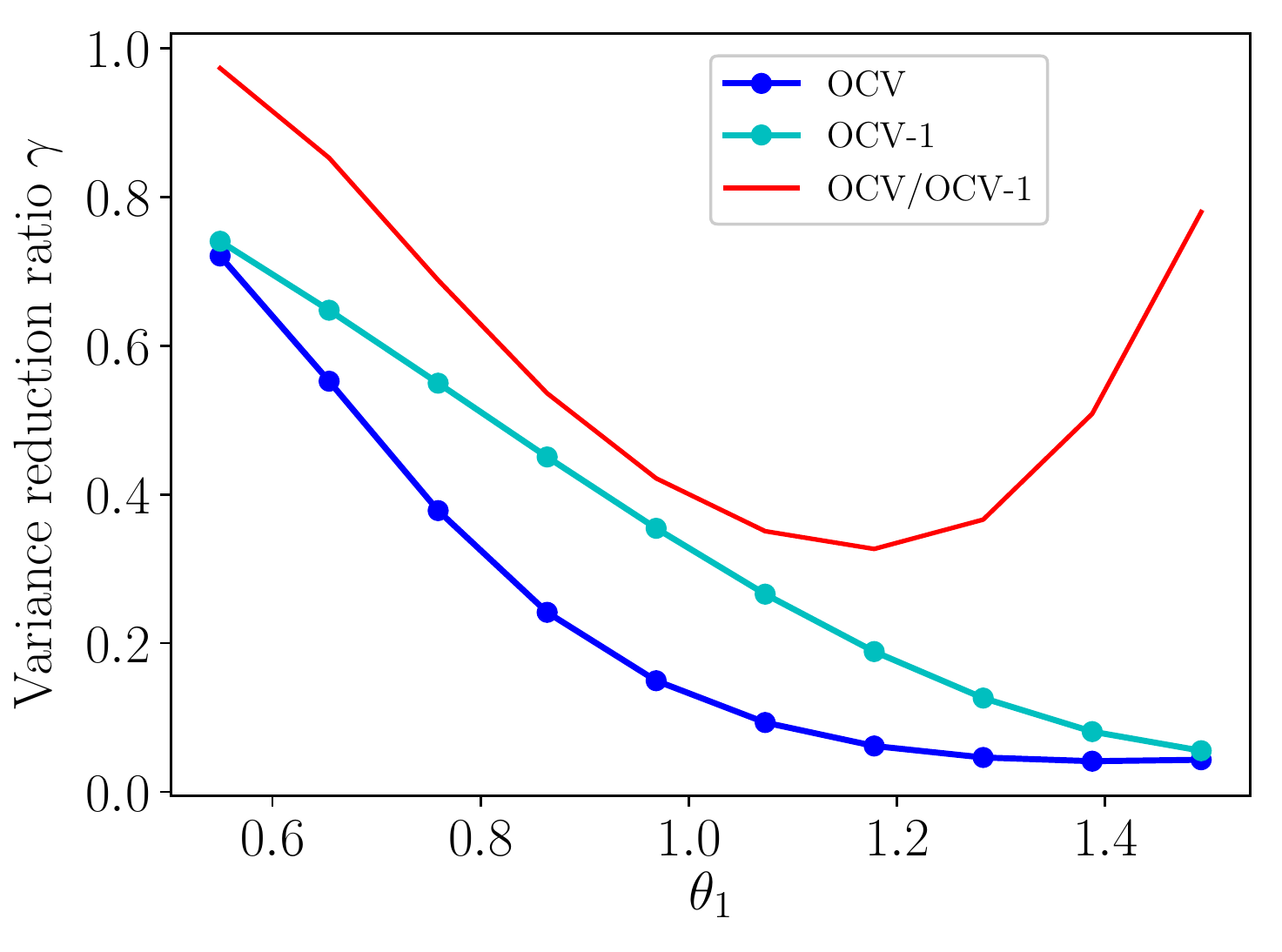}
    \caption{Var. reduction ratios}
    \label{fig:tunable_gap}
  \end{subfigure}  
  \caption{Correlation (a) and var. reduction gap (b) for Equation~\eqref{eq:tunable_def} with $\theta=\pi/2$, $\theta_2=\pi/6$ and $\theta_2<\theta_1<\theta$.}
\end{figure}

First we demonstrate that the variance reduction ratio of OCV is larger than the one obtained using only a single control variate OCV-1. The ratio of the variance of OCV-1 to OCV is reported in Figure~\ref{fig:tunable_gap}. In Figure~\ref{fig:tunable_gap} we also report the variance reduction obtained by OCV and OCV-1 with respect to MC. The greatest gap between OCV and OCV-1 occurs when $\theta_1$ is approximately $1.2$ (minimum of red curve).

Next we consider the effect of different cost relationships among the three models. For these purposes, we assign a relative cost of $1$ for $\qoi$, $1/w$ for $\cv{1}$ and $1 / w^2$ for $\cv{2}$. For an equivalent total cost of 100 runs of $\qoi$, we consider the effects of $w$ on the performance of each estimator. These results are given in Figure~\ref{fig:tunable_variable_reduction}.

Several interesting features are present in these results. First, the ACV-KL estimator generally outperforms all other estimators across the range of cost ratios enumerated in plots (a) through (e), and across the range of $\theta_1$ from $\pi/6$ to $\pi/2$.  The ACV-KL is nearly identical to ACV-MF since large deviations in $(K,L)$ cannot occur when there are only 2 control variates. In most of the scenarios, these two estimators also outperforms OCV-1. Second, we see a gradual convergence of the recursive estimators to OCV-1 and optimal estimators (ACV-IS, ACV-MF, ACV-KL) to OCV as $w$ increases. Qualitatively, the relationship of the performance of these two groups of estimators, with $\theta_1$, is very similar to their OCV-1 and OCV counterparts. In particular MFMC, W-RDiff, and RDiff, seem to decay almost linearly, just like OCV-1; and ACV-IS, ACV-MF, and ACV-KL seem to plateau with a similar shape to OCV. Finally, we see that the greatest advantage of our proposed estimators correspond to the cases of $\theta_1$ for which there is the greatest gap between OCV-1 and OCV. As this gap shrinks, our advantage decays.

\begin{figure}[h!]
  \centering
  \begin{subfigure}[b]{0.31\textwidth}
    \includegraphics[width=\textwidth,clip,trim=0 10 20 20]{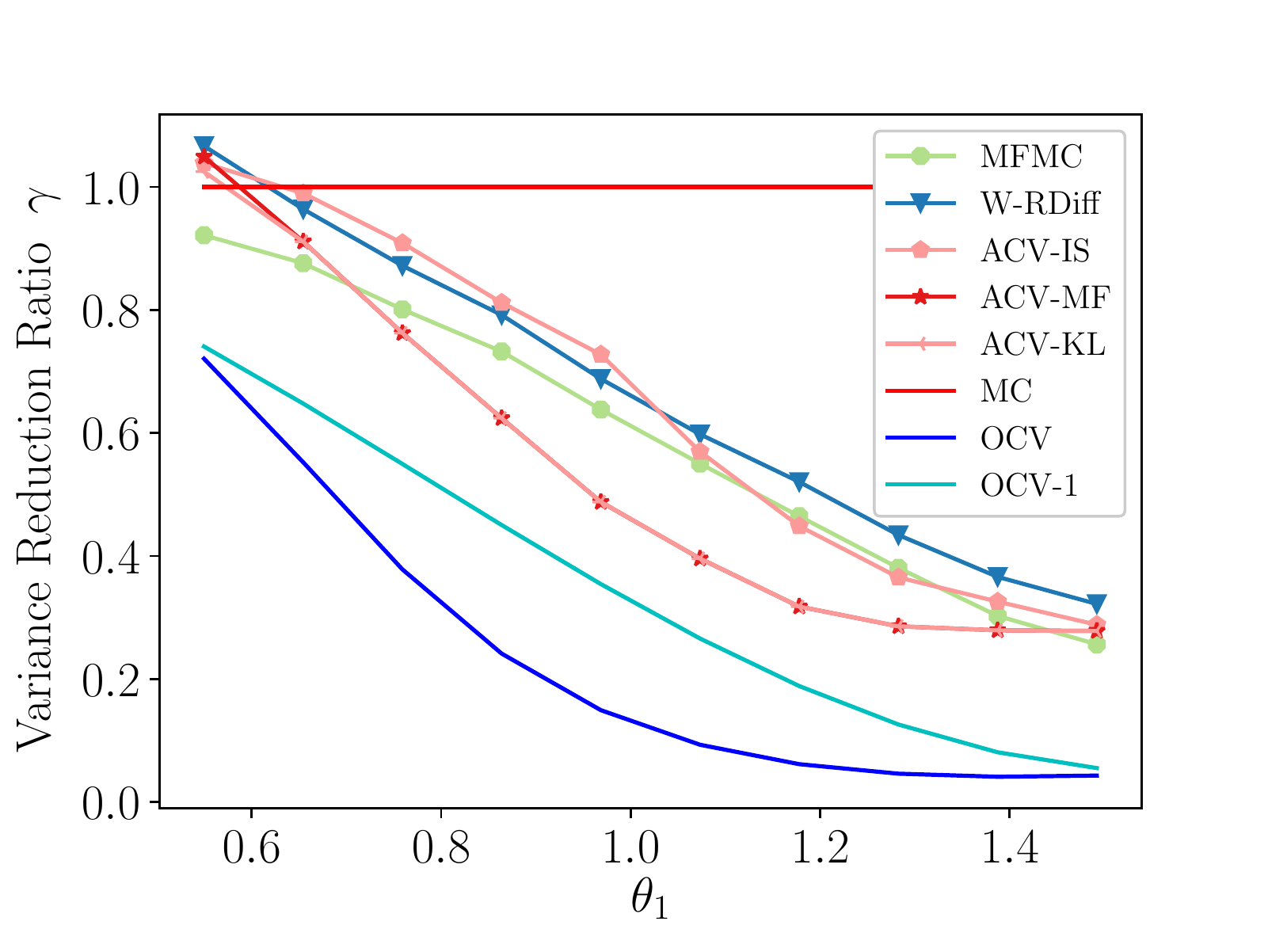}
    \caption{$w = 10$}
  \end{subfigure}
  ~
  \begin{subfigure}[b]{0.3\textwidth}
    \includegraphics[width=\textwidth,clip,trim=20 10 20 20]{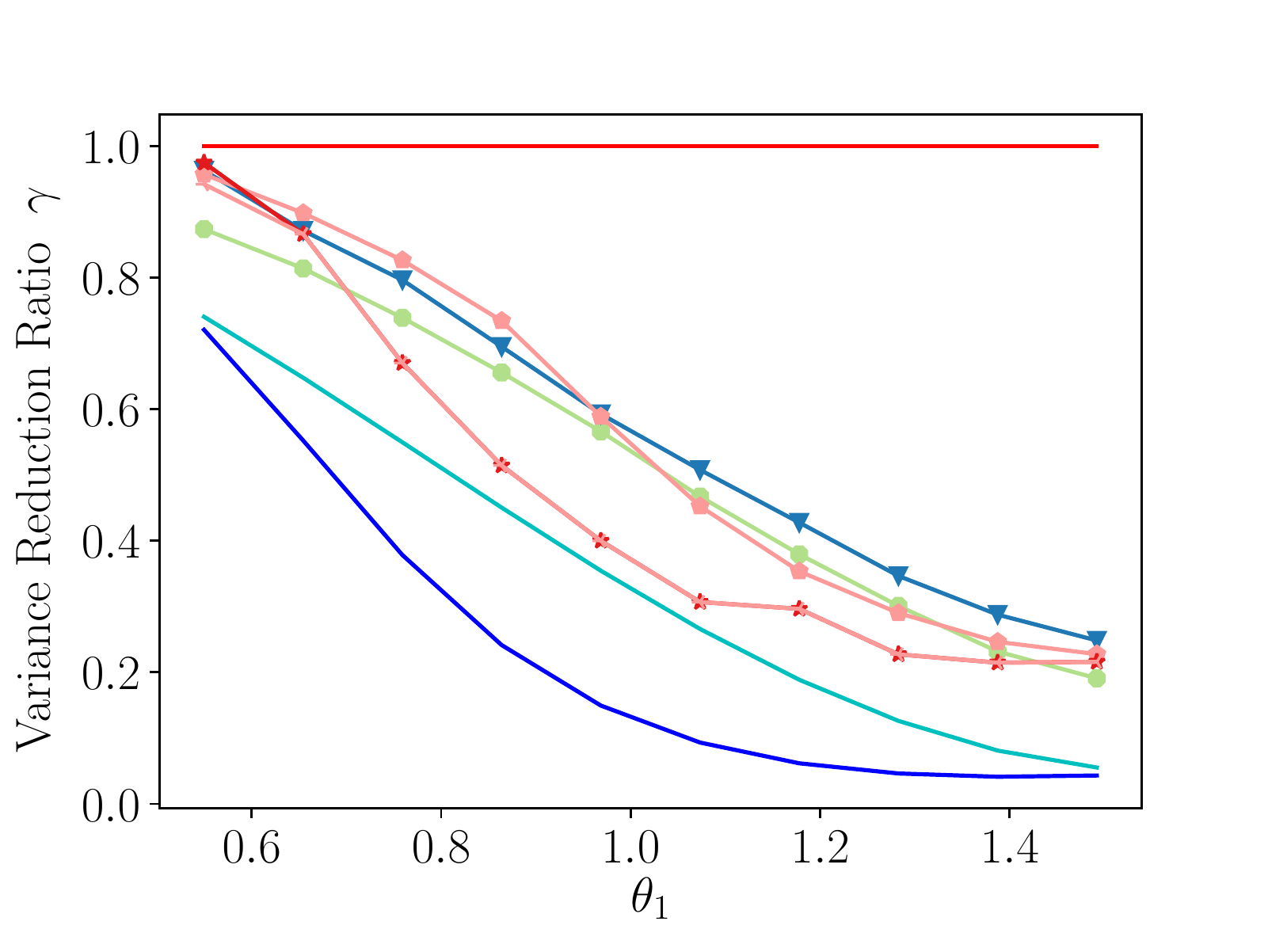}
    \caption{$w = 15$}    
  \end{subfigure}
  ~
  \begin{subfigure}[b]{0.3\textwidth}
    \includegraphics[width=\textwidth,clip,trim=20 10 20 20]{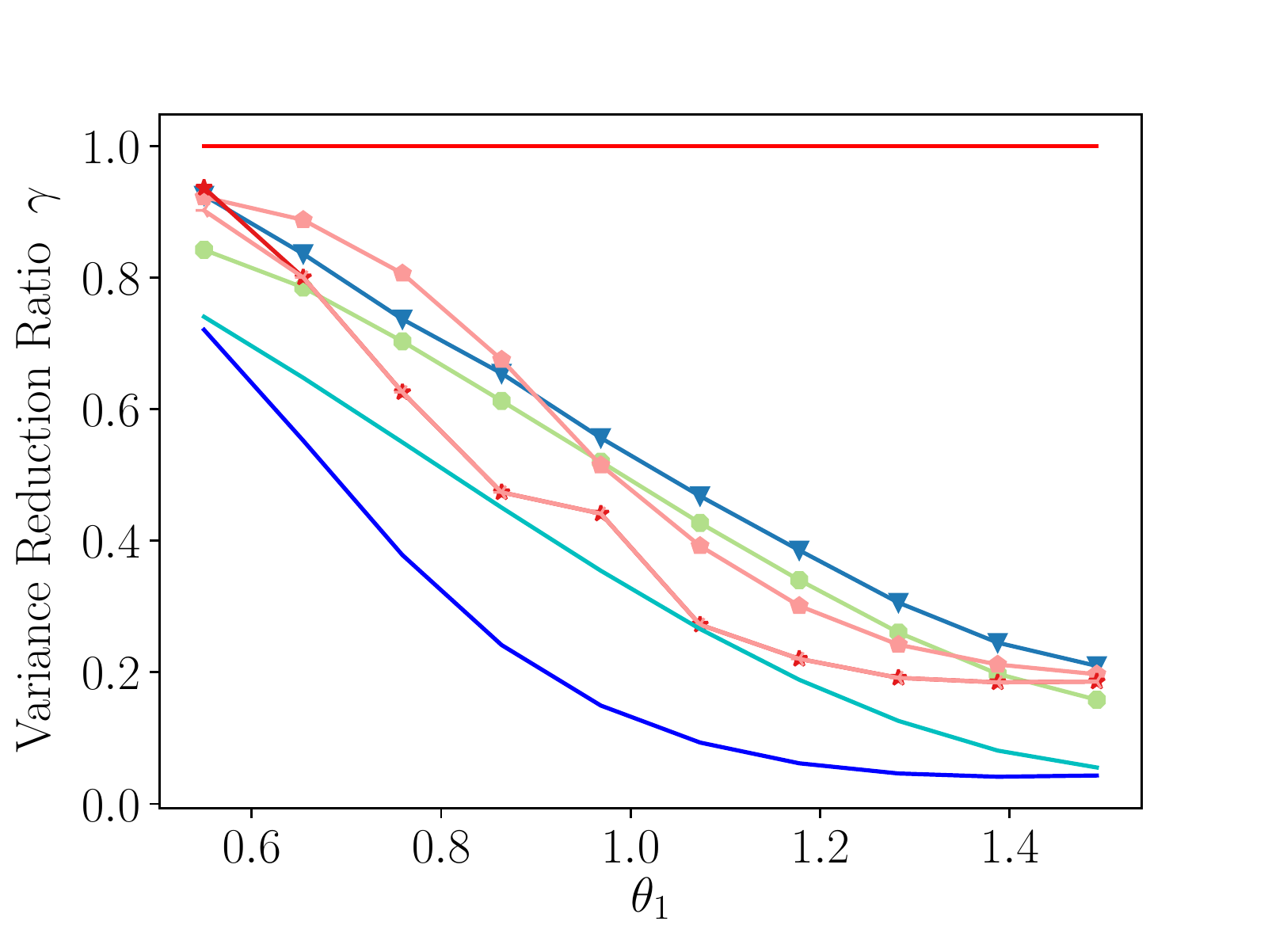}
    \caption{$w = 20$}    
  \end{subfigure}

  \begin{subfigure}[b]{0.31\textwidth}
    \includegraphics[width=\textwidth,clip,trim=0 10 20 20]{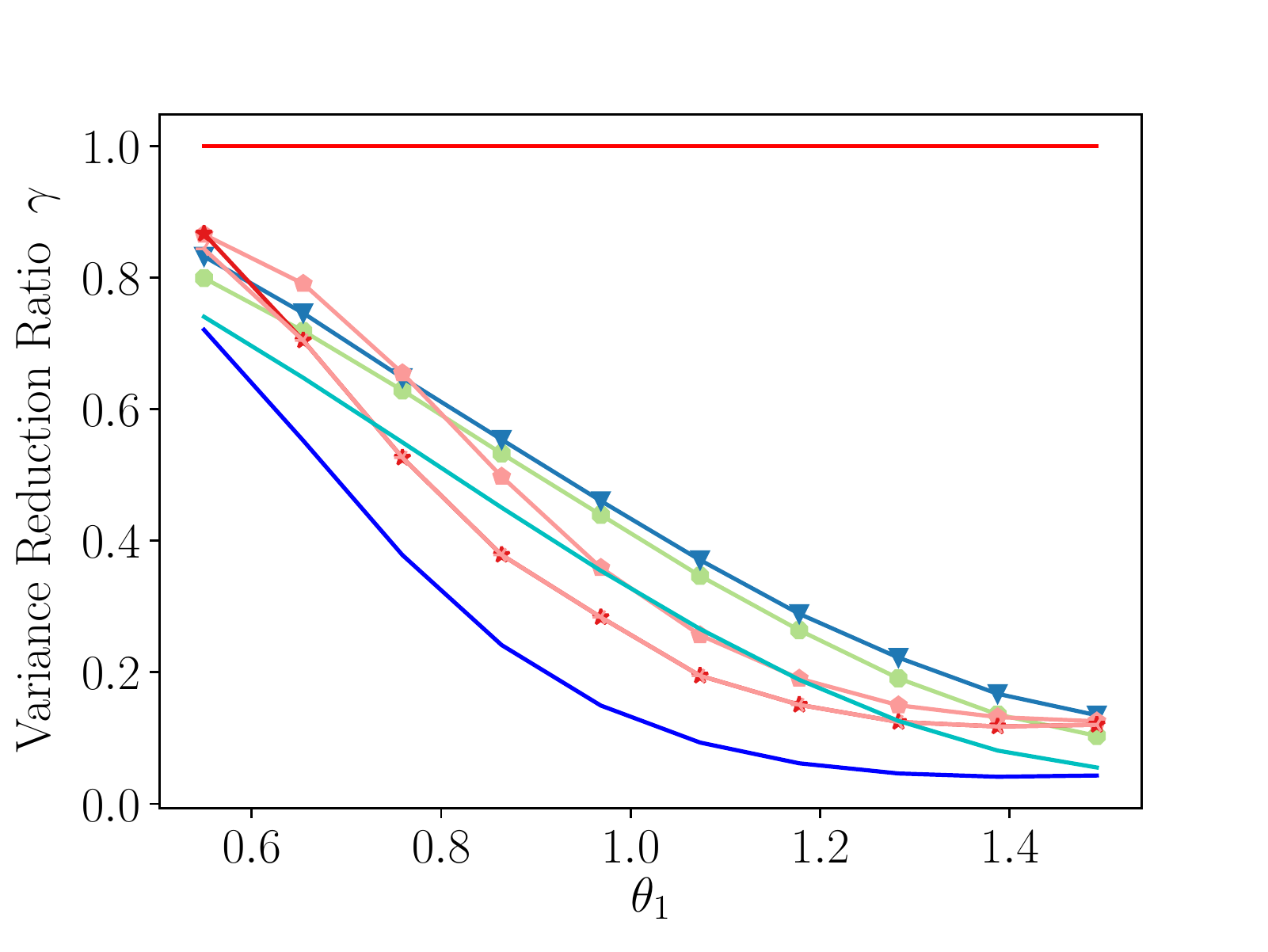}
    \caption{$w = 50$}    
  \end{subfigure}
  ~
  \begin{subfigure}[b]{0.3\textwidth}
    \includegraphics[width=\textwidth,clip,trim=20 10 20 20]{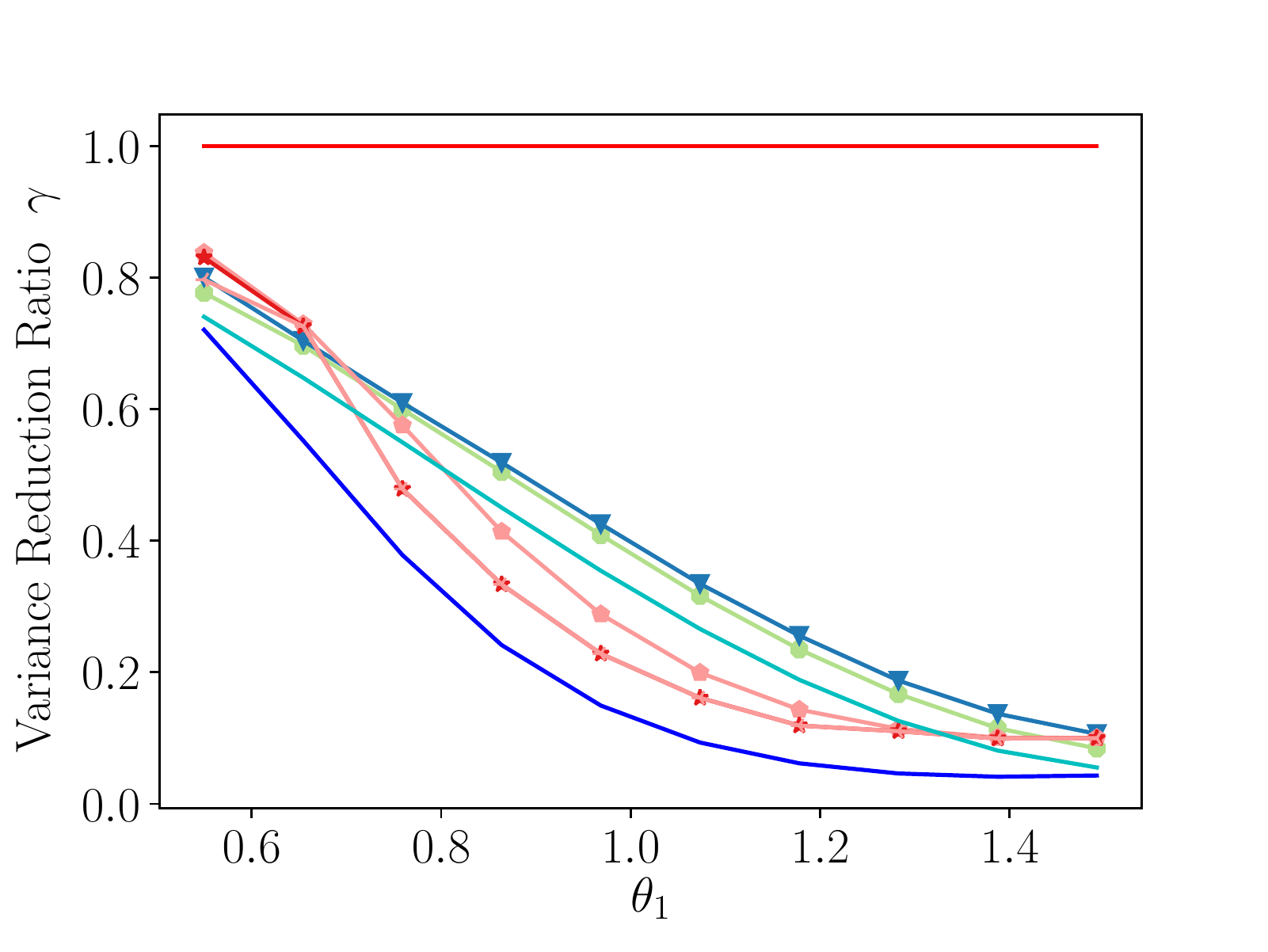}
    \caption{$w = 100$}
  \end{subfigure}
  ~
  \begin{subfigure}[b]{0.3\textwidth}
    \includegraphics[width=\textwidth,clip,trim=20 10 20 20]{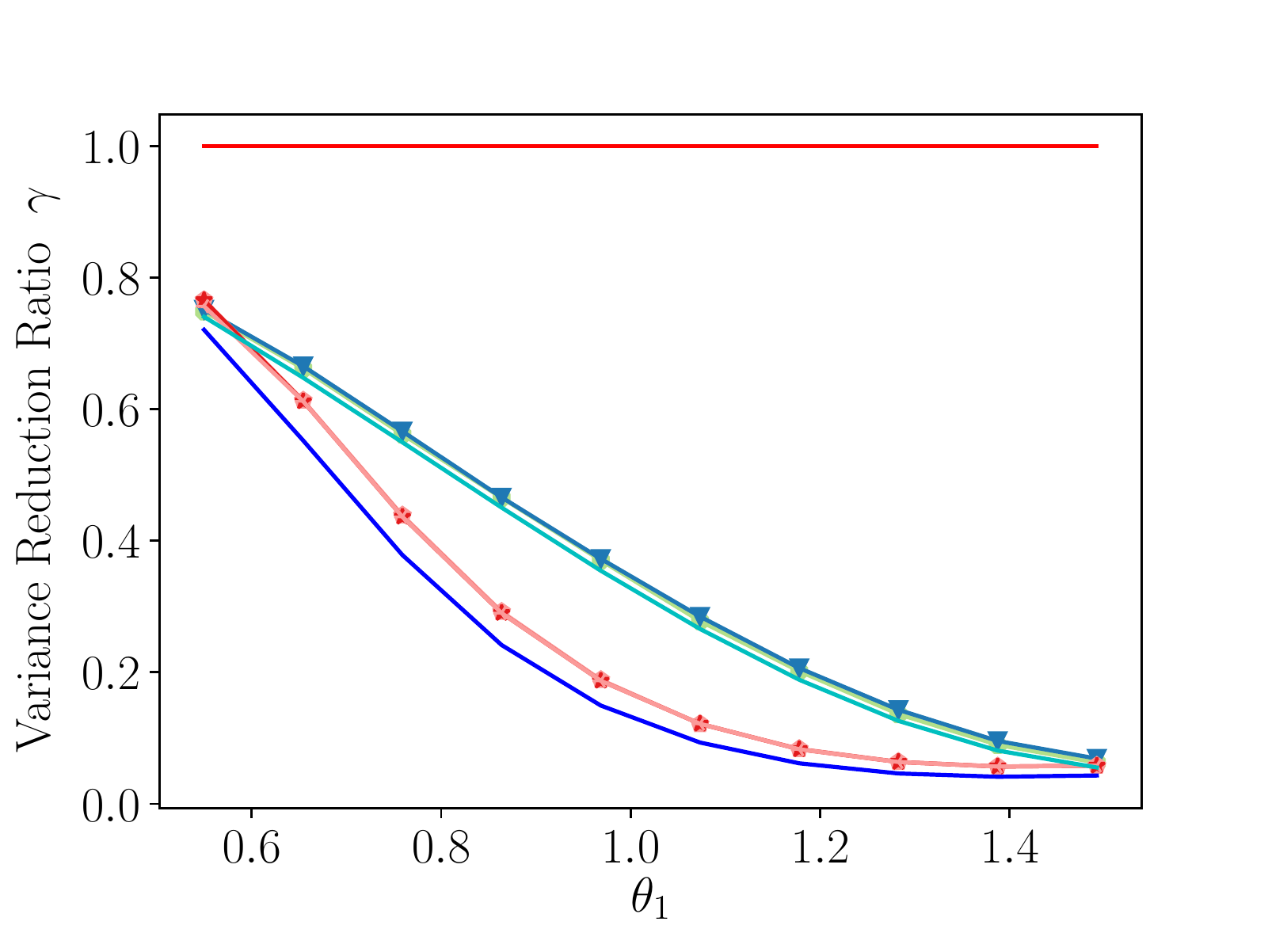}
    \caption{$w = 1000$}
  \end{subfigure}
  \caption{Variance reduction for cost ratios of $[1, 1/w, 1/w^2]$ for $\qoi$, $\cv{1}$, and $\cv{2}$ for the tunable problem Equation~\eqref{eq:tunable_def}. As the cost of the lower fidelity models becomes less expensive, the non-recursive ACV methods converge to the optimal variance reduction given by OCV and the recursive approaches converge to the variance reduction of OCV-1.}
  \label{fig:tunable_variable_reduction}
\end{figure}

In Figure~\ref{fig:mfmc_acvkl_contour}, we directly compare the ACV-KL estimator, which pursues OCV performance, with the MFMC estimator, which pursues of OCV-1 performance, over the range of $\theta_1$ for an expanded range of $w$ values. For the central portion of this plot, the ACV-KL estimator outperforms MFMC. MFMC gains an advantage for the more extreme values of $\theta_1$, and these are the same regions for which the OCV-1 and OCV gap is small according to Figure~\ref{fig:tunable_gap}.

\begin{figure}
  \centering
  \includegraphics[width=0.35\textwidth,clip,trim=5 5 5 35]{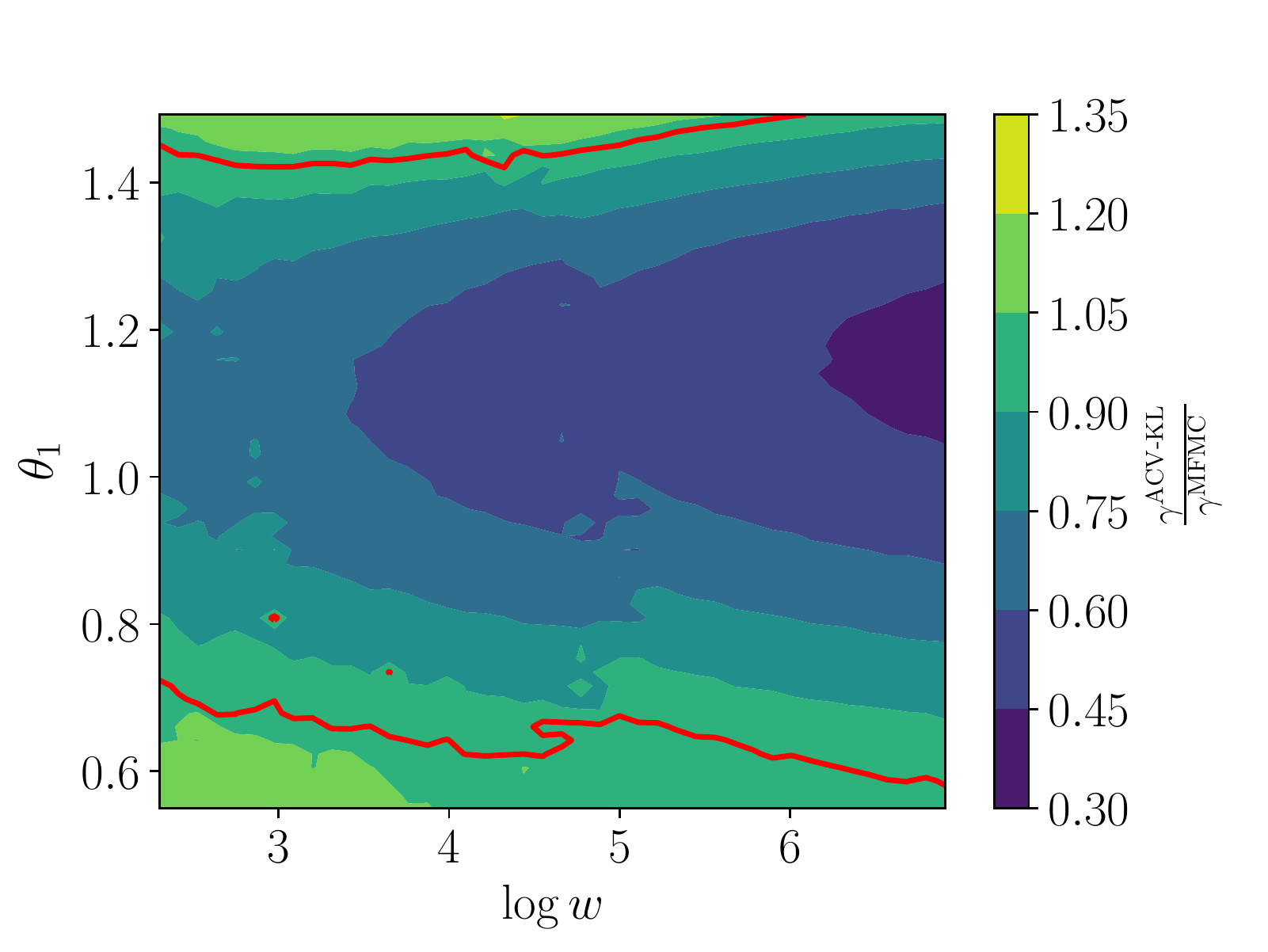}
  \caption{Comparison of the variance reduction between ACV-KL and MFMC over a range of $\theta_1$ and $w$. ACV-KL outperforms MFMC over a wide range of settings. The red line indicates the level-1 contour where the estimators have equal performance.}
  \label{fig:mfmc_acvkl_contour}
\end{figure}

\subsubsection{Weighted vs. unweighted estimators}
While the previous comparison allowed us to study the effect of how samples are distributed for the different estimators, we now demonstrate the improved performance of using non-fixed control-variate weights. In particular, we compare the unweighted recursive difference estimator $\cvw_i = -1$ with its weighted counterpart.  This comparison is applicable to many algorithms found in the literature because the unweighted recursive difference is used within the inner loop of MSE-adaptive MLMC algorithms, as discussed previosly, and might serve to provide a future direction for MLMC research to include weighted estimators. Our comparison is shown for the two extremes  $w=10$ and $w=1000$ in Figure~\ref{fig:weight_vs_unweight}. Here we show that the weighted estimator has generally lower variance than the unweighted estimator, and that it is able to outperform Monte Carlo even in regimes where the unweighted estimator cannot.

\begin{figure}
  \centering
  \begin{subfigure}[b]{0.3\textwidth}
    \centering
    \includegraphics[width=\textwidth,clip,trim=0 0 20 40]{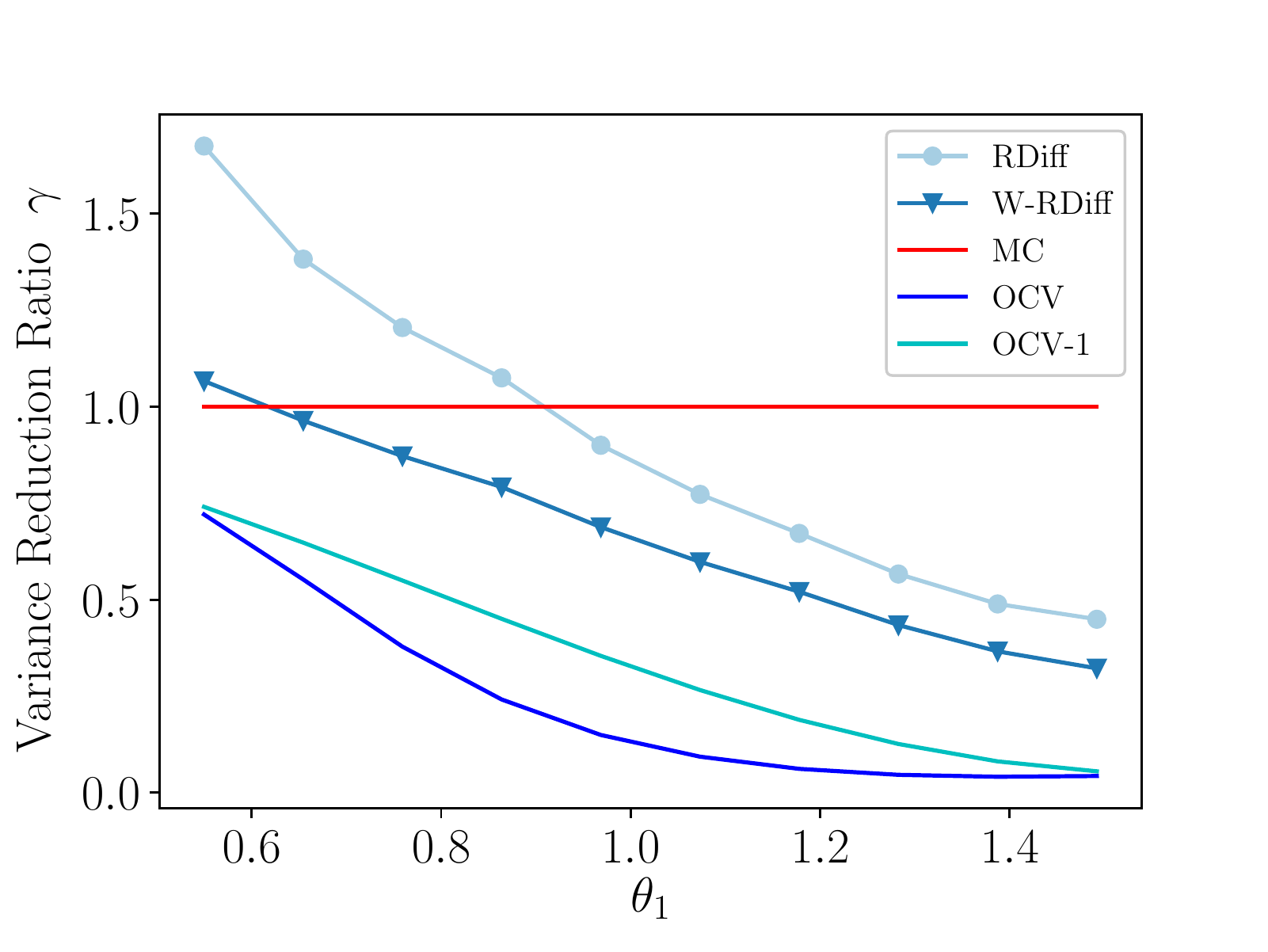}
    \caption{$w = 10$}
  \end{subfigure}
  ~
  \begin{subfigure}[b]{0.29\textwidth}
    \centering
    \includegraphics[width=\textwidth,clip,trim=21 0 20 40]{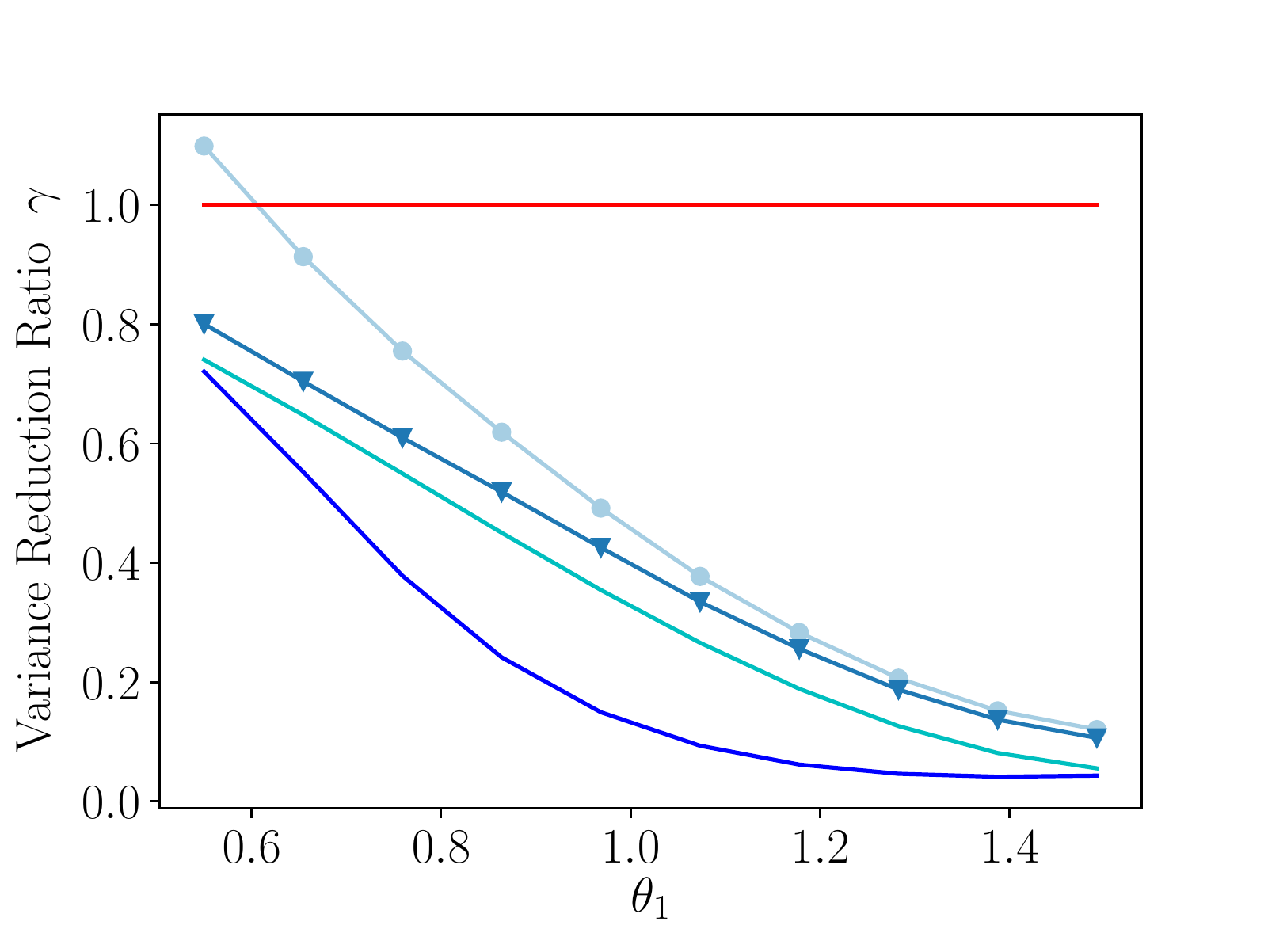}
    \caption{$w = 1000$}
  \end{subfigure}
  \caption{A comparison of the weighted and unweighted recursive difference estimators for the tunable Problem Equation~\eqref{eq:tunable_def}. The weighted estimator outperforms the unweighted alternative $\cvw_i = -1$ and is able to beat Monte Carlo when the unweighted cannot.}
  \label{fig:weight_vs_unweight}
\end{figure}

\subsubsection{Estimating control-variate weights}
Next we perform an experiment to assess a simple estimator for use when the covariance matrix is not known, but must be estimated from data. We provide results for estimator performance by tabulating the statistics of over 5000 repeated realizations of an estimator that: (1) uses 20 pilot samples to estimate the covariance matrix and (2) uses the remaining cost to perform optimal allocation. Note that the total cost of 100 is identical to that used for the known covariance estimator. In this way, we account for the cost of the pilot samples. Furthermore, we now include the unweighted recursive difference estimator ($\cvw_i = -1$) to assess whether performing this additional estimation could make it more efficient to prespecify the weight (as in the inner loop of MLMC schemes). These results are presented in Figure~\ref{fig:approx_cvw}.

First the estimators are indeed less efficient when the covariance must be estimated. Second, we notice that even when the weights must be estimated, the performance of weighted schemes is generally better than the unweighted recursive difference estimator. Finally, we note that the implementation of the two-step process for these results may in itself be sub-optimal. Future work will determine whether the estimation of the CV weights can be incorporated into the objective function and a more comprehensive approximate control variate estimator can be constructed. We note that some theory has been developed to address suboptimal estimation of the covariance matrix (and therefore $\cvw_i$) in the context of traditional control variates~\cite{Nelson1990}, but not for the approximate control variate context.

\begin{figure}
  \centering
  \begin{subfigure}[b]{0.31\textwidth}
    \centering
    \includegraphics[width=\textwidth,clip,trim=0 0 20 0]{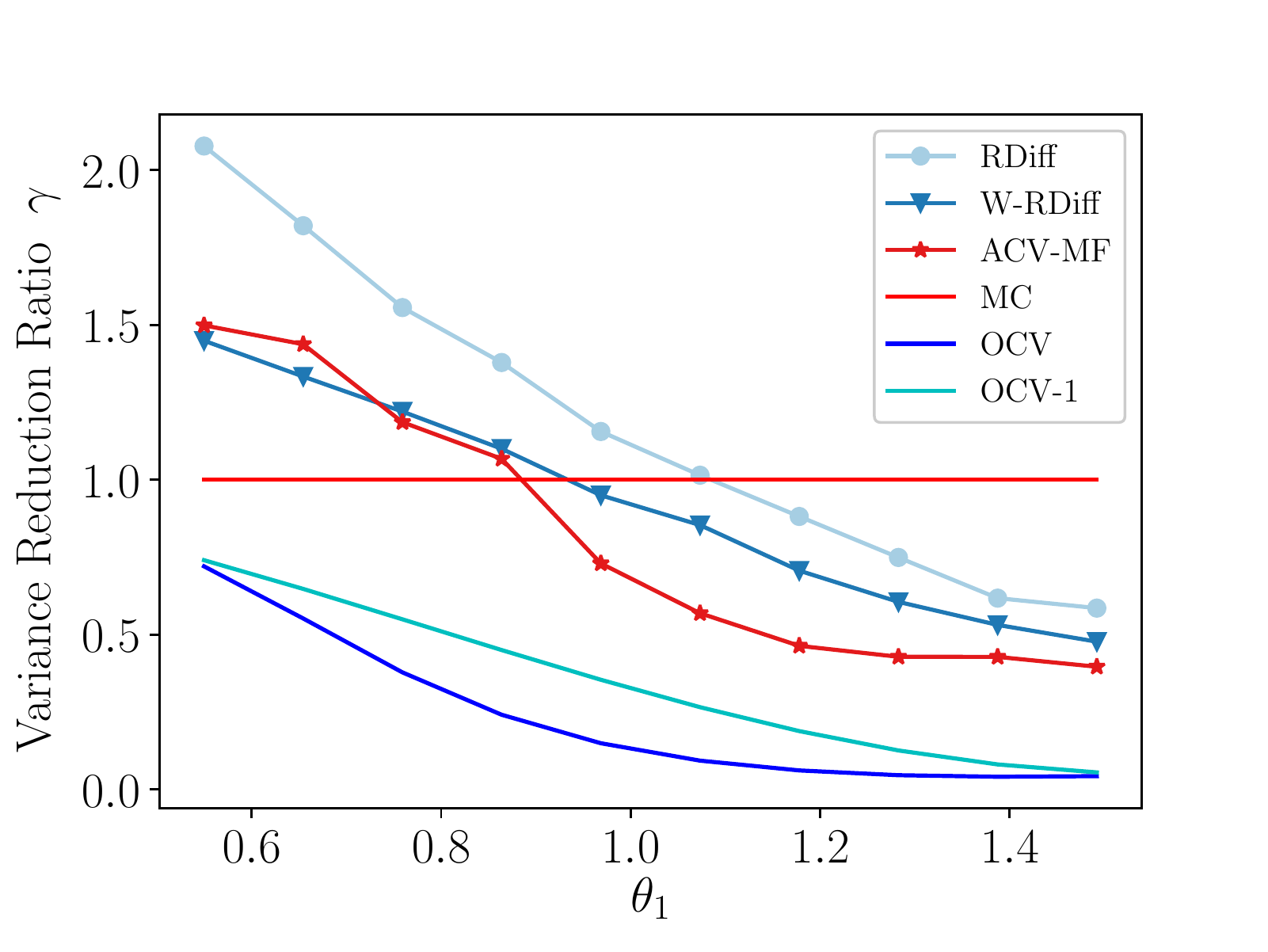}
    \caption{$w = 10$}
  \end{subfigure}
  ~
  \begin{subfigure}[b]{0.3\textwidth}
    \centering
    \includegraphics[width=\textwidth,clip,trim=20 0 20 0]{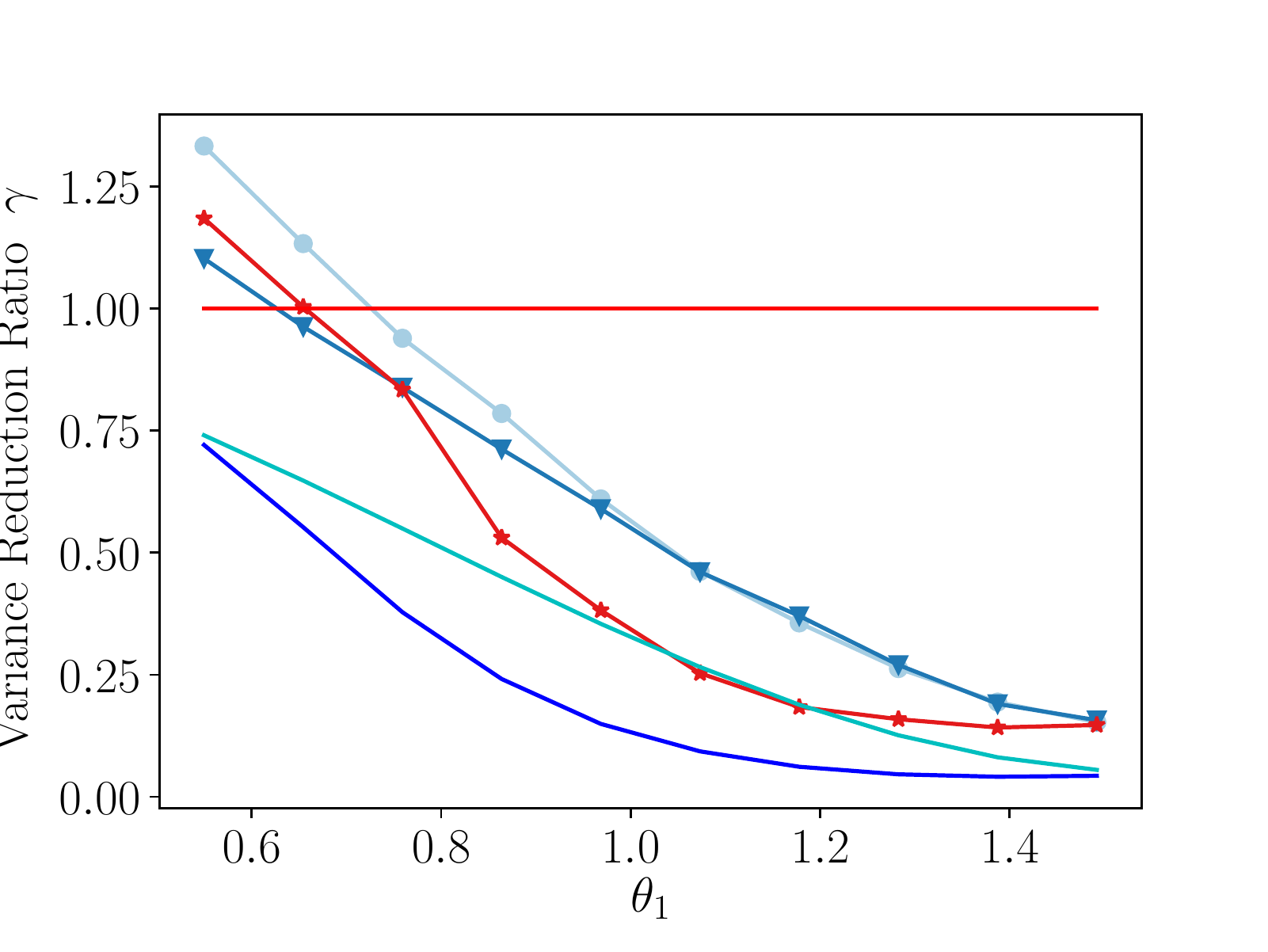}
    \caption{$w = 100$}
  \end{subfigure}
  \caption{Reduction in efficiency of the estimator when the covariance matrix must be estimated. ACV-MF still generally outperforms the alternative estimators; and the weighted recursive difference outperforms its unweighted counterpart. Note that any bias introduced by the estimation of $\cvw_i$ is neglible --- estimated to be $\mathcal{O}(10^{-3})$ from 5000 estimator realizations.}
  \label{fig:approx_cvw}
\end{figure}

\subsection{Two dimensional elasticity in heterogeneous media}

In this section, we consider the hyperbolic system of equations describing %
elastic wave propagation in two spatial dimensions.
The system of equations can be written in a %
quasi-linear form (following \cite{LeVeque}) as
\begin{equation}\label{eq:qlin}
 q_t + A q_x + B q_y = 0,
\end{equation}
where the vector $q = \left[ \sigma^{11}, \sigma^{22}, \sigma^{12}, u, v \right]^{\mathrm{T}} $ collects the normal stress components $\sigma^{11}$ and $\sigma^{22}$
in the $x$ and $y$ direction respectively, the shear stress component $\sigma^{12}$, and the time derivatives of the displacement $\delta(x,y,t)$, $u$ and $v$ in the $x-$ and $y-$direction respectively. The matrices $A$ and $B$ are given by
{\footnotesize
\begin{equation}
 A = -
\begin{bmatrix} 
0 & 0 & 0 & (\lambda+2\mu) & 0   \\
0 & 0 & 0 & \lambda        & 0   \\
0 & 0 & 0 & 0              & mu \\
\frac{1}{\rho} & 0 & 0 & 0 & 0 \\
0 & 0 & \frac{1}{\rho} & 0 & 0
\end{bmatrix},
\quad 
B = - 
\begin{bmatrix} 
0 & 0 & 0 & 0 & \lambda   \\
0 & 0 & 0 & 0 &  (\lambda+2\mu) \\
0 & 0 & 0 & \mu & 0 \\
0 & 0 & \frac{1}{\rho} & 0 & 0  \\
0 & \frac{1}{\rho} & 0 & 0 & 0
\end{bmatrix},
\quad
  \lambda = \frac{\nu E}{(1+\nu)(1-2\nu)} \quad \mathrm{and} \quad
  \mu = \frac{E}{2(1+\nu)},
\end{equation}
}
where the Lam\'e parameters $\lambda$ and $\mu$ are expressed as functions of the Young modulus $E$ and the Poisson ratio $\nu.$ 
The derivatives of the vector $q$ with respect to $x$ and $y$ are denoted as $q_x$ and $q_y$.

These constitutive relations stem from the linear elasticity assumption and are appropriate in the presence of small deformations. There are two types of waves propagating in an elastic solid: P-waves which are normal, \textit{i.e.} pressure waves, and S-waves which are shear waves. As a demonstration case, we use the code \texttt{CLAWPACK} \cite{clawpack} and in particular the example described in \S22.7 of \cite{LeVeque}.

The problem domain is a square on $[-1,1]\times[-1,1]$ divided in two regions, a left and a right region with different material properties. The domain is represented in Fig.~\ref{fig:elastic_initial}, where the separation between the two areas is shown as a continuous black line. The properties of the two materials, namely density $\rho$ and the two Lam\'e parameters $\lambda$ and $\mu$, are considered uncertain. The ranges and distribution for these parameters are reported in Table~\ref{tab:elastic_uncertain}. The value of $\nu$ is always lower than $0.5$ for all realizations, preserving the hyperbolic character of the system. Additional details on this system of equations can be found in \cite{LeVeque}.

\begin{table}
  \centering
  {\footnotesize
  \begin{tabular}{c|cccccc}
    Uncertain Parameter & $\rho$ & $\lambda_l$ & $\mu_l$ & $\rho_r$ & $\lambda_r$ &$ \mu_r$  \\
    \hline
    \hline
    Distribution & $\mathcal{U}(0.5,1.5)$  & $\mathcal{U}(3.0,5.0)$ & $\mathcal{U}(0.25,0.75)$ & $\mathcal{U}(0.5,1.5)$  & $\mathcal{U}(1.0,3.0)$ & $\mathcal{U}(0.5,1.5)$  
  \end{tabular}
  }
  \caption{Random parameters for the elastic wave equation problem.}
  \label{tab:elastic_uncertain}
\end{table}

For this problem, an initial perturbation that corresponds to a right-going normal wave is considered, namely $\sigma^{11}=\lambda_l + 2 \mu_l$, $\sigma^{22}=\lambda_l$, $\sigma^{12}=0$, $u = -\sqrt{ (\lambda_l+2\mu_l)/\rho_l }$, $v=0$ for the region $-0.35 < x < -0.2$. The initial solution is zero everywhere else. The initial condition is reported in Fig.~\ref{fig:elastic_initial} for reference. Since the initial condition is a function of the random parameters corresponding to the material properties of the material on the left, the initial condition is itself random.

\begin{figure}[h!]
  \centering
  \begin{subfigure}[b]{0.3\textwidth}
    \centering
    \includegraphics[width=\textwidth,clip,trim=50 20 50 40]{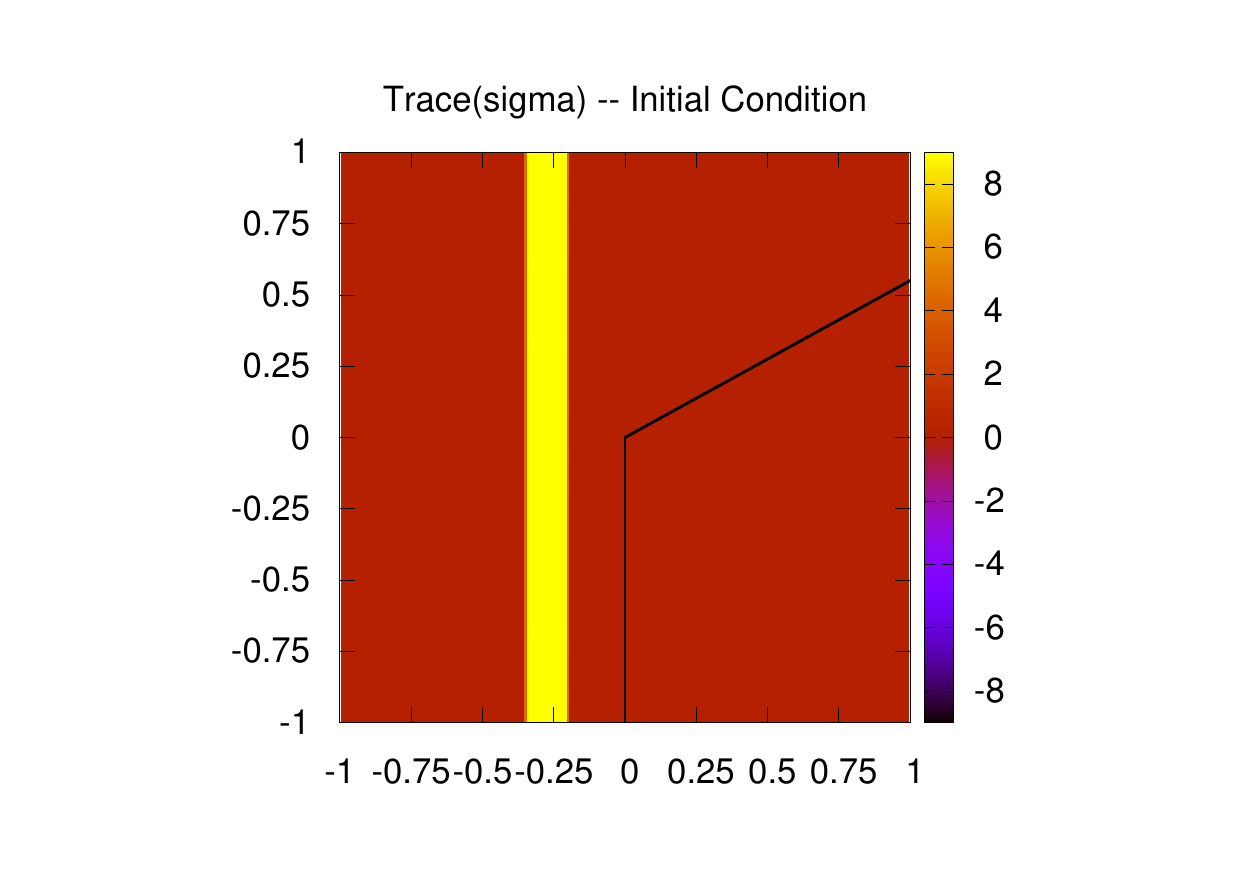}
    \caption{Trace of the stress tensor.}
    \label{fig:elastic_initial_sigma}
  \end{subfigure}
  ~
  \begin{subfigure}[b]{0.3\textwidth}
    \centering
    \includegraphics[width=\textwidth,clip,trim=50 20 50 40]{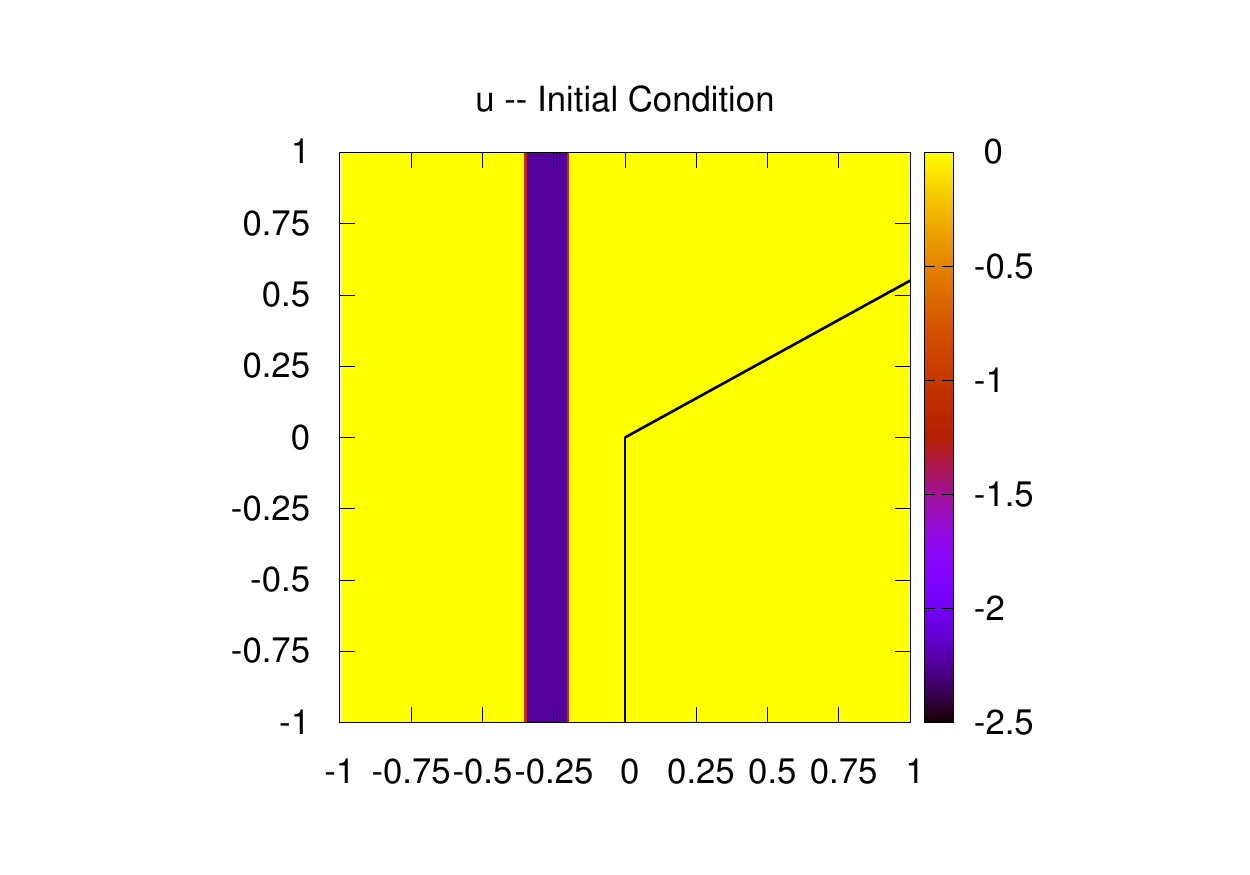}
    \caption{Velocity $u$ component.}
    \label{fig:elastic_initial_u}
  \end{subfigure}
  \caption{Initial conditions for the 2D elastic wave propagation problem in~\eqref{eq:qlin}.}
  \label{fig:elastic_initial}
\end{figure}

The initial perturbation propagates to the right side and hits the interface between the two materials. At the interface, the propagation velocity changes and both transmitted and reflected waves are generated for both normal and shear waves. The system is simulated until a total time equal to $t=0.5$ with a variable time step and a desired CFL of 0.9. For the solution of the equation, finite volumes are used based on the wave-propagation algorithm described in \cite{LeVeque}. Riemann solvers in the direction normal to the cell interface are employed. Two possible fidelities are available: a high-resolution second-order scheme, which employs a monotonized central limiter, and a first-order Godunov scheme. Non-reflecting outflow boundary conditions are imposed by using ghost cells. For both fidelities, we consider five discretization levels corresponding to $10\times10$, $25\times25$, $50\times50$, $100\times100$, $200\times200$ cells in the $x$ and $y$ directions. For this problem, the QoI is chosen to be the average value of the shear stress component $\sigma^{12}$ within a smaller domain inside the material on the right. The solution at time $t=0.55$ corresponding to the shear stress $\sigma^{12}$ is reported in Fig.~\ref{fig:elastic_array} for all the resolution levels and the two model fidelities.
\begin{figure}[h!]
  \centering
  \begin{subfigure}[b]{0.2\textwidth}
    \centering
    \includegraphics[width=\textwidth,clip,trim=50 40 95 40]{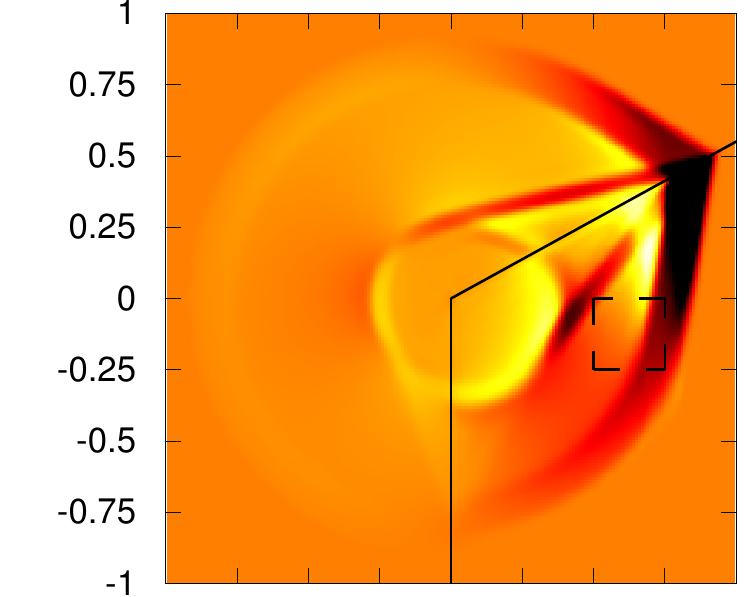}
  \end{subfigure}
  ~
  \begin{subfigure}[b]{0.155\textwidth}
    \centering
    \includegraphics[width=\textwidth,clip,trim=95 40 95 40]{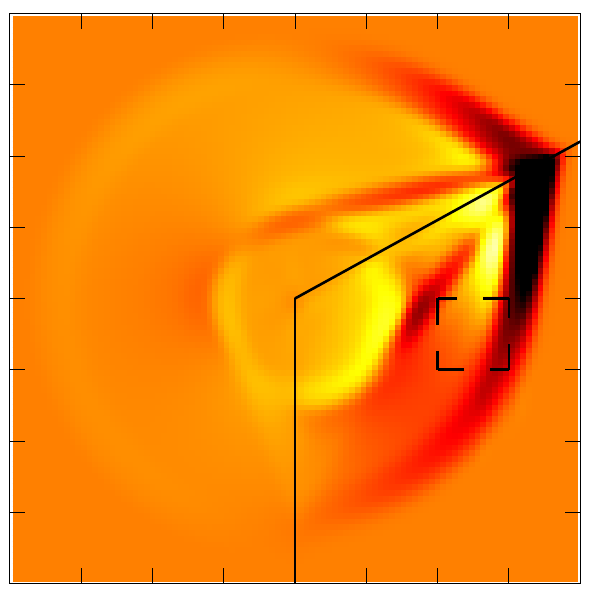}
  \end{subfigure}
  ~
  \begin{subfigure}[b]{0.155\textwidth}
    \centering
    \includegraphics[width=\textwidth,clip,trim=95 40 95 40]{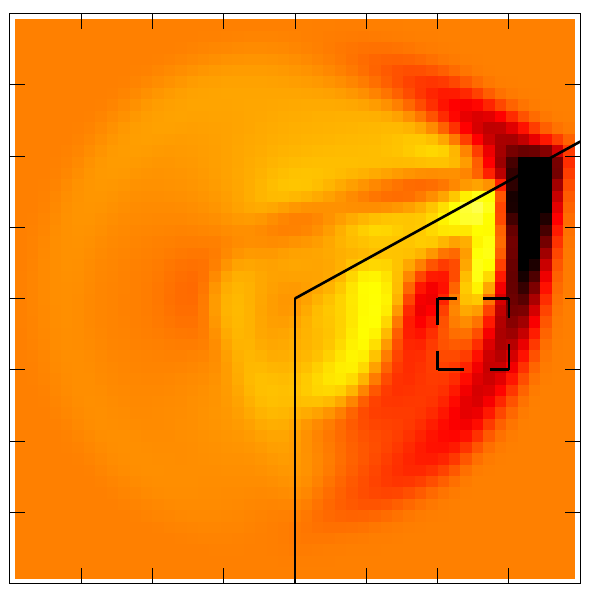}
  \end{subfigure}
  ~
  \begin{subfigure}[b]{0.155\textwidth}
    \centering
    \includegraphics[width=\textwidth,clip,trim=95 40 95 40]{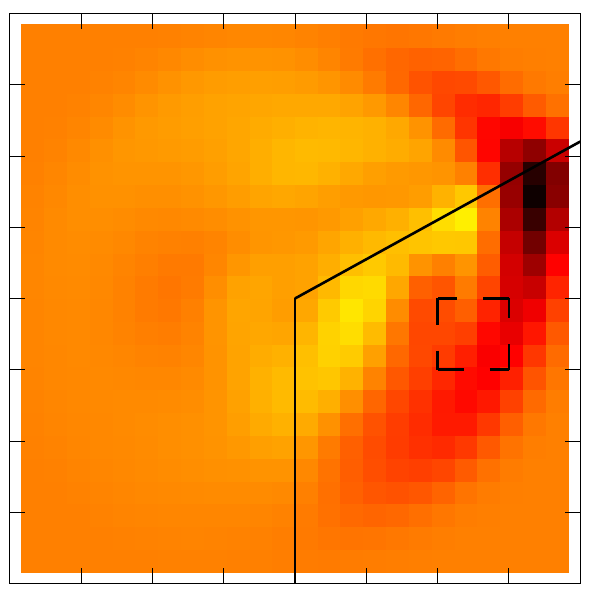}
  \end{subfigure}
  ~
  \begin{subfigure}[b]{0.2\textwidth}
    \centering
    \includegraphics[width=\textwidth,clip,trim=95 40 50 40]{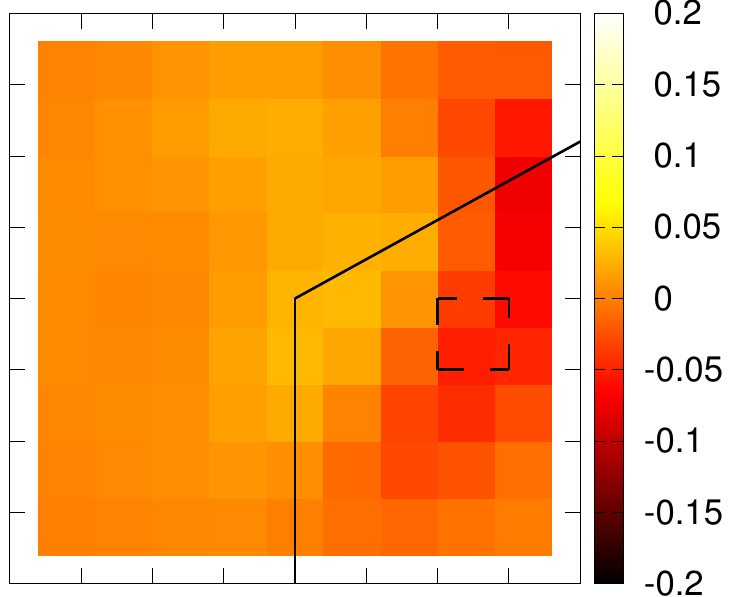}
  \end{subfigure}
  
  \begin{subfigure}[b]{0.2\textwidth}
    \centering
    \includegraphics[width=\textwidth,clip,trim=50 20 95 40]{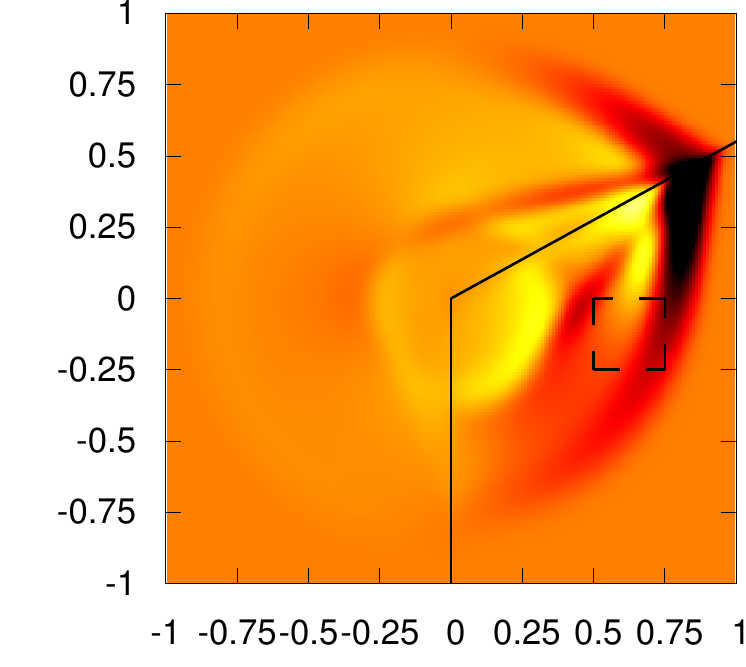}
  \end{subfigure}
  ~
  \begin{subfigure}[b]{0.155\textwidth}
    \centering
    \includegraphics[width=\textwidth,clip,trim=95 20 95 40]{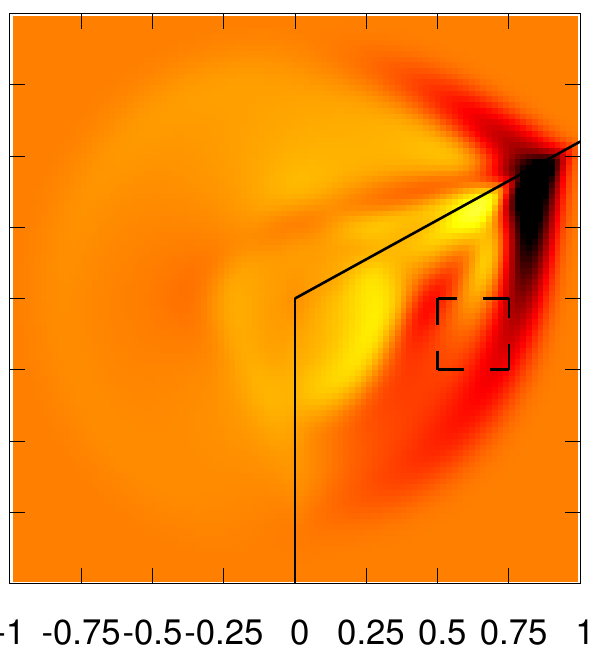}
  \end{subfigure}
  ~
  \begin{subfigure}[b]{0.155\textwidth}
    \centering
    \includegraphics[width=\textwidth,clip,trim=95 20 95 40]{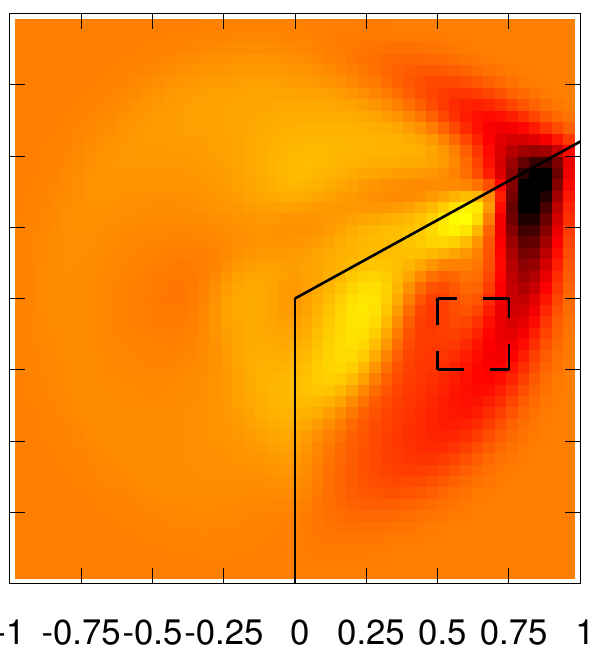}
  \end{subfigure}
  ~
  \begin{subfigure}[b]{0.155\textwidth}
    \centering
    \includegraphics[width=\textwidth,clip,trim=95 20 95 40]{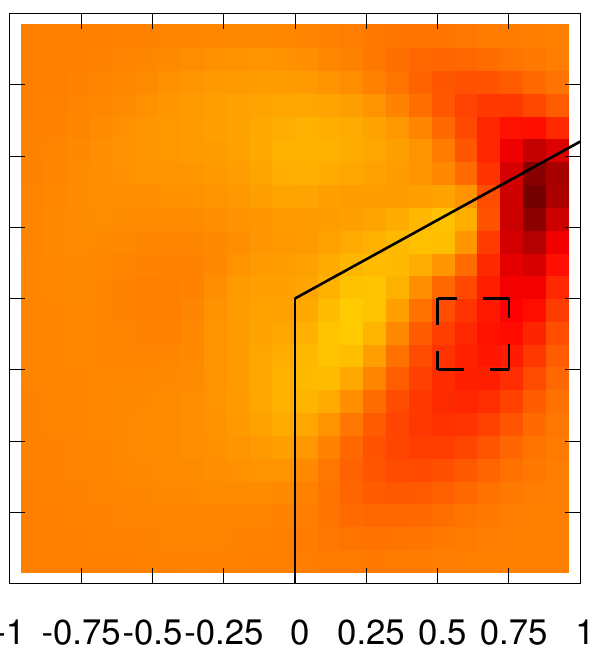}
  \end{subfigure}
  ~
  \begin{subfigure}[b]{0.2\textwidth}
    \centering
    \includegraphics[width=\textwidth,clip,trim=95 20 50 40]{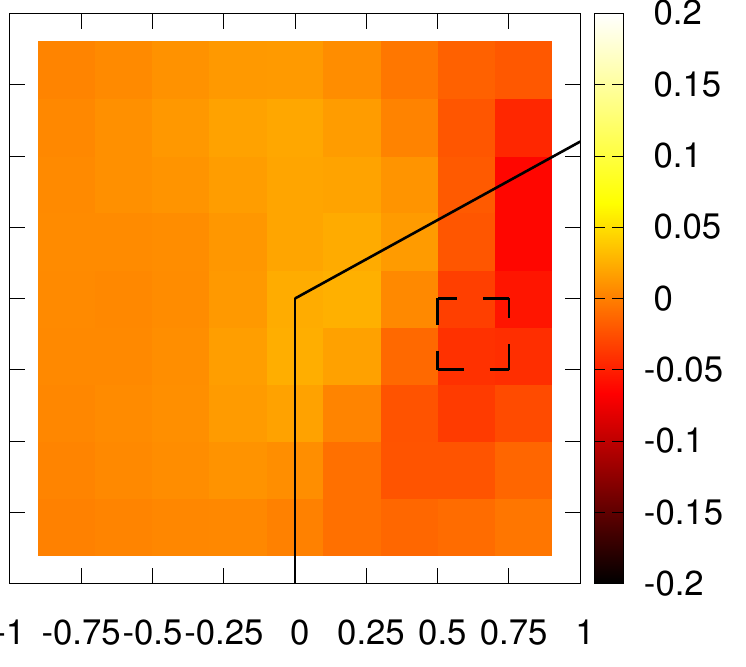}
  \end{subfigure}
  \caption{Shear stress at final time $0.5$ for the two model fidelities (top and bottom rows) and the five discretization levels (200 x 200, 100 x 100, 50 x 50, 25 x 25, 10 x 10 from left to right) corresponding to the mean values of the random parameters for the problem in~\eqref{eq:qlin}. The QoI is the average value of the shear in the dashed region within the right material.}
  \label{fig:elastic_array}  
\end{figure} 
For the ten models reported in Fig.~\ref{fig:elastic_array}, the correlation matrix and the normalized cost are reported in Table~\ref{tab:elastic_correlation} and  Table~\ref{tab:elastic_cost}, respectively.

\begin{table}
  \centering
  {\footnotesize 
        \begin{tabular}{cccccccccc} 
        200 (II) & 100 (II) & 50 (II) & 25 (II) & 10 (II) & 200 (I) & 100 (I) & 50 (I) & 25 (I) & 10 (I) \\ \hline
   1.00000   &0.99838   &0.99245   &0.96560   &0.70267   &0.99312   &0.98333   &0.93857   &0.85400   &0.56719 \\
   0.99838   &1.00000   &0.99092   &0.96461   &0.69060   &0.99160   &0.98380   &0.93360   &0.84743   &0.55127 \\
   0.99245   &0.99092   &1.00000   &0.98759   &0.76255   &0.99866   &0.99484   &0.96738   &0.89785   &0.63184 \\
   0.96560   &0.96461   &0.98759   &1.00000   &0.83904   &0.98697   &0.99400   &0.99102   &0.94874   &0.71607 \\
   0.70267   &0.69060   &0.76255   &0.83904   &1.00000   &0.76356   &0.79165   &0.89148   &0.96032   &0.96725 \\
   0.99312   &0.99160   &0.99866   &0.98697   &0.76356   &1.00000   &0.99700   &0.96965   &0.90058   &0.63184 \\
   0.98333   &0.98380   &0.99484   &0.99400   &0.79165   &0.99700   &1.00000   &0.98022   &0.92207   &0.66156 \\
   0.93857   &0.93360   &0.96738   &0.99102   &0.89148   &0.96965   &0.98022   &1.00000   &0.97785   &0.78607 \\
   0.85400   &0.84743   &0.89785   &0.94874   &0.96032   &0.90058   &0.92207   &0.97785   &1.00000   &0.89023 \\
   0.56719   &0.55127   &0.63184   &0.71607   &0.96725   &0.63184   &0.66156   &0.78607   &0.89023   &1.00000 
        \end{tabular}
        }
  \caption{Correlation matrix for the ten models used in the elastic equation problem Equation~\eqref{eq:qlin}. The second-order (II) and the first-order (I) schemes both employ five different resolution levels. }
  \label{tab:elastic_correlation}
  \end{table}

   For this problem, we consider three scenarios reflecting different analysis situations that might occur in practice. The first scenario is a simple discretization-based hierarchy for a single model fidelity, \textit{i.e.} five total models are adopted 
and the $\qoi_i$ for $i=1,\dots,4$ are obtained by coarsening the spatial resolution. For consistency of results, we choose the high-fidelity, {\em i.e.} the second-order discretization model, for all resolutions in this case. The performance of several estimators are reported in Figure~\ref{fig:elastic_ML_scenario}.

  The second scenario reflects a common need to resort to a lower-fidelity approximation, in this case the first-order numerical discretization. For this low-fidelity alternative, we use resolution levels ranging from $100\times100$ to $10\times10$ as control variates. In other words, our five models are now the high-fidelity fine-resolution reference combined with the four coarsest resolutions of the low-fidelity model. For this scenario the performance of several estimators are reported in Figure~\ref{fig:elastic_MF_scenario}.

  Finally, we consider a scenario in which we create more of a gap between the high-fidelity model and the first control variate by dropping the low-fidelity $100\times100$ resolution from the previous case. 
  In this case, we use a total of four models, comprised of the high-fidelity fine-resolution model and the three coarsest resolutions of the low-fidelity model (resolutions of $50 \times 50$, $25\times25$ and $10\times10$ cells). 
  This choice has the effect of decreasing the correlation between the high fidelity model and the most correlated low-fidelity model to approximately $0.93$. The performance for several estimators are reported in Figure~\ref{fig:elastic_MF_aggressive_scenario}.
\begin{table}[h!]
     \centering
     {\footnotesize
     \begin{tabular}{c|cccccccccc} 
   Model & 200 (II) & 100 (II) & 50 (II) & 25 (II) & 10 (II) & 200 (I) & 100 (I) & 50 (I) & 25 (I) & 10 (I) \\ \hline
   Norm. Cost & 1.000  & 0.147 &  0.026  & 0.009 &  0.002 &  0.498  & 0.080 &  0.013  & 0.004 &  0.002\\
     \end{tabular}
     }
   \caption{Normalized cost with respect to the cost of the second order $200\times200$ resolution for the problem given by Equation~\eqref{eq:qlin}. The highest fidelity requires approximately 6 seconds to run on serially in \texttt{CLAWPACK} on a laptop equipped with an \texttt{Intel\textregistered \; Xeon(R) CPU E3-1505M v5 @ 2.80GHz}.}
   \label{tab:elastic_cost}
\end{table}
  
\begin{figure}[h!]
  \centering
  \begin{subfigure}[t]{0.3\textwidth}
    \centering
    \includegraphics[width=\textwidth,clip,trim=0 5 20 20]{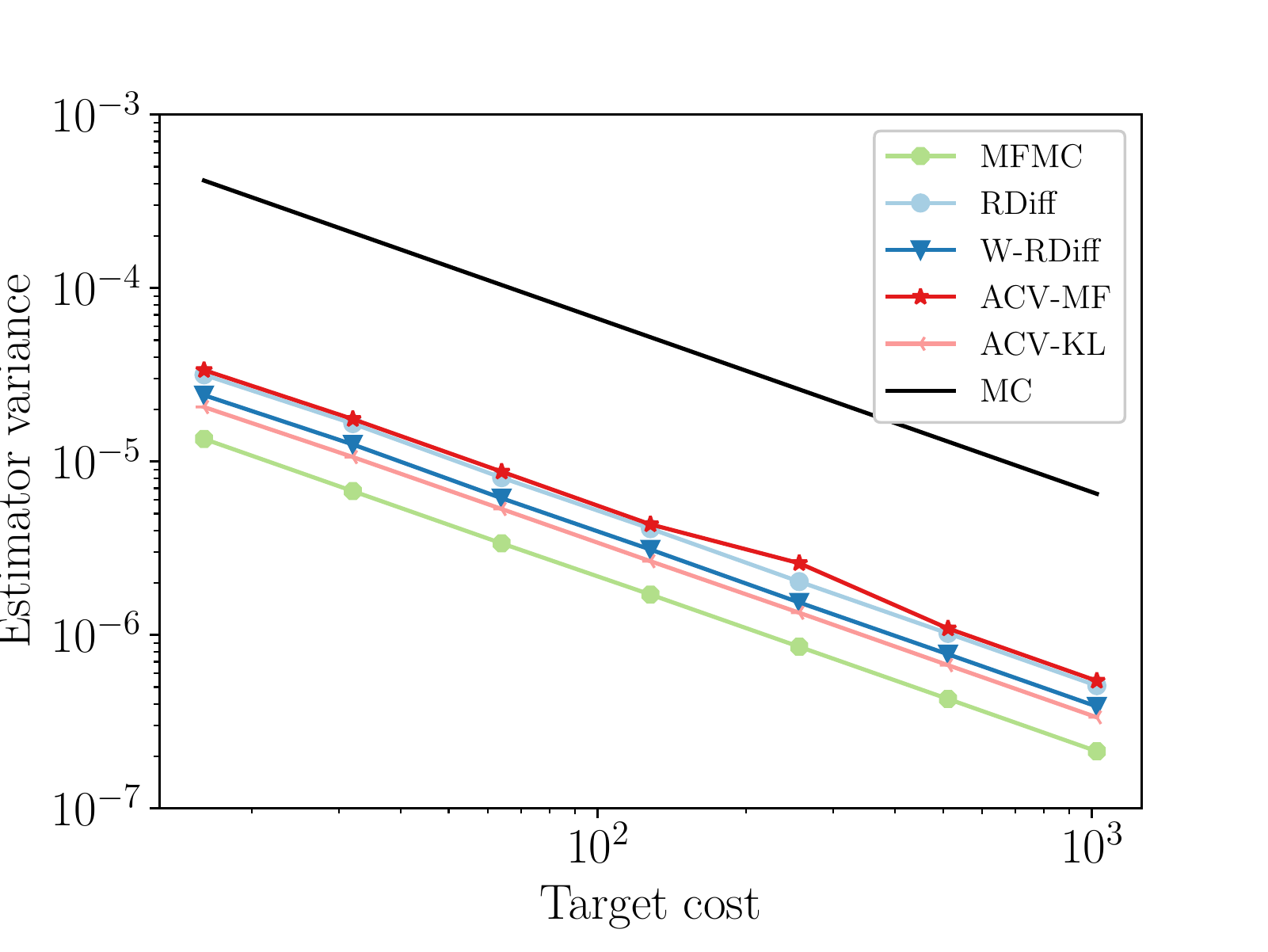}
    \caption{\footnotesize Single-fidelity hierarchy: high-fidelity fine-resolution with control variates from coarser discretizations.}
    \label{fig:elastic_ML_scenario}
  \end{subfigure}
  ~
  \begin{subfigure}[t]{0.3\textwidth}
    \centering
    \includegraphics[width=\textwidth,clip,trim=0 5 20 20]{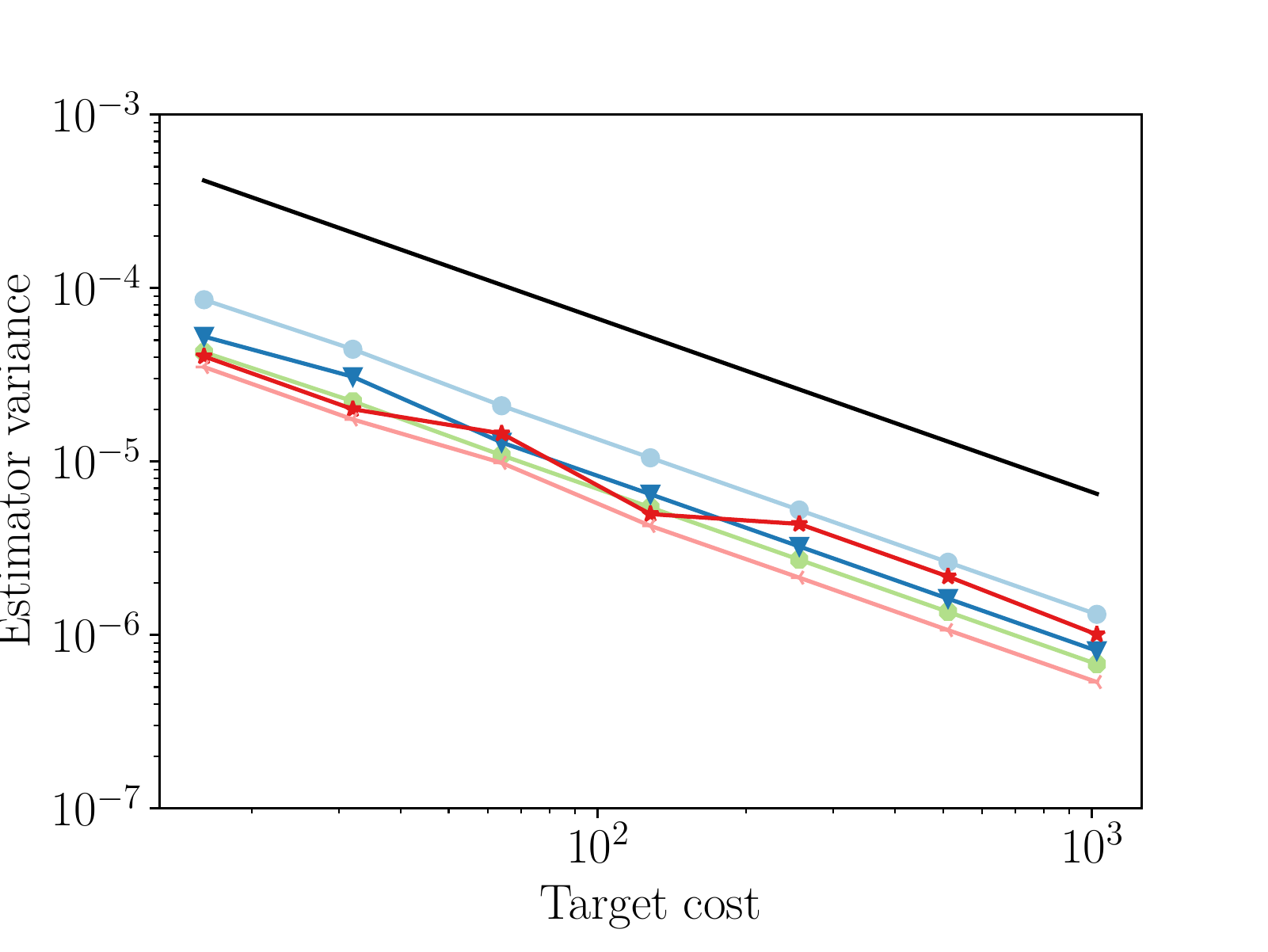}
    \caption{\footnotesize Multifidelity hierarchy: high fidelity fine-resolution with control variates from low-fidelity discretizations.}
    \label{fig:elastic_MF_scenario}
  \end{subfigure}
  ~
  \begin{subfigure}[t]{0.3\textwidth}
    \centering
    \includegraphics[width=\textwidth,clip,trim=0 5 20 20]{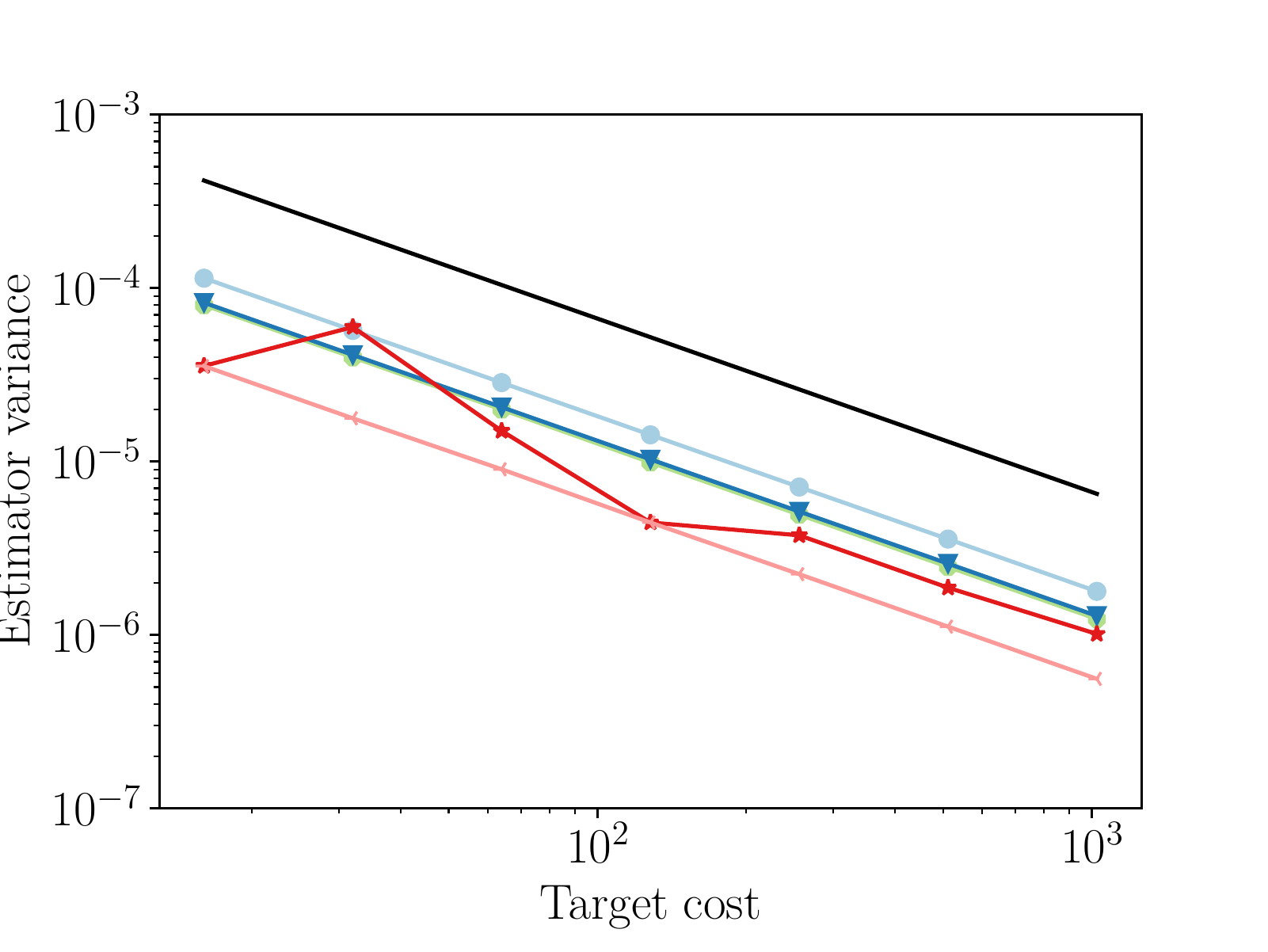}
    \caption{\footnotesize Multifidelity hierarchy with increased gap: high fidelity fine-resolution with control variates from three coarsest low-fidelity discretizations.}
    \label{fig:elastic_MF_aggressive_scenario}
  \end{subfigure}
  \caption{Estimator variance performance for several estimators applied to the hyperbolic system of equations~\eqref{eq:qlin} describing the elastic waves propagation in a set domain with two heterogeneous materials.} 
  \label{fig:elastic_estimators}
\end{figure}

The performance results indicate several interesting features. First, when the models are truly hierarchical as a result of numerical discretization (left panel), all the estimators perform well (more than an order of magnitude variance reduction) and we are in the regime where recursive assumptions (MFMC, RDiff, W-RDiff) are most appropriate. Second, the ACV approaches start to outperform the recursive approaches for the multifidelity cases with more general modeling relationships and lower correlations. One particularly interesting property is that the ACV-KL approach does not degrade with more aggressive mesh coarsening for the two multifidelity cases (mid panel to right panel). This demonstrates robustness in this ACV approach in that it can handle imperfect model relationships without significant degradation. Finally, we see that our simplistic optimization approach has trouble with the ACV-MF estimator causing inconsisitent variance decrease. Better optimization approaches will be the subject of future work.

\section{Conclusion}

We have presented a framework that unifies many existing sampling methods used for multifidelity uncertainty quantification. In doing so, we have proven that the structure of existing recursive estimators fundamentally limits the amount of information that is extracted from multifidelity data sources. Regardless of the number of control variates these recursive schemes use, the estimators they produce only converge to the optimal control variate that uses a single model with known mean. This limitation can degrade the efficiency and accuracy of these estimators, especially in applications where the ability to obtain additional samples of the high-fidelity models is restricted.

To overcome this limitation, we propose new estimators that leverage all existing correlations among information sources. In other words, the formulations we propose break through the single-model barrier of existing recursive approaches and converge to the optimal multi-model control variate. In a number of numerical examples, we have shown that our proposed estimators are capable of achieving significant gains in variance reduction, especially in cases with more general model hierarchies that may lack high-levels of correlation and strong variance decay. %
In terms of numerical results, our monomial test problem shows strong benefit with significant variance reduction, both in the case of increasing numbers of high-fidelity evaluations and with fixed-numbers of high fidelity evaluations. Although the monomial test problem is quite simplified, the correlation matrix that it produces is representative of what may be encountered in practice. Since our algorithm only requires access to this correlation (or covariance) matrix, this indicates that there are potentially significant benefits to using our proposed approach. In order to gain additional insight regarding the effects of the correlations and cost ratios among models on the performance of control variate schemes, we introduce a parametric model problem that allows exploration of a large variety of scenarios. In particular, this example shows that the relative performance of the different approaches is related to the existence of a gap between OCV and OCV-1 and that this gap can be numerically exploited under certain cost conditions. As this gap shrinks, the advantage of the ACV strategies diminishes with respect to existing recursive strategies. And for the elastic equation example, we showed that the ACV approach was more robust to general multi-fidelity hierarchies that exhibit lower correlation levels. %

Future work will develop robust optimal sample allocation schemes for the ACV estimators. We conjecture that other optimization formulations can potentially lead to better exploitation of the variance gaps between OCV and OCV-1 estimators. Such optimal sample allocation schemes will most likely require sub-selecting groups of control variates that should be further evaluated and bypassing those whose uncertainty cannot be sufficiently reduced.
Another line of equally important work is assessing the variance introduced by estimating the covariance matrix. Such theory exists for the OCV, see e.g.~\cite{Nelson1990}, and we seek to extend it to the ACV.

\appendix

\section{Proof of Lemma~\ref{lem:mfmc_weight}}\label{app:proof:mfmc_weight}
\begin{proof}
  Our result follows from computing $\covdiff$ and $\covhc$ and applying Proposition~\ref{prop:ocv}. Recall that
 \begin{equation}\label{eq:mfmc_s_est}
  \cvdiff{i}(\sset{i}) = \frac{1}{\absr{i-1} \nhf} \sum_{j=1}^{\absr{i-1} \nhf} \cv{i}\left(\sset{i}^{1(j)}\right)
                      - \frac{1}{\absr{i} \nhf} \sum_{j=1}^{\absr{i} \nhf} \cv{i}\left(\sset{i}^{2(j)}\right).
 \end{equation}
 Next we partition this quantity into three sums that contain non-overlapping (independent) sets of samples. The first sum is partitioned according to
 \begin{equation*}
 \frac{1}{\absr{i-1} \nhf} \sum_{j=1}^{\absr{i-1} \nhf} \cv{i}\left(\sset{i}^{1(j)}\right) = \frac{1}{\absr{i-1} \nhf}\left[ \sum_{j=1}^{\nhf} Q_{i}\left(\sset{j}^{1(j)}\right) +  \sum_{j=N+1}^{\absr{i-1}\nhf} Q_{i} \left(\sset{i}^{1(j)}\right)\right].
 \end{equation*}
 The second sum is partitioned according to
 \begin{equation*}
    \frac{1}{\absr{i} \nhf} \sum_{j=1}^{\absr{i} \nhf} \cv{i}\left(\sset{i}^{2(j)}\right) = \frac{1}{\absr{i} \nhf} \Bigg[ \sum_{j=1}^{\nhf} \cv{i} \left(\sset{i}^{1(j)}\right) + \sum_{j=N+1}^{\absr{i-1}\nhf} \cv{i} \left(\sset{i}^{1(j)}\right) + \sum_{j=\absr{i-1}\nhf+1}^{\absr{i}\nhf} \cv{i}\left(\sset{i}^{2(j)}\right) \Bigg]
 \end{equation*} 
 Now we can rewrite $\cvdiff{i}$ in terms of the three common summations
  \begin{equation}\label{eq:mfmc_form1}
  \begin{split}
  \cvdiff{i}(\sset{i}) 
  = \frac{\absr{i}-\absr{i-1}}{\absr{i}\absr{i-1}} \frac{1}{\nhf} \sum_{j=1}^{\nhf} \cv{i} \left(\sset{i}^{(j)}\right) +
  \frac{\absr{i}-\absr{i-1}}{\absr{i}\absr{i-1}} \frac{1}{\nhf} \sum_{j=\nhf+1}^{\absr{i-1}\nhf} \cv{i} \left(\sset{i}^{(j)}\right)
  -  \frac{1}{\absr{i}} \frac{1}{\nhf} \sum_{j=\absr{i-1}\nhf + 1}^{\absr{i} \nhf} \cv{i}\left(\sset{i}^{(j)}\right),
  \end{split}
  \end{equation}
  where the superscripts were dropped because it is no longer necessary to distinguish between $\sset{i}^1$ and $\sset{i}^2$.
 
Now we consider the diagonal of $\covdiff$. Since the sums are independent, the variance of the sums is the sum of the variances
 \begin{align*}
   \VarF{\cvdiff{i}} &=  \frac{(\absr{i}-\absr{i-1})^2}{\absr{i}^2\absr{i-1}^2} \frac{\VarF{\qoi}}{\nhf}  + \frac{(\absr{i}-\absr{i-1})^2}{\absr{i}^2\absr{i-1}^2} \left( \absr{i-1}-1 \right) \frac{\VarF{\qoi}}{\nhf} + \frac{\absr{i}-\absr{i-1}}{\absr{i}^2} \frac{\VarF{\qoi}}{\nhf} \nonumber \\
   &=  \absr{i-1}\frac{(\absr{i}-\absr{i-1})^2}{\absr{i}^2\absr{i-1}^2} \frac{\VarF{\qoi}}{\nhf}  + \frac{\absr{i}-\absr{i-1}}{\absr{i}^2} \frac{\VarF{\qoi}}{\nhf} 
   = \left[\frac{ \absr{i}^2-2\absr{i}\absr{i-1} + \absr{i-1}^2}{\absr{i}^2\absr{i-1}} + \frac{\absr{i-1}}{\absr{i-1}}\frac{\absr{i} - \absr{i-1}}{\absr{i}^2}\right]  \frac{\VarF{\qoi}}{\nhf} \\
   &= \frac{\absr{i} \left( \absr{i}-\absr{i-1} \right)}{\absr{i}^2\absr{i-1}} \frac{\VarF{\qoi}}{\nhf}
   = \frac{\absr{i}-\absr{i-1}}{\absr{i}\absr{i-1}} \frac{\VarF{\qoi}}{\nhf}. 
 \end{align*}
 
 Next we consider the off-diagonal terms of $\covdiff$. Without loss of generality, consider $\cvdiff{i}$ and $\cvdiff{j}$ for $j > i$. The nested structure of the sampling set means that $\sset{j}^{(k)} = \sset{i}^{(k)}$ for $k \leq \absr{i}\nhf$. Using this fact we can rewrite Equation~\eqref{eq:mfmc_form1} as
 \begin{equation*}
 \begin{split}
  \cvdiff{i}(\sset{i}) &= \frac{\absr{i}-\absr{i-1}}{\absr{i}\absr{i-1}N} 
  \sum_{k=1}^{\absr{i-1}\nhf} \cv{i}\left(\sset{i}^{(k)}\right) - 
             \frac{1}{\absr{i}\nhf}
             \sum_{k=\absr{i-1}\nhf+1}^{\absr{i}\nhf} \cv{i}\left(\sset{i}^{(k)}\right)
 \end{split}
 \end{equation*}
 for the $i$th  CV, and 
 \begin{align*}
  \cvdiff{j}(\sset{j}) &= \frac{\absr{j}-\absr{j-1}}{\absr{j}\absr{j-1}} \frac{1}{\nhf}
  A_j
  - \frac{1}{\absr{j}} \frac{1}{\nhf} \sum_{k=\absr{j-1}\nhf+1}^{\absr{j}\nhf} Q_j\left(\sset{j}^{(k)}\right),
 \end{align*}
 where \[A_j = \sum_{k=1}^{\absr{i-1}\nhf} Q_j\left(\sset{i}^{(i)}\right) \sum_{k=\absr{i-1}\nhf+1}^{\absr{i}\nhf} Q_j\left(\sset{i}^{k}\right) +  \sum_{k=\absr{i}\nhf+1}^{\absr{j-1}\nhf} Q_j\left(\sset{k}^{k}\right),\]
 for the $j$th. Only the first two sums in this last expression share samples with the $\cvdiff{i}$, and these sums correspond to the samples delegated to $\est{j}$. Therefore we have
 \begin{equation*}
 \begin{split}
     \covdiffij{i}{j} &= \frac{(\absr{i}-\absr{i-1})(\absr{j}-\absr{j-1})}{\absr{i}\absr{i-1}\absr{j}\absr{j-1}} \absr{i-1} \covF{\cv{i},Q_j} - 
     \frac{\absr{j}-\absr{j-1}}{\absr{i}\absr{j}\absr{j-1}} (\absr{i} - \absr{i-1})\covF{\cv{i},\cv{j}} 
     = 0.
     \end{split}                                              
 \end{equation*}
To summarize $\covdiff$ is diagonal
 \begin{equation}
   \covdiffij{i}{j} =
   \left\{
   \begin{array}{cl}
     \dfrac{\absr{i}-\absr{i-1}}{\absr{i}\absr{i-1}} \dfrac{\varF{\cv{i}}}{\nhf} & \textrm{ if } i = j \\
     0 & \textrm{ otherwise }
   \end{array}
   \right. , \textrm{ where } \absr{0} = 1.
   \label{eq:mfmc_covq}
 \end{equation}

Next we turn to $\covhc$. From Equation~\ref{eq:mfmc_form1} we see that samples that each CV shares with the high fidelity model are entirely contained in the first sum, and therefore
 \begin{equation}
  \covF{\est{},\cvdiff{i}} = \frac{\absr{i}-\absr{i-1}}{\absr{i}\absr{i-1}} \frac{\covF{Q,\cv{i}}}{\nhf}. \label{eq:mfmc_covhc}
 \end{equation}
 Using $\covhc$ and $\covdiff$ within Equation~\ref{eq:ocv_var} of Prop.~\ref{prop:ocv} yields our stated result.
\end{proof}

\section{Proof of Lemma~\ref{lem:MFMC_var_reduction}}\label{app:proof:mfmc_var_reduction}
\begin{proof}
  The result follows directly from Proposition~\ref{prop:ocv} as
  \begin{equation}
    \varF{\mfmcest(\vec{\cvw})} = \varF{\est{}}\left( 1 - R_{\mfmc}^2\right),
  \end{equation}
  where $R_{\mfmc}^2 = \covhc^T \frac{\covdiff^{-1}}{\varF{\est{}}}\covhc$. Equations~\eqref{eq:mfmc_covq} and~\eqref{eq:mfmc_covhc} in the proof of Lemma~\ref{lem:mfmc_weight} provide us with expressions of $\covdiff$ and $\covhc$, respectively. Since the covariance is diagonal and invertible we have
  \begin{align*}
    R_{\mfmc}^2 &= \sum_{i=1}^{\nmodels} \frac{\absr{i}-\absr{i-1}}{\absr{i}\absr{i-1}} \frac{\covF{\qoi,\cv{i}}^2}{\varF{\cv{i}}\varF{\qoi}} = \sum_{i=1}^{\nmodels} \frac{\absr{i}-\absr{i-1}}{\absr{i}\absr{i-1}}\ccoeff{i}^2.
  \end{align*}
  We can pull out $\ccoeff{1}$ from this expression 
  \begin{equation*}
  \begin{split}
  R_{\mfmc}^2 &= \frac{\absr{1}-1}{\absr{1}} \ccoeff{1}^2 + \ccoeff{1}^2 \sum_{i=2}^{\nmodels} \frac{\absr{i}-\absr{i-1}}{\absr{i}\absr{i-1}} \frac{\ccoeff{i}^2}{\ccoeff{1}^2} 
             = \ccoeff{1}^2 \left( \frac{\absr{1}-1}{\absr{1}} + \sum_{i=2}^{\nmodels} \frac{\absr{i}-\absr{i-1}}{\absr{i}\absr{i-1}} \frac{\ccoeff{i}^2}{\ccoeff{1}^2} \right),
  \end{split}
  \end{equation*}
  where we have used $\absr{0} = 1$ to enable a clear comparison with the single control variate.
\end{proof}


 \section{Proof of Lemma~\ref{lem:var_wRDiff}}\label{sec:proof_wRDiff}
 \begin{proof}
  The proof uses the definition of the w-RDiff estimator and is based on the identification of the statistical independent difference contributions.
  First, the w-RDiff estimator is re-arranged as
  \begin{equation}
  \begin{split}
    \mlest(\vec{\cvw},\vec{\sset{}}) &= \est{}(\sset{}) + \sum_{i=1}^{\nmodels} \cvw_i \left(\est{i}(\sset{i-1}^2) - \estm{i}(\sset{i}^2) \right) \\
                                     &= - \cvw_\nmodels \estm{\nmodels}(\sset{\nmodels}^2) 
                                        + \left( \cvw_1 \est{1}(\sset{}) + \est{}(\sset{}) \right)
                                        + \sum_{i=2}^\nmodels \cvw_i \est{i}(\sset{i-1}^2) - \cvw_{i-1} \estm{i-1}(\sset{i-1}^2). 
  \end{split}
  \end{equation}
  The variance of the estimator is obtained as sum of the independent contributions (\textit{i.e.} all the covariances terms are zero). After few manipulations
  we obtain the stated result
  \begin{equation}
  \begin{split}
   \VarF{ \mlest(\vec{\cvw},\vec{\sset{}})} 
        &= \cvw_\nmodels^2 \frac{\VarF{\qoi_{\nmodels}}}{\card{\sset{\nmodels}^2}} 
        + \frac{1}{N} \left( \cvw_1^2 \VarF{\qoi_1} + \VarF{\qoi} + 2 \cvw_1 \ccoeff{1} \sqrt{ \VarF{\qoi_1} \VarF{\qoi}}\right) \\
        &+ \sum_{i=2}^\nmodels \frac{1}{\card{\sset{i-1}^2}} \left( \cvw_i^2 \VarF{\qoi_i} + \cvw_{i-1}^2 \VarF{\qoi_{i-1}} - 2 \cvw_i \cvw_{i-1} \sqrt{\VarF{\qoi_i}\VarF{\qoi_{i-1}}} \right) \\
        &= \VarF{\est{}} \left( 1 - \left( - \cvw_1^1 \rs{1}^2 - 2 \cvw_1 \ccoeff{1} \rs{1} - \cvw_\nmodels^2 \frac{\rs{\nmodels}}{\rcard{\nmodels}} 
                                           - \sum_{i=2}^{\nmodels} \frac{1}{\rcard{i-1}} \left( \cvw_i^2 \rs{i}^2 + \cvw_{i-1}^2 \rs{i-1}^2 - 2 \cvw_i \cvw_{i-1} \rs{i} \rs{i-1}
                                           \right) \right) \right),
  \end{split}
  \end{equation}

  where $\rcard{i}$ indicates the ratio between the cardinality of the set $\sset{i}^2$ and $\sset{}$.

 \end{proof}

\section{Proof of Theorem~\ref{th:opt_acv1}}\label{app:proof:opt_acv1}

For reference we recall the definition of $\fmatone$
  \begin{equation*}
    \fmatone_{ij} = \left\{
      \begin{array}{cl}
        \frac{\absr{i}-1}{\absr{i}}\frac{\absr{j}-1}{\absr{j}}  & \textrm{if } i \neq j \\
        \frac{\absr{i}-1}{\absr{i}} & \textrm{otherwise} 
      \end{array}
      \right. . 
  \end{equation*}  

\begin{proof}
  This proof again makes use of Propositions~\ref{prop:acv_var} and~\ref{prop:ocv}, which require the computation of $\covdiff$ and $\covhc$. We begin with $\covdiff$ by noticing that $\cvdiff{i}(\sset{i}) = \est{i}\left(\sset{} \right) - \estm{i}\left(\sset{i}\right)$ and that for $i \neq j$, $\cvdiff{i}$ is correlated with $\cvdiff{j}$ only through the first $\nhf$ samples. This pattern will emerge for all the various quantities that we require and so we first split each $\cvdiff{i}$ into two sums that have independent samples. This splitting happens by first noticing that we can write
  \begin{equation*}
  \begin{split}
  \estm{i} &= \frac{1}{\absr{i}\nhf} \sum_{k=1}^{\absr{i}\nhf} \cv{i}(\sset{i}^{(k)}) 
     = \frac{1}{\absr{i}\nhf}\left[\sum_{k=1}^{\nhf}\cv{i}(\sset{}^{(k)}) + \sum_{k=\nhf+1}^{\absr{i}\nhf}\cv{i}(\sset{i}^{(k)})  \right],
  \end{split}
  \end{equation*}
  so that 
  \begin{equation}
    \cvdiff{i}(\sset{i}) =
    \frac{\absr{i}-1}{\absr{i}\nhf}\sum_{k=1}^{\nhf} \cv{i}(\sset{}^{(k)}) - \frac{1}{\absr{i}\nhf} \sum_{k=\nhf+1}^{\absr{i}\nhf} \cv{i}(\sset{i}^{(k)}). \label{eq:splitting}
  \end{equation}
  Since these two sums are independent, the covariance between $\cvdiff{i}$ and $\cvdiff{j}$ for $i \neq j$ is due only to the first summation:
  \begin{equation*}
    \covdiffij{i}{j} = \frac{\absr{i}-1}{\absr{i}}\frac{\absr{j}-1}{\absr{j}} \frac{1}{\nhf}\covF{\cv{i},\cv{j}}.
  \end{equation*}
  Using the same splitting we can derive the diagonal terms (variances) as 
  \begin{equation} \label{eq:acv_diag}
  \resizebox{0.9\hsize}{!}{$
  \begin{split}
    \VarF{\cvdiff{i}} &= \frac{(\absr{i}-1)^2}{\absr{i}^2\nhf}\varF{\cv{i}} + \frac{\absr{i}-1}{\absr{i}^2\nhf} \varF{\cv{i}} 
    = \left(\frac{(\absr{i}-1)^2}{\absr{i}^2} + \frac{\absr{i}-1}{\absr{i}^2}\right)\frac{\varF{\cv{i}}}{\nhf}  
    = \frac{\absr{i}-1}{\absr{i}} \frac{\varF{\cv{i}}}{\nhf}. 
    \end{split}
    $}
  \end{equation}
  Ths result can be succinctly represented as
   $ \covdiff = \frac{1}{\nhf}\left[ \covm \circ \fmatone\right]. $

  Now we consider the $\covhc$ term. The covariance between $\est{}$ and $\cvdiff{i}$ is again a result of the first $\nhf$ samples of $\sset{i}$, or the first term in~\eqref{eq:splitting}, yielding 
  \begin{equation*}
    \covhc = \frac{1}{\nhf}\left[\Diag{\fmatone} \circ \covv\right] \label{eq:covhc}
  \end{equation*}
  Using $\covdiff$ and $\covhc$ and Propositions~\ref{prop:acv_var} and~\ref{prop:ocv}, we obtain our stated result.
\end{proof}

\section{Proof of Theorem~\ref{th:opt_acv2}}\label{app:proof:opt_acv2}
For reference, recall the definition of $\fmattwo$
  \begin{equation*}
    \fmattwo_{ij} = \left\{
      \begin{array}{cl}
        \frac{\min(\absr{i},\absr{j})-1}{\min(\absr{i},\absr{j})}  & \textrm{if } i \neq j \\
        \frac{\absr{i}-1}{\absr{i}} & \textrm{otherwise} 
      \end{array}
      \right. .
  \end{equation*}

\begin{proof}
  The difference between ACV-IS and ACV-MF is that the samples used for $\estm{i}$ are also the first $\min(\absr{j}\nhf, \absr{i}\nhf)$ samples of $\estm{j}$ when $j \geq i$. As a result, we have different expressions for the off-diagonal terms of $\covdiff$, but we retain the same expressions for the diagonal terms $\varF{\cvdiff{i}}$ and for $\covhc$. Therefore, we need to derive a new expression for $\covdiffij{i}{j}$ for this estimator, and then reuse the previous results within Propositions~\ref{prop:acv_var} and~\ref{prop:ocv} to obtain our results. We have two cases to consider, when $\absr{j} \geq \absr{i}$ and when $\absr{j} < \absr{i}$.

\textbf{Case 1 $\mathbf{\absr{j} \geq \absr{i}}:$}
Our derivation begins by splitting $\cvdiff{i}$ and $\cvdiff{j}$ into sums with independent samples. The split for $\cvdiff{i}$ is given by~\eqref{eq:splitting}.
  Next we consider $\cvdiff{j}$, and begin by splitting the summation of the samples used to obtain $\estm{j}$ into its three constituent sample sets:
  \begin{equation*}
    \begin{split}
    \sum_{k=1}^{\absr{j}\nhf} \cv{j}(\sset{j}^{(k)}) &= \sum_{k=1}^{\nhf}\cv{j}(\sset{}^{(k)}) + \sum_{k=\nhf+1}^{\absr{i}\nhf}\cv{j}(\sset{i}^{(k)}) + 
    \sum_{k=\absr{i}\nhf+1}^{\absr{j}\nhf}\cv{j}(\sset{j}^{(k)}).
    \end{split}
  \end{equation*}
  Using this splitting we obtain the following result
    \begin{equation} \label{eq:split-acv2}
    \begin{split}
    \cvdiff{j}(\sset{j}) &= \frac{1}{\absr{j}\nhf}\Big((\absr{j}-1)\sum_{k=1}^{\nhf}\cv{j}(\sset{}^{(k)}) - \sum_{k=\nhf+1}^{\absr{i}\nhf}\cv{j}(\sset{i}^{(k)}) - 
    \sum_{k=\absr{i}\nhf+1}^{\absr{j}\nhf}\cv{j}(\sset{j}^{(k)})\Big)
    \end{split}
     \end{equation}
  where we have used the fact that $\sset{j}^{(k)} = \sset{i}^{(k)}$ for $k \leq \absr{i}\nhf$. Now it is clear that the covariance between $\cvdiff{i}$ and $\cvdiff{j}$ is due to the samples from the first two summations. A straightforward computation leads to
  \begin{align*}
    \covF{\cvdiff{i}, \cvdiff{j}} &=  \frac{\absr{i} - 1}{\absr{i}} \frac{\absr{j} - 1}{\absr{j}}\frac{\covF{\cv{i},\cv{j}}}{\nhf} + \frac{(\absr{i}-1)}{\absr{i}\absr{j}} \frac{\covF{\cv{i},\cv{j}}}{\nhf} 
    = \frac{\absr{i} - 1}{\absr{i}} \frac{\covF{\cv{i},\cv{j}}}{\nhf}.
  \end{align*}

\textbf{Case 2 $\mathbf{\absr{i} > \absr{j}}$:}
  This requires a different splitting of $\cvdiff{j}$, one which uses
  \begin{equation*}
  \estm{j} = \frac{1}{\absr{j}\nhf} \sum_{k=1}^{\absr{j}\nhf} \cv{j}(\sset{j}^{(k)}) = \frac{1}{\absr{j}\nhf}\left[\sum_{k=1}^{\nhf}\cv{j}(\sset{}^{(k)}) + \sum_{k=\nhf+1}^{\absr{j}\nhf}\cv{j}(\sset{i}^{(k)})\right]
  ,
  \end{equation*}
  to obtain
  \begin{equation*}
    \cvdiff{j}(\sset{j}) = \frac{\absr{j}-1}{\absr{j}\nhf}\sum_{k=1}^{\nhf}\cv{j}(\sset{}^{(k)}) - \frac{1}{\absr{j}\nhf} \sum_{k=\nhf+1}^{\absr{j}\nhf}\cv{j}(\sset{i}^{(k)}).
  \end{equation*}
  The difference from Case 1 is that the first equation does not have an extra term to account for samples in $\sset{j}$ that are not in $\sset{i}$. Using the same reasoning as for Case 1 we obtain
    \begin{align*}
      \covF{\cvdiff{i}, \cvdiff{j}} &=  \frac{\absr{i} - 1}{\absr{i}} \frac{\absr{j} - 1}{\absr{j}}\frac{\covF{\cv{i},\cv{j}}}{\nhf} + \frac{(\absr{j}-1)}{\absr{i}\absr{j}} \frac{\covF{\cv{i},\cv{j}}}{\nhf} 
      = \frac{\absr{j} - 1}{\absr{j}} \frac{\covF{\cv{i},\cv{j}}}{\nhf}.
    \end{align*}
We combine these two cases into a single formula
  \begin{equation}
    \covF{\cvdiff{i}, \cvdiff{j}} = \frac{\min(\absr{i},\absr{j}) - 1}{\min(\absr{i},\absr{j})} \frac{\covF{\cv{i},\cv{j}}}{\nhf}.
  \end{equation}

  The expression for $\varF{\cvdiff{i}}$ is identical to that of estimator ACV-IS~\eqref{eq:acv_diag} because the same splitting strategy can be used. The expression for $\covhc$ is also identical to that of estimator ACV-IS~\eqref{eq:covhc} for the same reason. The stated result follows by using these expressions within Propositions~\ref{prop:acv_var} and~\ref{prop:ocv} as before.
\end{proof}

\section{Proof of Theorem~\ref{th:ACV}}\label{app:proof:ACV}
\begin{proof}
  Using the previous proposition we need to show that $\fmatone \to \mat{1}_{\nmodels \times \nmodels}$ and that $\fmattwo \to \mat{1}_{\nmodels \times \nmodels}$. Since these two matrices share diagonal entries, we begin with
  \begin{equation*}
    \lim_{\absr{i} \to \infty}\fmatone_{ii} = \lim_{\absr{i} \to \infty}\fmattwo_{ii} = \lim_{\absr{i} \to \infty} \frac{\absr{i} - 1}{\absr{i}} = 1.
  \end{equation*}
  Next we consider the off-diagonals of $\fmatone$:
  \begin{align*}
    \lim_{\absr{i},\absr{j} \to \infty}\fmatone_{ij} = \lim_{\absr{i},\absr{j} \to \infty} \frac{\absr{i} - 1}{\absr{i}} \frac{\absr{j} - 1}{\absr{j}} = 1
  \end{align*}
  because each component converges to 1. Finally we consider the off-diagonals of $\fmattwo$
  \begin{align*}
    \lim_{\absr{i},\absr{j} \to \infty}\fmattwo_{ij} = \lim_{\absr{i},\absr{j} \to \infty} \frac{\min(\absr{i},\absr{j}) - 1}{\min(\absr{i},\absr{j})} = 1
  \end{align*}
  because the minimum value goes to infinity since both $\absr{i}$ and $\absr{j}$ go to infinity.
\end{proof}

\section{Proof of Theorem~\ref{th:ACV-KL}} \label{app:proof:ACV-KL}
\begin{proof}
  Again we rely on Propositions~\ref{prop:acv_var} and~\ref{prop:ocv}. The separated nature of this estimator implies that the computation of $\covhc$ and $\covdiff$ can be separated into three regimes: (1) $i,j \leq K$; (2) when both $i,j > K$; and (3) either $i \leq K$ and $j>K$, which is symmetric to $j \leq K$ and $i>K$.
  The first two cases are the most straight forward because they essentially follow from ACV-MF.
  
  \textbf{Case 1: $i,j \leq K$}. This case is identical to ACV-MF for a CV estimator with $K$ CVs:
  \begin{equation*}
    \covF{\est{},\cvdiff{i}} = \frac{\absr{i}-1}{\absr{i}} \frac{\covF{\qoi, \cv{i}}}{\nhf}, \quad 
    \covdiffij{i}{j} 
    = \frac{\min(\absr{i},\absr{j})-1}{\min(\absr{i},\absr{j})}\frac{\covF{\cv{i},\cv{j}}}{\nhf}, \textrm{ and }
    \varF{\cvdiff{i}} 
    = \frac{\absr{i}-1}{\absr{i}} \frac{\varF{\cv{i}}}{\nhf}.
  \end{equation*}
  
  \textbf{Case 2: $i,j > K$}. For this set of indices we generate a new set of indepedent splittings amongst $\sset{}$, $\sset{L}$, $\sset{i}$ and $\sset{j}$. If we assume $\absr{j} > \absr{i}$ for $j > i$ then
    \begin{align}
    \cvdiff{j}(\sset{j}) 
    &=\frac{1}{\absr{L}\nhf}\sum_{k=1}^{\absr{L}\nhf}\cv{j}(\sset{L}^{(k)}) -  
    \frac{1}{\absr{j}\nhf}\Bigg[\sum_{k=1}^{\absr{L}\nhf}\cv{j}(\sset{L}^{(k)}) +  \sum_{k=\absr{L}\nhf+1}^{\absr{i}\nhf} \cv{j}(\sset{i}^{(k)}) + \sum_{k=\absr{i}\nhf+1}^{\absr{j}\nhf} \cv{j}(\sset{j}^{(k)}) \Bigg] \nonumber 
    \end{align}
    Combining terms we obtain
    \begin{equation}
    \begin{split}
    \cvdiff{j}(\sset{j}) &= \frac{\absr{j}-\absr{L}}{\absr{L}\absr{j}\nhf} \left[ \sum_{k=1}^{\nhf}\cv{j}(\sset{}^{(k)}) + \sum_{k=\nhf+1}^{\absr{L}\nhf}\cv{j}(\sset{L}^{(k)})\right] - 
    \frac{1}{\absr{j}\nhf}\left[ \sum_{k=\absr{L}\nhf+1}^{\absr{i}\nhf} \cv{j}(\sset{i}^{(k)}) + \sum_{k=\absr{i}\nhf+1}^{\absr{j}\nhf} \cv{j}(\sset{j}^{(k)}) \right].
    \end{split}
    \label{eq:kacv-split2}
    \end{equation}
    If on the other hand,  $\absr{j} < \absr{i}$ for $j > i$ then instead of two sums in the second component of Equation~\eqref{eq:kacv-split2} we only have one
    \begin{equation}
    \begin{split}
    \cvdiff{j}(\sset{j}) &= \frac{\absr{j}-\absr{L}}{\absr{L}\absr{j}\nhf} \left[ \sum_{k=1}^{\nhf}\cv{j}(\sset{}^{(k)}) + \sum_{k=\nhf+1}^{\absr{L}\nhf}\cv{j}(\sset{L}^{(k)})\right] - 
                         \frac{1}{\absr{j}\nhf} \sum_{k=\absr{L}\nhf+1}^{\absr{j}\nhf} \cv{j}(\sset{i}^{(k)}),
    \end{split}
    \label{eq:kacv-split3}
    \end{equation}
    because in this case $\sset{j}$ would consist of the first $\absr{j}\nhf$ samples of $\sset{i}$.

    Now we use Equations~\eqref{eq:kacv-split2} and~\eqref{eq:kacv-split3} to compute the required covariances. First we have
    \begin{align*}
      \covF{\est{},\cvdiff{i}} &=  \frac{\absr{i}-\absr{L}}{\absr{i}\absr{L}}\frac{\covF{\qoi,\cv{i}}}{\nhf} ,
    \end{align*}
    where we have used the fact that the $\nhf$ shared samples used in $\cvdiff{i}$ are contained in the first sum in Equation~\eqref{eq:kacv-split2}, and that this sum has the coefficient $\frac{\absr{i}-\absr{L}}{\absr{i}\absr{L}} \frac{1}{\nhf}$. 

    Next we have the covariance between the control variates. These estimators share several groups of independent samples that we will first consider separately and then sum together. The first group of shared samples is $\sset{L}$ which consists of $\absr{L}\nhf$ samples. These samples correspond to the first two summations of Equation~\eqref{eq:kacv-split2}, and therefore their covariance becomes
    \begin{align*}
      \frac{\absr{i}-\absr{L}}{\absr{i}}\frac{\absr{j}-\absr{L}}{\absr{j}} \frac{1}{\absr{L}}\frac{\covF{\cv{i},\cv{j}}}{\nhf}.
    \end{align*}
    Next, these estimators share an additional $(\min(\absr{i},\absr{j}) - \absr{L})\nhf$. If $\absr{j} > \absr{i}$ for $j > i$ then these shared samples arise in the third sum of the splitting~\eqref{eq:kacv-split2} and we obtain
    $  \frac{1}{\absr{i}\absr{j}} (\absr{i} - \absr{L}) \frac{\covF{\cv{i},\cv{j}}}{\nhf},$
    and if $\absr{j} < \absr{i}$ for $j > i$ then we have
    $ \frac{1}{\absr{i}\absr{j}} (\absr{j} - \absr{L}) \frac{\covF{\cv{i},\cv{j}}}{\nhf}.$
    For $j > i$ these imply that
    \begin{equation*}
    \covdiffij{i}{j} = \left[\frac{(\absr{i}-\absr{L})(\absr{j}-\absr{L}) + \absr{L}(\min(\absr{i},\absr{j}) - \absr{L})}{\absr{i}\absr{j}\absr{L}} \right] \frac{\covF{\cv{i},\cv{j}}}{\nhf}.
    \end{equation*}
    Finally we have the diagonal component that considers both parts of Equation~\eqref{eq:kacv-split2}
    \begin{equation*}
      \varF{\cv{i}} = \left[ \frac{(\absr{i}-\absr{L})^2}{\absr{i}^2\absr{L}} + \frac{1}{\absr{i}^2}(\absr{i}-\absr{L})\right] \frac{\varF{\cv{i}}}{\nhf} = \frac{\absr{i} - \absr{L}}{\absr{i}\absr{L}} \frac{\varF{\cv{i}}}{\nhf}.
    \end{equation*}

    \textbf{Case 3:} Now we consider the final case $i\leq K$, and $j>K$. The reverse follows from symmetry. This case itself can be broken into three subcases: (a) $i = L$, (b) $i < L$, and (c) $i > L$. First we consider case (3a). In this case, we have the splitting from Equation~\eqref{eq:splitting} and the one just derived above~\eqref{eq:kacv-split2}. Computing the covariances between these terms inolves the covariances of each summation and leads to 
    \begin{equation*}
    \covdiffij{L}{j} = \left[\frac{\absr{L}-1}{\absr{L}} \frac{\absr{j}-\absr{L}}{\absr{L}\absr{j}} - \frac{1}{\absr{L}}\frac{\absr{j} - \absr{L}}{\absr{L}\absr{j}}(\absr{L}-1)\right] \frac{\covF{\cv{L},\cv{j}}}{\nhf} = 0.
    \end{equation*}
    For case (3b) all of the samples in $\sset{i}$ are shared and we need to use a different splitting $\cvdiff{j}$ to elucidate this fact. The relevant splitting for $\cvdiff{i}$ is given by Equation~\eqref{eq:splitting}, and the necessary splitting for $\cvdiff{j}$ is
    \begin{equation}
      \cvdiff{j}(\sset{j}) = \frac{1}{\absr{L}\nhf}\sum_{k=1}^{\absr{L}\nhf}\cv{j}(\sset{L}^{(k)}) - \frac{1}{\absr{j}\nhf} \sum_{k=1}^{\absr{j}\nhf} \cv{j}(\sset{j}^{(k)}) 
     = \frac{1}{\absr{L}\nhf} A_j -  \frac{1}{\absr{j}\nhf} \left(A_j +  \sum_{k=\absr{L}\nhf+1}^{\absr{j}\nhf} \cv{j}(\sset{j}^{(k)})\right)
   \end{equation}
   where
   \begin{align*}
   A_j &= \sum_{k=1}^{\nhf}\cv{j}(\sset{}^{(k)}) + \sum_{k=\nhf+1}^{\absr{i}\nhf} \cv{j}(\sset{i}^{(k)}) + \sum_{k=\absr{i}\nhf+1}^{\absr{L}\nhf} \cv{j}(\sset{L}^{(k)}).
   \end{align*}
   This leads to
   \begin{align}
   \cvdiff{j}(\sset{j}) &= \frac{\absr{j}-\absr{L}}{\absr{j}\absr{L}\nhf}A_j - \frac{1}{\absr{j}\nhf} \sum_{k=\absr{L}\nhf+1}^{\absr{j}\nhf} \cv{j}(\sset{j}^{(k)}).
    \end{align}
    The terms that have $\sset{}$ and $\sset{i}$ are shared with~\eqref{eq:splitting} leading to
    \begin{equation*}
      \covF{\cvdiff{i},\cvdiff{j}} = \left[\frac{\absr{i}-1}{\absr{i}}\frac{\absr{j}-\absr{L}}{\absr{j}\absr{L}} - \frac{1}{\absr{i}}\frac{\absr{j}-\absr{L}}{\absr{j}\absr{L}}(\absr{i}-1)\right] \frac{\covF{\cv{i}, \cv{j}}}{\nhf} = 0
    \end{equation*}
    Finally, we consider case (3c) where $L < i \leq K$ and $j > K$. The splitting for $\cvdiff{i}$ becomes
    \begin{equation*}
    \begin{split}
    \frac{\absr{i}}{\nhf} \estm{i} &= \sum_{k=1}^{\absr{i}\nhf} \cv{i}(\sset{i}^{(k)}) 
    = \Bigg[\sum_{k=1}^{\nhf}\cv{i}(\sset{}^{(k)}) + \sum_{k=\nhf+1}^{\absr{L}\nhf}\cv{i}(\sset{L}^{(k)}) + \sum_{k=\absr{L}\nhf+1}^{\absr{i}\nhf}\cv{i}(\sset{i}^{(k)})\Bigg]
    \end{split}
    \end{equation*}
    This equivalence leads to 
    \begin{equation*}
    \cvdiff{i}(\sset{i}) = \frac{\absr{i}-1}{\absr{i}\nhf}\sum_{k=1}^{\nhf}\cv{i}(\sset{}^{(k)}) - \frac{1}{\absr{i}\nhf} \left[\sum_{k=\nhf+1}^{\absr{L}\nhf}\cv{i}(\sset{L}^{(k)}) + \sum_{k=\absr{L}\nhf+1}^{\absr{i}\nhf}\cv{i}(\sset{i}^{(k)})\right],
    \end{equation*}
    Now, using the splitting Equation~\eqref{eq:kacv-split2} for $\cvdiff{j}$ and the following variable
    \begin{align*}
    A \equiv \left[\frac{\absr{i}-1}{\absr{i}}\frac{\absr{j}-\absr{L}}{\absr{j}\absr{L}} - \frac{1}{\absr{i}}\frac{\absr{j}-\absr{L}}{\absr{j}\absr{L}}(\absr{L}-1) + \frac{1}{\absr{i}}\frac{1}{\absr{j}}(\absr{i}-\absr{L})\right],
    \end{align*}
    we obtain
    \begin{equation*}
      \covdiffij{i}{j} = A \frac{\covF{\cv{i},\cv{j}}}{\nhf} = \left[\frac{\absr{i} - \absr{L}} {\absr{i} \absr{L}} \right] \frac{\covF{\cv{i},\cv{j}}}{\nhf}.
    \end{equation*}
  To summarize, we have
  \begin{equation}
  \begin{split}
    \covdiffij{i}{j} = \frac{\covF{\cv{i},\cv{j}}}{\nhf}  \times
    \left\{
    \begin{array}{cl}
      \frac{\min(\absr{i},\absr{j}) - 1}{\min(\absr{i},\absr{j})}  & \textrm{ if } i,j \leq K \\
      \left[\frac{(\absr{i}-\absr{L})(\absr{j}-\absr{L}) + \absr{L}(\min(\absr{i},\absr{j}) - \absr{L})}{\absr{i}\absr{j}\absr{L}} \right]  & \textrm{ if } i,j > K \\
      \left[\frac{\absr{i} - \absr{L}} {\absr{i} \absr{L}} \right] & \textrm{ if } L < i \leq K, \ j > K \\
      \left[\frac{\absr{j} - \absr{L}} {\absr{j} \absr{L}} \right] & \textrm{ if }  L < j \leq K, \ i > K \\
      0 & \textrm{ otherwise }
    \end{array}
    \right. , 
    \end{split}
  \end{equation}
  for $i \neq j$. The diagonal elements are $\varF{\cvdiff{i}} = \frac{\absr{i}-1}{\absr{i}} \frac{\varF{\cv{i}}}{\nhf}$ if $i \leq K$ and $\varF{\cvdiff{i}} = \frac{\absr{i} - \absr{L}}{\absr{i}\absr{L}} \frac{\varF{\cv{i}}}{\nhf}$ otherwise. Furthermore, $\covhc$ is equal to the diagonal of this matrix. The stated result follows by using these expressions within Propositions~\ref{prop:acv_var} and~\ref{prop:ocv} as before.
\end{proof}

\section*{Acknowledgements}
The authors thank Dr.~Laura Swiler and Dr.~Tim Wildey from Sandia National Laboratories for their insightful comments and suggestions regarding an draft of this manuscript. 

This work was fully supported by the DARPA EQUiPS project and partially  supported by the DOE SciDAC Advanced Scientific Computing Research (ASCR) program. Sandia National Laboratories is a multimission laboratory managed and operated by National Technology and Engineering Solutions of Sandia, LLC., a wholly owned subsidiary of Honeywell International, Inc., for the U.S. Department of Energy's  National Nuclear Security Administration under contract DE-NA-0003525. The views expressed in the article do not necessarily represent the views of the U.S. Department of Energy or the United States Government.

\vspace{-15pt}
\bibliographystyle{siamplain}
\bibliography{references}

\begin{thebibliography}{10}

\bibitem{Baker2018}
{\sc J.~Baker, P.~Fearnhead, E.~B. Fox, and C.~Nemeth}, {\em Control variates
  for stochastic gradient {MCMC}}, Statistics and Computing,  (2018).

\bibitem{clawpack}
{\sc {Clawpack Development Team}}, {\em Clawpack software}, 2018,
  \url{https://doi.org/10.5281/zenodo.1405834}, \url{http://www.clawpack.org}.
\newblock Version 5.5.0.

\bibitem{Cliffe2011}
{\sc K.~A. Cliffe, M.~B. Giles, R.~Scheichl, and A.~L. Teckentrup}, {\em
  Multilevel monte carlo methods and applications to elliptic pdes with random
  coefficients}, Computing and Visualization in Science, 14 (2011), p.~3.

\bibitem{Doostan2016}
{\sc A.~Doostan, G.~Geraci, and G.~Iaccarino}, {\em A bi-fidelity approach for
  uncertainty quantification of heat transfer in a rectangular ribbed
  channel.}, in ASME. Turbo Expo: Power for Land, Sea, and Air, Volume 2C:
  Turbomachinery, ASME, 2016.

\bibitem{Emserman2002}
{\sc M.~Emsermann and B.~Simon}, {\em Improving simulation efficiency with
  quasi control variates}, Stochastic Models, 18 (2002), pp.~425--448.

\bibitem{Fairbanks2017}
{\sc H.~Fairbanks, A.~Doostan, C.~Ketelsen, and G.~Iaccarino}, {\em A low-rank
  control variate for multilevel monte carlo simulation of high-dimensional
  uncertain systems}, Journal of Computational Physics, 341 (2017),
  pp.~121--139.

\bibitem{Geraci2015}
{\sc G.~Geraci, M.~Eldred, and G.~Iaccarino}, {\em A multifidelity control
  variate approach for the multilevel monte carlo technique}, in Center for
  Turbulence Research Annual Research Briefs, Center for Turbulence Research,
  Stanford University, 2015, pp.~169--181.

\bibitem{Geraci2017}
{\sc G.~Geraci, M.~S. Eldred, and G.~Iaccarino}, {\em A multifidelity
  multilevel monte carlo method for uncertainty propagation in aerospace
  applications}, in 19th AIAA Non-Deterministic Approaches Conference, 2017,
  p.~1951.

\bibitem{Giles2008}
{\sc M.~B. Giles}, {\em Multilevel {Monte Carlo} path simulation}, Operations
  Research, 56 (2008), pp.~607--617.

\bibitem{Giles2015}
{\sc M.~B. Giles}, {\em Multilevel {Monte Carlo} methods}, Acta Numerica, 24
  (2015), pp.~259--328.

\bibitem{Goh2013}
{\sc J.~Goh, D.~Bingham, J.~P. Holloway, M.~J. Grosskopf, C.~C. Kuranz, and
  E.~Rutter}, {\em Prediction and computer model calibration using outputs from
  multifidelity simulators}, Technometrics, 55 (2013), pp.~501--512.

\bibitem{Haji2016}
{\sc A.-L. Haji-Ali, F.~Nobile, and R.~Tempone}, {\em Multi-index {Monte
  Carlo}: when sparsity meets sampling}, Numerische Mathematik, 132 (2016),
  pp.~767--806.

\bibitem{Hesterberg1996}
{\sc T.~Hesterberg}, {\em Control variates and importance sampling for
  efficient bootstrap simulations}, Statistics and Computing, 6 (1996),
  pp.~147--157.

\bibitem{Kuya2011}
{\sc Y.~Kuya, K.~Takeda, X.~Zhang, and A.~I. J.~Forrester}, {\em Multifidelity
  surrogate modeling of experimental and computational aerodynamic data sets},
  AIAA journal, 49 (2011), pp.~289--298.

\bibitem{Lavenberg1978}
{\sc S.~Lavenberg, T.~Moeller, and P.~Welch}, {\em Statistical results on
  multiple control variables with application to variance reduction in queueing
  network simulation}, IBM Thomas J. Watson Research Division, 1978.

\bibitem{Lavenberg1982}
{\sc S.~S. Lavenberg, T.~L. Moeller, and P.~D. Welch}, {\em Statistical results
  on control variables with application to queueing network simulation},
  Operations Research, 30 (1982), pp.~182--202.

\bibitem{Lavenberg1981}
{\sc S.~S. Lavenberg and P.~D. Welch}, {\em A perspective on the use of control
  variables to increase the efficiency of monte carlo simulations}, Management
  Science, 27 (1981), pp.~322--335.

\bibitem{LeVeque}
{\sc R.~J. LeVeque}, {\em Finite Volume Methods for Hyperbolic Problems},
  Cambridge Texts in Applied Mathematics, Cambridge University Press, 2002,
  \url{https://doi.org/10.1017/CBO9780511791253}.

\bibitem{Nelson1990}
{\sc B.~L. Nelson}, {\em Control variate remedies}, Operations Research, 38
  (1990), pp.~974--992.

\bibitem{Ng2013}
{\sc L.~W.-T. Ng}, {\em Multidelity approaches for design under uncertainty},
  PhD thesis, Massachusetts Institute of Technology, 2013.

\bibitem{Ng2014}
{\sc L.~W.-T. Ng and K.~Willcox}, {\em Multifidelity approaches for
  optimization under uncertainty}, Int. J. Numer. Meth. Engng., 100 (2014),
  pp.~746--772.

\bibitem{Nobile2015}
{\sc F.~Nobile and F.~Tesei}, {\em A multi level monte carlo method with
  control variate for elliptic pdes with log-normal coefficients}, Stochastic
  Partial Differential Equations: Analysis and Computations, 3 (2015),
  pp.~398--444.

\bibitem{Nocedal2006}
{\sc J.~Nocedal and S.~Wright}, {\em Numerical optimization}, Springer Science
  \& Business Media, 2006.

\bibitem{Owen2013}
{\sc A.~B. Owen}, {\em Monte Carlo theory, methods and examples}, 2013.

\bibitem{Padron2014}
{\sc A.~S. Padron, J.~J. Alonso, F.~Palacios, M.~F. Barone, and M.~S. Eldred},
  {\em Multi-fidelity uncertainty quantification: application to a vertical
  axis wind turbine under an extreme gust}, in 15th AIAA/ISSMO
  Multidisciplinary Analysis and Optimization Conference, 2014, p.~3013.

\bibitem{Pasupathy2012}
{\sc R.~Pasupathy, B.~W. Schmeiser, M.~R. Taaffe, and J.~Wang}, {\em
  Control-variate estimation using estimated control means}, IIE Transactions,
  44 (2012), pp.~381--385.

\bibitem{Peherstorfer2016b}
{\sc B.~Peherstorfer, K.~Willcox, and M.~Gunzburger}, {\em Optimal model
  management for multifidelity {M}onte {C}arlo estimation}, SIAM Journal on
  Scientific Computing, 38 (2016), pp.~A3163--A3194.

\bibitem{Peherstorfer2016}
{\sc B.~Peherstorfer, K.~Willcox, and M.~Gunzburger}, {\em Survey of
  multifidelity methods in uncertainty propagation, inference, and
  optimization}, Preprint,  (2016), pp.~1--57.

\bibitem{Roderick2014}
{\sc O.~Roderick, M.~Anitescu, and Y.~Peet}, {\em Proper orthogonal
  decompositions in multifidelity uncertainty quantification of complex
  simulation models}, International Journal of Computer Mathematics, 91 (2014),
  pp.~748--769.

\bibitem{Rubinstein1985}
{\sc R.~Y. Rubinstein and R.~Marcus}, {\em Efficiency of multivariate control
  variates in monte carlo simulation}, Operations Research, 33 (1985),
  pp.~661--677.

\bibitem{Schmeiser2001}
{\sc B.~W. Schmeiser, M.~R. Taaffe, and J.~Wang}, {\em Biased control-variate
  estimation}, IIE Transactions, 33 (2001), pp.~219--228.

\bibitem{Sukys2017}
{\sc J.~Sukys, U.~Rasthofer, F.~Wermelinger, P.~Hadjidoukas, and
  P.~Koumoutsakos}, {\em Optimal fidelity multi-level montecarlo for
  quantification of uncertainty in simulations of cloud cavitation collapse},
  arXiv preprint arXiv:1705.04374v1,  (2017).

\bibitem{Venkatraman1986}
{\sc S.~Venkatraman and J.~R. Wilson}, {\em The efficiency of control variates
  in multiresponse simulation}, Operations Research Letters, 5 (1986), pp.~37
  -- 42.

\end{thebibliography}

\end{document}